\date{}
\begin{document}

\title{Parallelising Particle Filters with Butterfly Interactions}


\author{Kari Heine\footnote{Corresponding author, email: k.m.p.heine@bath.ac.uk, Department of Mathematical Sciences, University of Bath, Bath, BA2 7AY, UK}, Nick Whiteley\footnote{School of Mathematics, University of Bristol, Bristol, BS8 1TW, UK}, A. Taylan Cemgil\footnote{Deptartment of Computer Engineering, Bo{\u g}azi{\c c}i University, Istanbul, 34342 Bebek, Turkey}}

%






\newcommand{\aMatrices}{\ast}
\newcommand{\assref}[1]{\ref{1}}
\newcommand{\bA}{\overline{A}}
\newcommand{\barphi}{\overline{\varphi}}
\newcommand{\bGam}{\overline{\Gamma}}
\newcommand{\bN}{\mathbb{N}}
\newcommand{\boundMeas}{\mathcal{B}}
\newcommand{\bcF}{\overline{\mathcal{F}}}
\newcommand{\bQ}{\overline{Q}}
\newcommand{\bV}{\overline{V}}
\newcommand{\bX}{\overline{X}}
\newcommand{\bxi}{\boldsymbol{\xi}}
\newcommand{\cA}{\mathcal{A}}
\newcommand{\cF}{\mathcal{F}}
\newcommand{\cM}{\mathcal{M}}
\newcommand{\cN}{\mathcal{N}}
\newcommand{\cX}{\mathcal{X}}
\newcommand{\cY}{\mathcal{Y}}
\newcommand{\defeq}{:=}
\newcommand{\E}{\mathbb{E}}
\newcommand{\iidsim}{\stackrel{\mathrm{iid}}{\thicksim}}
\newcommand{\infnorm}[1]{\left\|#1\right\|}
\newcommand{\Id}{\mathbf{I}}
\newcommand{\id}{Id}
\newcommand{\IN}{\mathrm{in}}
\newcommand{\ind}{\mathbb{I}}
\renewcommand{\N}{\mathcal{N}}
\renewcommand{\O}{\mathcal{O}}
\newcommand{\Osc}[1]{\mathrm{Osc}\left(#1\right)}
\newcommand{\osc}[1]{\mathrm{osc}\left(#1\right)}
\newcommand{\ones}{\mathbf{1}}
\newcommand{\one}{\mathbf{1}}
\newcommand{\OUT}{\mathrm{out}}
\renewcommand{\P}{\mathbb{P}}
\newcommand{\R}{\mathbb{R}}
\renewcommand{\ss}{\mathbb{X}}
\newcommand{\tF}{\widetilde{\mathcal{F}}}
\newcommand{\tilphi}{\widetilde{\varphi}}
\newcommand{\tvnorm}[1]{\left\|#1\right\|}
\newcommand{\tX}{\widetilde{X}}
\newcommand{\ud}{\mathrm{d}}
\newcommand{\vbarphi}{\overline{\boldsymbol{\varphi}}}
\newcommand{\vg}{\boldsymbol{g}}
\newcommand{\vphi}{\boldsymbol{\varphi}}
\newcommand{\vW}{\boldsymbol{W}}
\newcommand{\vx}{\boldsymbol{x}}
\newcommand{\vy}{\boldsymbol{y}}
\newcommand{\vxi}{\boldsymbol{\xi}}
\newcommand{\vzeta}{\boldsymbol{\zeta}}
\newcommand{\wcF}{\widecheck{\mathcal{F}}}
\newcommand{\wcX}{\widecheck{X}}
\newcommand{\cV}{\mathcal{V}}
\newcommand{\X}{\mathbb{X}}
\newcommand{\Y}{\mathbb{Y}}
\newcommand{\zeros}{\mathbf{0}}

\newcommand{\lpnorm}[2]{\E\left[\left|#2\right|^{#1}\right]^{\frac{1}{#1}}}
\newcommand{\lp}[2]{\E[|#2|^{#1}]^{1/{#1}}}
\newcommand{\lpNorm}[2]{\left\|#2\right\|_{#1}}
\newcommand{\floor}[1]{\left\lfloor#1\right\rfloor}

\newcommand{\almostsurelyOneArg}[1]{\xrightarrow[#1]{\mathrm{a.s.}}}

\newcommand{\note}[1]{{\color{red}\textbf{(#1)}}}

\newtheorem{assumption}{Assumption}
\newtheorem{proposition}{Proposition}
\newtheorem{lemma}{Lemma}
\newtheorem{theorem}{Theorem}
\renewcommand{\theassumption}{A\arabic{assumption}}
\def\proofs{0}
\renewcommand{\theenumi}{\alph{enumi}}
\theoremstyle{definition}
\newtheorem{remark}{Remark}

\maketitle

\begin{abstract}
Bootstrap particle filter (BPF) is the corner stone of many popular algorithms used for solving inference problems involving time series that are observed through noisy measurements in a non-linear and non-Gaussian context. The long term stability of BPF arises from particle interactions which in the context of modern parallel computing systems typically means that particle information needs to be communicated between processing elements, which makes parallel implementation of BPF nontrivial.

In this paper we show that it is possible to constrain the interactions in a way which, under some assumptions, enables the reduction of the cost of communicating the particle information while still preserving the consistency and the long term stability of the BPF. Numerical experiments demonstrate that although the imposed constraints introduce additional error, the proposed method shows potential to be the method of choice in certain settings.

\textbf{Keywords:} Sequential Monte Carlo, particle filter, parallelism, particle interaction, hidden Markov model
\end{abstract}

\section{Introduction}

In modern computing systems an increase in the computational power is primarily obtained by increasing the number of parallel processing elements (PE) rather than by increasing the speed (i.e.~the clock rate) of an individual PE (see e.g.~\cite{pacheco2011}). While in many cases such parallel systems have enabled the completion of increasingly complex computational tasks, they can only do so if the task in question admits parallel computations. In this paper we focus on an important class of algorithms lacking such inherent parallelism, namely the sequential Monte Carlo (SMC) methods, or particle filters~\citep{gordon_et_al_93,doucet_et_al01}. 

It is well known~\citep{lee_et_al10,murray_et_al16} that the complications in parallelising SMC methods are due to the same key ingredient that also underpins their popularity: particle interactions, also commonly referred to as resampling. While these interactions stabilise the algorithms in time, and under certain assumptions, enable time uniform approximations~(see, e.g.~\cite{delmoral_et_guionnet01, douc_et_al14}), they also imply that in an attempt to speed up the computations by distributing the particles across a number of PEs, we will inevitably introduce some \emph{communication cost}. This cost arises from the need to communicate the particle information between PEs to enable the interaction. In this paper we propose new SMC algorithms that are based on an underlying principle of constraining the particle interactions in a structured way with the aim of reducing the communication cost. The resulting algorithms are studied both theoretically and in practice. 

Our theoretical study involves analysing the convergence of the algorithms in the mean of order $r\geq 1$. More specifically, we obtain convergence rates in two specific scenarios:
\begin{enumerate}
\item $m$ is fixed and $M\to\infty$,
\item $m \to \infty$ and $M$ is fixed,
\end{enumerate}
where $m$ denotes the number of PEs and $M$ denotes the number particles per PE. In the former case, the proposed algorithms retain the standard  Monte Carlo rate $M^{-1/2}$ of convergence, while in the latter case a lower $(\log_2(m)/m)^{1/2}$ rate is obtained. 

For the practical study, we compare some of the proposed algorithms empirically in a parallel computation context to a previously proposed SMC algorithm known as the \emph{island particle filter} (IPF) \citep{verge_et_al15} which we regard as the state of the art methodological approach to parallelising SMC. In this paper, we focus on methodology and hence further discussion on more implementation focused approaches, such as those discussed in \cite{murray_et_al16}, is omitted.

Although the numerical experiments may leave some room for speculation on the optimality of the tested implementations, the proposed methods have two specific properties that can be used to introduce gain in performance as demonstrated by the experiments: they enable a more flexible adaptive resampling scheme --- completely unique to the proposed approach --- and they allow a straightforward way of reducing the cost of communicating the particle information between PEs.

\subsection{Particle filters and parallelising them}

The well-known \emph{bootstrap particle filter} (BPF), introduced by \citet{gordon_et_al_93}, first simulates an independent and identically distributed (i.i.d.) sample $\vzeta_{0} \defeq (\zeta^{1}_{0},\cdots,\zeta^{N}_{0})$ from a distribution $\pi_{0}$ defined on a sufficiently regular measurable state space $(\X,\cX)$. Then, for each $n>0$, BPF subsequently generates samples $\vzeta_{n}\defeq (\zeta^{1}_{n},\cdots,\zeta^{N}_{n})$ according to 
\begin{align*}
\zeta^{i}_{n} \iidsim \frac{\sum_{j=1}^{N} g(\zeta^{j}_{n-1},y_{n-1})f(\zeta^{j}_{n-1},\,\cdot\,)}{\sum_{j=1}^{N} g(\zeta^{j}_{n-1},y_{n-1})},\qquad 1 \leq i \leq N,
\end{align*}
where $f:(\X,\cX) \to [0,1]$ is a Markov kernel, and for all $x\in\X$ and some Markov kernel $G:(\X,\cY)\to[0,1]$, the function $g(x,\,\cdot\,)$ is a density of $G(x,\,\cdot\,)$ w.r.t.~some $\sigma$-finite measure on the measurable space $(\Y,\cY)$. The samples $(\vzeta_{n})_{0\leq n}$ then define empirical probability measures
\begin{align*}
\pi^{N}_{n} \defeq \frac{1}{N}\sum_{i=1}^{N}\delta_{\zeta^{i}_{n}}, \qquad n \geq 0,
\end{align*}
where $\delta_{x}$ denotes a point mass located at $x \in \X$. Many convergence results and central limit theorems exist for these measures, see e.g.~\citep{crisan_et_doucet02,del1999central,smc:the:C04,delmoral04}, and it is well known that the limiting distribution of $\pi^{N}_{n}$ is the prediction distribution 
\begin{align*}
\pi_{n}(\,\cdot\,) \defeq \P(X_{n} \in \,\cdot\,\,\mid\,Y_{0}=y_{0},\ldots,Y_{n-1} = y_{n-1}),
\end{align*}
where $X \defeq (X_{n})_{n\geq 0}$ and $Y \defeq (Y_{n})_{n\geq 0}$ are the $\X$ valued signal process and $\Y$ valued observation process, respectively, of the hidden Markov model (HMM)
\begin{equation}\label{eq:HMM}
\arraycolsep=1.4pt
\begin{array}{rll}
X_0 \thicksim \pi_{0},\quad X_{n}\mid X_{n-1}=x_{n-1} &\thicksim f(x_{n-1},\,\cdot\,)& \qquad n\geq 1, \\[.2cm]
 Y_{n} \mid X_{n} = x_{n} &\thicksim g(x_{n},\,\cdot\,) & \qquad n \geq 0. 
\end{array}
\end{equation}

BPF can be summarised as shown in Algorithms \ref{alg:bpf} and \ref{alg:multinomial_resampling}, where we have also used the notations $\widehat{\vzeta}_{n} \defeq (\widehat{\zeta}^{1}_{n},\ldots,\widehat{\zeta}^{N}_{n})$ and $g_{n}(\,\cdot\,) \defeq g(\,\cdot\,,y_n)$ for all $n\geq 0$. We assume that $g_{n}$ is a strictly positive, bounded and measurable function defined in $\X$. The final loop on lines \ref{bpf:mutation loop} and \ref{bpf:mutation} of Algorithm \ref{alg:bpf} we refer to as \emph{the mutation step}.
\begin{algorithm}[!ht]
\caption{Particle filter}
\label{alg:bpf}
\begin{algorithmic}[1]
\For{$i=1,\ldots,N$} 
	\State $\zeta^i_0 \iidsim \pi_{0}$
\EndFor
\For{$n\geq 0$} 

	\State $\widehat{\vzeta}_{n} \leftarrow $\textsc{Resample}$(\vzeta_{n},g_{n})$
	\For{$i=1,\ldots,N$} \label{bpf:mutation loop}
		\State $\zeta^{i}_{n+1} \sim f(\widehat{\zeta}^{i}_{n},\,\cdot\,)$\label{bpf:mutation}
	\EndFor
\EndFor
\end{algorithmic}
\end{algorithm}
\vspace{-4mm}
\begin{algorithm}[!ht]
\caption{Multinomial resampling}
\label{alg:multinomial_resampling}
\begin{algorithmic}[1]
\label{alg:mr}
\State $(\xi_{1}^i)_{1\leq i\leq N}=\textsc{Resample}\left((\xi_{0}^i)_{1\leq i\leq N},g\right)$
	\For{$i=1,\ldots,N$} 
		\State $\xi^{i}_{1} \sim {\sum_{j=1}^{N} g(\xi^{j}_{0})\delta_{\xi^{i}_{0}}}/{\sum_{j=1}^{N} g(\xi^{j}_{0})}$ \label{line:bpf:resampling}
	\EndFor
\end{algorithmic}
\end{algorithm}

An obvious starting point for designing parallel SMC algorithms is to assign $M$ particles to $m$ PEs making the total sample size $N = mM$. Most of the calculations in Algorithms \ref{alg:bpf} and \ref{alg:multinomial_resampling} can be done straightforwardly in parallel, except for line \ref{line:bpf:resampling} in Algorithm \ref{alg:multinomial_resampling} where $\xi^{i}_{1}$ is generated as a duplicate of a random element of $(\xi^{1}_{0},\ldots,\xi^{N}_{0})$. Due to this step, PEs cannot proceed independently, but are required to exchange information about the particle coordinates $\xi^{i}_{0}$ and their associated weights $g(\xi^{i}_{0})$. In this paper we propose new ways of performing this interaction in order to harness the power of parallel computation for more efficient particle filter algorithms.

One of the most important earlier contributions to the design of parallel SMC algorithms is \citep{bolic_et_al_05} which introduced a modification of the BPF whereby the particle interactions are constrained  by allowing the $m$ PEs to exchange subsets of particles according specific local schemes. The theoretical properties of these popular local exchange particle filters (LEPF) was further investigated in \citep{miguez07, miguez14, miguez_et_vazquez15, heine_et_whiteley16}. The analysis of \cite{miguez14, miguez_et_vazquez15} proved that under specific assumptions, the LEPF was uniformly convergent in time as $m\to\infty$, but interestingly, in addition to the central limit theorem for the LEPF, it was shown in \citep{heine_et_whiteley16} that under some regularity assumptions, LEPF cannot be uniformly convergent in mean of order $r\geq 1$ at rate $m^{-1/2}$. Whether the time uniform convergence holds at any slower rate remains an open question. Although the present paper does not address this question directly, it sheds some light on the matter as we show that particle interactions can indeed be constrained in a manner which preserves the time uniform convergence at a slower rate.

A more recent development towards parallelising particle filters is the island particle filter (IPF) proposed by \citet{verge_et_al15}. IPF is based on a two stage implementation of the resampling step. At the first stage one resamples the particle islands, or PEs, to duplicate and redistribute the PE specific particle sets according to some, e.g.~multinomial, resampling scheme without considering particles individually. At the second stage, each PE then performs particle level  resampling independent of each other. \citet{delmoral17} provides proofs of convergence in probability, central limit theorem and large deviations for the IPF algorithm. The methods we propose in the present work are reminiscent to IPF and can be thought of as a result of combining IPF with concepts originating from computer network topologies.


\subsection{Augmented resampling}
\label{sec:r2pf intro}

The particle filter algorithms presented in this paper are all based on a novel \emph{augmented resampling} algorithm which is a multi-stage resampling algorithm parametrised by two positive integers $N$ and $S$ that are assumed to satisfy:
\begin{assumption}\label{ass:NS}
$N,S \in \{1,2,\ldots\}$ are such that $N = mM$ and $S = \log_{2}(m)$ for some $m, M \in \{1,2,\ldots\}$.
\end{assumption}
\noindent
We retain the interpretation of $m$ being the number of PEs, $M$ the number of particles per PE, and $N$ being the total number of particles. The parameter $S$ is specific to the augmented resampling algorithm and it denotes the number of resampling stages. For given matrices $A_{1},\ldots,A_{S} \in \R^{N\times N}$, to be specified later, augmented resampling proceeds as described in Algorithm \ref{alg:augmented_resampling}.
\begin{algorithm}[!ht]
\caption{Augmented resampling}
\label{alg:augmented_resampling}
\begin{algorithmic}[1]
\State {$(\xi^{i}_{S})_{1\leq i \leq N}=\textsc{AugmentedResample}\left((\xi^{i}_{0})_{1\leq i \leq N},g\right)$}{}
\For{$i=1,\ldots,N$}
	\State $V_{0}^{i} \leftarrow g(\xi_{0}^{i})$
\EndFor
\For{$s=1,\ldots,S$}
	\For{$i=1,\ldots,N$}
		\State $V^i_{s} \leftarrow \sum_{j=1}^{N}A^{ij}_{s}V^j_{s-1}$
		\State $\xi^i_{s}\sim (V^i_{s})^{-1}\sum_{j=1}^{N}A^{ij}_{s}V^j_{s-1}\delta_{\xi^j_{s-1}}$
	\EndFor
\EndFor
\end{algorithmic}
\end{algorithm}


A key characteristic of augmented resampling is that by means of the matrices $A_{1},\ldots,A_{S}$, we can control which PEs are allowed to interact at each stage $1 \leq s \leq S$. While our theory allows for defining these matrices in various ways, we will only focus on a specific definition which implies \emph{pairwise} interactions between PEs at each stage $1 \leq s \leq S$. Formally
\begin{equation}\label{eq:A matrices}
A_{s} \defeq \Id_{2^{S-s}} \otimes \ones_{1/2} \otimes \Id_{2^{s-1}} \otimes \ones_{1/M},
\end{equation}
where $\otimes$ denotes the Kronecker product and for any $k>0$, $\Id_{k}$ is size $k$  identity matrix, and the abusive notation $\ones_{1/k}$ is used for a size $k$ matrix of ones multiplied by $1/k$. 


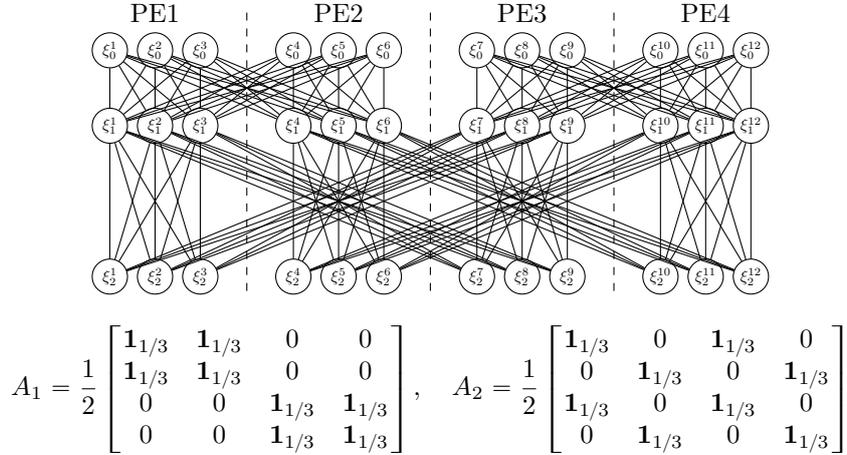
\begin{figure}
\begin{center}
\begin{tabular}{@{}c@{}}
\begin{minipage}{\textwidth}
\begin{center}
\begin{tikzpicture}
\def\h{.7}
\def\v{.5}
\def\nsep{.85}
\def\csep{.9}


\foreach \i in {0,...,2} {
	\pgfmathsetmacro{\x}{\i*\nsep*\h}
	\pgfmathsetmacro{\y}{0*\v}
	\pgfmathsetmacro{\lab}{\i+1}
	\node[draw,circle, minimum size = 22pt, inner sep = 1pt,scale=.6]
		(1p\i) at (\x,\y) {$\xi^{\pgfmathprintnumber{\lab}}_{0}$};

	\pgfmathsetmacro{\x}{\i*\nsep*\h}
	\pgfmathsetmacro{\y}{-2*\v}
	\pgfmathsetmacro{\lab}{\i+1}
	\node[draw,circle, minimum size = 22pt, inner sep = 1pt,scale=.6] 
		(2p\i) at (\x,\y) {$\xi^{\pgfmathprintnumber{\lab}}_{1}$};
}
\foreach \i in {3,...,5} {
	\pgfmathsetmacro{\x}{(\nsep*\i+\csep)*\h}
	\pgfmathsetmacro{\y}{0*\v}
	\pgfmathsetmacro{\lab}{\i+1}
	\node[draw,circle, minimum size = 22pt, inner sep = 1pt,scale=.6]
		(1p\i) at (\x,\y) {$\xi^{\pgfmathprintnumber{\lab}}_{0}$};

	\pgfmathsetmacro{\x}{(\nsep*\i+\csep)*\h}
	\pgfmathsetmacro{\y}{-2*\v}
	\pgfmathsetmacro{\lab}{\i+1}
	\node[draw,circle, minimum size = 22pt, inner sep = 1pt,scale=.6] 
		(2p\i) at (\x,\y) {$\xi^{\pgfmathprintnumber{\lab}}_{1}$};
}

\foreach \i in {0,...,5} {
	\foreach \j in {0,...,5} {
		\draw (1p\i) -- (2p\j);
	}
}

\foreach \i in {6,...,8} {
	\pgfmathsetmacro{\x}{(\nsep*\i+2*\csep)*\h}
	\pgfmathsetmacro{\y}{0*\v}
	\pgfmathsetmacro{\lab}{\i+1}
	\node[draw,circle, minimum size = 22pt, inner sep = 1pt,scale=.6]
		(1p\i) at (\x,\y) {$\xi^{\pgfmathprintnumber{\lab}}_{0}$};
	\pgfmathsetmacro{\x}{(\nsep*\i+2*\csep)*\h}
	\pgfmathsetmacro{\y}{-2*\v}
	\pgfmathsetmacro{\lab}{\i+1}
	\node[draw,circle, minimum size = 22pt, inner sep = 1pt,scale=.6] 
		(2p\i) at (\x,\y) {$\xi^{\pgfmathprintnumber{\lab}}_{1}$};
}
\foreach \i in {9,...,11} {
	\pgfmathsetmacro{\x}{(\nsep*\i+3*\csep)*\h}
	\pgfmathsetmacro{\y}{0*\v}
	\pgfmathsetmacro{\lab}{\i+1}
	\node[draw,circle, minimum size = 22pt, inner sep = 1pt,scale=.6]
		(1p\i) at (\x,\y) {$\xi^{\pgfmathprintnumber{\lab}}_{0}$};
	\pgfmathsetmacro{\x}{(\nsep*\i+3*\csep)*\h}
	\pgfmathsetmacro{\y}{-2*\v}
	\pgfmathsetmacro{\lab}{\i+1}
	\node[draw,circle, minimum size = 22pt, inner sep = 1pt,scale=.6] 
		(2p\i) at (\x,\y) {$\xi^{\pgfmathprintnumber{\lab}}_{1}$};
}

\foreach \i in {6,...,11} {
	\foreach \j in {6,...,11} {
		\draw (1p\i) -- (2p\j);
	}
}

\foreach \i in {0,...,2} {
	\pgfmathsetmacro{\x}{(\nsep*\i)*\h}
	\pgfmathsetmacro{\y}{-6*\v}
	\pgfmathsetmacro{\lab}{\i+1}
	\node[draw,circle, minimum size = 22pt, inner sep = 1pt,scale=.6] 
		(3p\i) at (\x,\y) {$\xi^{\pgfmathprintnumber{\lab}}_{2}$};
}
\foreach \i in {3,...,5} {
	\pgfmathsetmacro{\x}{(\nsep*\i+1*\csep)*\h}
	\pgfmathsetmacro{\y}{-6*\v}
	\pgfmathsetmacro{\lab}{\i+1}
	\node[draw,circle, minimum size = 22pt, inner sep = 1pt,scale=.6] 
		(3p\i) at (\x,\y) {$\xi^{\pgfmathprintnumber{\lab}}_{2}$};
}
\foreach \i in {6,...,8} {
	\pgfmathsetmacro{\x}{(\nsep*\i+2*\csep)*\h}
	\pgfmathsetmacro{\y}{-6*\v}
	\pgfmathsetmacro{\lab}{\i+1}
	\node[draw,circle, minimum size = 22pt, inner sep = 1pt,scale=.6] 
		(3p\i) at (\x,\y) {$\xi^{\pgfmathprintnumber{\lab}}_{2}$};
}
\foreach \i in {9,...,11} {
	\pgfmathsetmacro{\x}{(\nsep*\i+3*\csep)*\h}
	\pgfmathsetmacro{\y}{-6*\v}
	\pgfmathsetmacro{\lab}{\i+1}
	\node[draw,circle, minimum size = 22pt, inner sep = 1pt,scale=.6] 
		(3p\i) at (\x,\y) {$\xi^{\pgfmathprintnumber{\lab}}_{2}$};
}

\foreach \i in {0,...,2} {
	\foreach \j in {0,...,2} {
		\draw (3p\i) -- (2p\j);
	}
}
\foreach \i in {0,...,2} {
	\foreach \j in {6,...,8} {
		\draw (3p\i) -- (2p\j);
	}
}

\foreach \i in {3,...,5} {
	\foreach \j in {3,...,5} {
		\draw (3p\i) -- (2p\j);
	}
}
\foreach \i in {3,...,5} {
	\foreach \j in {9,...,11} {
		\draw (3p\i) -- (2p\j);
	}
}

\foreach \i in {6,...,8} {
	\foreach \j in {6,...,8} {
		\draw (3p\i) -- (2p\j);
	}
}
\foreach \i in {6,...,8} {
	\foreach \j in {0,...,2} {
		\draw (3p\i) -- (2p\j);
	}
}

\foreach \i in {9,...,11} {
	\foreach \j in {9,...,11} {
		\draw (3p\i) -- (2p\j);
	}
}
\foreach \i in {9,...,11} {
	\foreach \j in {3,...,5} {
		\draw (3p\i) -- (2p\j);
	}
}

\pgfmathsetmacro{\x}{(2*\nsep*\h + (\nsep*3+1*\csep)*\h)/2}
\draw[dashed] (\x,1*\v) -- (\x,-6.5*\v);
\pgfmathsetmacro{\x}{((8*\nsep+2*\csep)*\h + (\nsep*9+3*\csep)*\h)/2}
\draw[dashed] (\x,1*\v) -- (\x,-6.5*\v);
\pgfmathsetmacro{\x}{((5*\nsep+1*\csep)*\h + (\nsep*6+2*\csep)*\h)/2}
\draw[dashed] (\x,1*\v) -- (\x,-6.5*\v);

\pgfmathsetmacro{\x}{1*\nsep*\h}
\node[] at (\x,1*\v) {PE1};
\pgfmathsetmacro{\x}{(4*\nsep+1*\csep)*\h}
\node[] at (\x,1*\v) {PE2};
\pgfmathsetmacro{\x}{(7*\nsep+2*\csep)*\h}
\node[] at (\x,1*\v) {PE3};
\pgfmathsetmacro{\x}{(10*\nsep+3*\csep)*\h}
\node[] at (\x,1*\v) {PE4};


\end{tikzpicture}
\end{center}
\end{minipage} \\
 
\begin{minipage}{\textwidth}
\begin{equation*}
A_{1} = 
\frac{1}{2}
\begin{bmatrix}
\ones_{1/3} & \ones_{1/3} & 0 & 0 \\
\ones_{1/3} & \ones_{1/3} & 0 & 0 \\
0 & 0 & \ones_{1/3} & \ones_{1/3} \\
0 & 0 & \ones_{1/3} & \ones_{1/3} 
\end{bmatrix}, \quad 
A_{2} = 
\frac{1}{2}
\begin{bmatrix}
\ones_{1/3} & 0 & \ones_{1/3} & 0 \\
0 & \ones_{1/3} & 0 & \ones_{1/3} \\
\ones_{1/3} & 0 & \ones_{1/3} & 0\\
0 & \ones_{1/3} & 0 & \ones_{1/3}
\end{bmatrix}
\end{equation*}
\end{minipage}
\end{tabular}
\end{center}
\caption{Augmented resampling with $m=4$ and $M=3$. Dashed vertical lines separate the groups of particles belonging to different PEs.}
\label{fig:BR2PF}
\end{figure}


Figure \ref{fig:BR2PF} illustrates the matrices $A_{1},\ldots,A_{S}$ and how they determine the pairs of interacting PEs at different stages. Each node in the graph represents an individual particle at a specific stage. An edge between $\xi^{i}_{s-1}$ and $\xi^{j}_{s}$ is equivalent to $A^{ij}_{s} \neq 0$ and hence the particle $\xi^{i}_{s}$ is sampled with replacement among the particles $\xi^{j}_{s-1}$ where $j$ is such that $A^{ij}_{s} \neq 0$. At stage $s$ the PE containing particle $\xi^{i}_{s}$ is thus required to communicate only with the PE containing the elements of $(\xi^{1}_{s-1},\ldots,\xi^{N}_{s-1})$ connected to $\xi^{i}_{s}$ and, as demonstrated in Figure \ref{fig:BR2PF}, only pairwise interactions between PEs are required; at the first stage the interacting pairs are (PE1,PE2) and (PE3,PE4) and at the second stage (PE1,PE3) and (PE2,PE4). This \emph{radix-2 butterfly} structure (see e.g.~\cite{oppenheim75}) of Figure \ref{fig:BR2PF} will be formally stated in Section \ref{sec:augmented analysis}.

There are two motivations for our interest in studying augmented resampling in the particle filtering context. The first is related to the communication pattern between PEs and the second is related to adaptive resampling schemes. 

Regarding the communication pattern, let us assume an idealised computer architecture in which a PE can communicate with at most one other PE at a time and different pairs of PEs can communicate perfectly in parallel. Moreover, we assume that the time required to perform the communication is constant over the pairs of PEs. We acknowledge that in reality these assumptions are only approximate as computer architectures involve various types of PEs (e.g.~networks of computers or cores within processors) interconnected by various network topologies (e.g.~hypercubes or data buses). 

Now suppose that there are four PEs (PE1, PE2, PE3, and PE4) and only the sample contained in a single PE, say PE1, has an effectively non-zero weight. In this case, without augmented resampling, PE1 would have to disseminate its sample to all other PEs; first to PE2, then to PE3 and finally to PE4. This suggests that $m-1$ sequential communication steps are needed. With augmented resampling, as in Figure \ref{fig:BR2PF}, PE1 would first send its sample to PE2, after which PE1 would send the sample to PE3 while \emph{at the same} time PE2 could send its sample (just received from PE1) to PE4 thereby accomplishing complete dissemination of PE1's sample in only $\log_{2}(m)$ sequential communication steps, making augmented resampling apparently more efficient in terms of communication. However, to account for the fact that our reasoning here is based on the idealised model, we will base our final conclusions on numerical experiments.

The second motivation for augmented resampling is its additional flexibility in adaptive resampling schemes \citep{liu_et_chen98}. It is well known that although resampling is the enabling factor for long term stability of SMC methods, it does introduce error as well as additional computational cost and should only be done when necessary. In adaptive resampling, prior to performing the actual resampling, one first evaluates the \emph{effective sample size} (ESS)~\citep{liu_et_chen95} which for the BPF can be formally expressed as
\begin{equation*}
\mathcal{E}_{n} = \dfrac{\left(N^{-1}\sum_{i=1}^{N} g_{n}(\zeta^{i}_{n})\right)^2}{N^{-1}\sum_{i=1}^{N} g^{2}_{n}(\zeta^{i}_{n})} \in \left[\frac{1}{N},1\right],
\end{equation*}
and executes the resampling step only if $\mathcal{E}_{n}$ is below some predetermined threshold $\theta \in (1/N,1]$. This means that every filter iteration with adaptive resampling involves a dichotomous decision to either allow the full interaction of all particles or allow no interaction at all. Augmented resampling enables this decision to be refined so that the decision is made between finer levels of interaction, and hence it may be possible to find a better balance between long term stability, resampling error, and computational cost. In practice this is accomplished by evaluating the ESS after every stage of augmented resampling and, based on the ESS, deciding whether to proceed to the next resampling stage or to skip the remaining resampling stages and move on to the next time step. 

This more flexible adaptation of resampling is based on the ideas presented in \citep{whiteley_et_al14} and it will lead to an increase in efficiency if sufficiently few resampling stages in total are executed. It is also worth noting that the evaluation of the ESS at every stage introduces some additional communication, but our numerical experiments suggest that the net effect of this \emph{fully adapted} resampling scheme is a notable gain in efficiency. The rigorous theoretical analysis of the convergence properties of this method is left beyond the scope of this paper.

This paper is organised as follows. Section \ref{sec:augmented analysis} is dedicated to the theoretical properties of augmented resampling outside the particle filtering context and it presents our main convergence result for augmented resampling, namely Proposition \ref{prop:intro_to_aug_resampling}. In Section \ref{sec:r2pf} we apply the augmented resampling algorithm in the particle filter context and present our main convergence result, Theorem \ref{thm:convergence}. Section \ref{sec:another r2 resampler} introduces a modified augmented resampling scheme reminiscent to that used in IPF and the convergence of the resulting particle filter is proved in Section \ref{sec:grouped r2pf}. Section \ref{sec:numerics} concludes the paper with some results of numerical experiments showing the potential of the proposed algorithms and a brief discussion on the conclusions. Most of the more technical proofs are housed in the appendices.

\subsection{Notations}

We let $\boundMeas(\X)$ denote the bounded and measurable $\R$ valued functions defined on $(\X,\cX)$. Throughout the paper, we define $\infnorm{\varphi} \defeq \sup_{x \in \X}|\varphi(x)|$ and $\osc{\varphi} \defeq \sup_{x,y\in \X}|\varphi(x)-\varphi(y)|$ for any $\varphi \in \boundMeas(\X)$. We define two specific subsets of $\boundMeas(\X)$, $\boundMeas_{+}(\X) \defeq \{\varphi \in \boundMeas(\X): \varphi > 0\}$ and $\boundMeas_{1}(\X) \defeq \{\varphi \in \boundMeas(\X): \infnorm{\varphi} \leq 1\}$.  For a sequence of square matrices $(A_{s})_{1\leq s \leq S}$ where $S \in \bN_{+}$ we write $\prod_{s=1}^{S} A_{s} = A_{S}\cdots A_{1}$. For any $N,S\in \{1,2,\ldots\}$ we use the shorthand notation $\sum_{(i_0,\ldots,i_S)} \defeq \sum_{i_0=1}^{N}\cdots\sum_{i_S=1}^{N}$. We also define $\lceil x\rceil \defeq \min\{z \in \mathbb{Z}: z \geq x\}$ and $\lfloor x\rfloor \defeq \max\{z \in \mathbb{Z}: z \leq x\}$ and $( x \bmod z ) \defeq x - z\lfloor x/z\rfloor$. Throughout the remainder of this paper $\E$ and $\P$ refer to the expectation and probability with respect to the probability space charactering the randomness of the algorithm only. The observations of the underlying HMM are assumed fixed.

%
\section{Augmented resampling}
\label{sec:augmented analysis}

We start with a study of Algorithm \ref{alg:augmented_resampling} outside the filtering context by applying it to an arbitrary $\X^{N}$ valued random sample $\vxi_{0} = (\xi^{1}_{0},\ldots,\xi^{N}_{0})$ and an arbitrary weighting function $g\in \boundMeas_{+}(\X)$. We have the following result:
\begin{proposition}\label{prop:intro_to_aug_resampling}
Assume \eqref{ass:NS} and let $g\in\boundMeas_{+}(\ss)$. Then for any $\X^{N}$ valued random variable $\vxi_{0}$ and for any $\varphi\in\boundMeas(\ss)$,
\begin{equation}
\E\left[\left.\frac{1}{N}\sum_{i=1}^{N} \varphi(\xi_{S}^i)\right|\vxi_{0}\right]=\dfrac{\sum_{i=1}^{N} g(\xi_{0}^i)\varphi(\xi_{0}^i)}{\sum_{i=1}^{N} g(\xi_{0}^i)},\label{eq:aug_res_lack_of_bias}
\end{equation}
and for any $r\geq1$ there exists a finite constant $B_{r}$, depending only on $r$, such that no matter what the distribution of $\vxi_{0}$ is, we have
\begin{equation}
\E\Bigg[\Bigg|\Bigg(\frac{1}{N}\sum_{i=1}^{N} g(\xi_{0}^i)\Bigg)\Bigg(\frac{1}{N}\sum_{i=1}^{N} \varphi(\xi_{S}^i)\Bigg)-\frac{1}{N}\sum_{i=1}^{N} g(\xi_{0}^i)\varphi(\xi_{0}^i)\Bigg|^{r}\Bigg]^{\frac{1}{r}}
\leq B_{r}\sqrt{\frac{S}{N}} \infnorm{g}\osc{\varphi}. \label{eq:aug_res_L_p_bound}
\end{equation}
\end{proposition}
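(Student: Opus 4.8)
\emph{Proof strategy.} The plan is to run Algorithm~\ref{alg:augmented_resampling} and follow the weighted empirical integrals $\Psi_{s}\defeq\frac{1}{N}\sum_{i=1}^{N}V_{s}^{i}\varphi(\xi_{s}^{i})$, $0\le s\le S$, alongside the weight vectors $V_{s}$. Write $\mathcal{F}_{s}\defeq\sigma(\vxi_{0},\dots,\vxi_{s})$ and record three elementary facts. (i) Each $V_{s}$ is $\sigma(\vxi_{0})$-measurable and $0<V_{s}^{i}\le\infnorm{g}$ for all $i,s$, by induction from $V_{0}^{i}=g(\xi_{0}^{i})\in(0,\infnorm{g}]$ and the fact that each row of $A_{s}$ has nonnegative entries summing to one. (ii) Every $A_{s}$ is doubly stochastic: by \eqref{eq:A matrices} its row and column sums are products of the unit row/column sums of $\Id_{k}$, $\ones_{1/2}$ and $\ones_{1/M}$. (iii) From the conditional law of $\xi_{s}^{i}$ in Algorithm~\ref{alg:augmented_resampling}, $\E[\varphi(\xi_{s}^{i})\mid\mathcal{F}_{s-1}]=(V_{s}^{i})^{-1}\sum_{j}A_{s}^{ij}V_{s-1}^{j}\varphi(\xi_{s-1}^{j})$; multiplying by the $\mathcal{F}_{s-1}$-measurable $V_{s}^{i}$, summing over $i$ and using $\sum_{i}A_{s}^{ij}=1$ gives $\E[\Psi_{s}\mid\mathcal{F}_{s-1}]=\Psi_{s-1}$. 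Hence $(\Psi_{s})_{0\le s\le S}$ is an $(\mathcal{F}_{s})$-martingale with $\Psi_{0}=\frac{1}{N}\sum_{i}g(\xi_{0}^{i})\varphi(\xi_{0}^{i})$, and, taking $\varphi\equiv1$, $\frac{1}{N}\sum_{i}V_{s}^{i}=\frac{1}{N}\sum_{i}g(\xi_{0}^{i})$ for every $s$.

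The key --- and, I expect, most delicate --- structural step is $\prod_{s=1}^{S}A_{s}=\ones_{1/N}$, equivalently $V_{S}^{i}=\frac{1}{N}\sum_{j}g(\xi_{0}^{j})$ for every $i$. This is the radix-2 butterfly property suggested by Figure~\ref{fig:BR2PF}: writing each particle index as a pair (PE index, within-PE index) via the Kronecker structure of \eqref{eq:A matrices}, $A_{s}$ replaces a PE's weight by the average of its own weight and that of the partner PE differing from it in bit $s$ (and averages over the $M$ particles within a PE through the $\ones_{1/M}$ factor); after $S=\log_{2}m$ stages all $S$ bits have been averaged over, so every particle carries the global average weight. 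I would make this precise either by induction on $s$, showing $\frac{1}{M}\sum_{a}V_{s}^{(p,a)}$ equals the average of $\frac{1}{M}\sum_{a}V_{0}^{(q,a)}$ over the PEs $q$ agreeing with $p$ outside its $s$ lowest bits, or by direct Kronecker-product algebra using idempotency of the $\ones_{1/k}$ blocks; this is presumably the butterfly structure formalisation announced earlier. Granting it, $\Psi_{S}=\bigl(\frac{1}{N}\sum_{i}g(\xi_{0}^{i})\bigr)\bigl(\frac{1}{N}\sum_{i}\varphi(\xi_{S}^{i})\bigr)$, so the martingale identity $\E[\Psi_{S}\mid\mathcal{F}_{0}]=\Psi_{0}$ and $\frac{1}{N}\sum_{i}g(\xi_{0}^{i})>0$ (since $g\in\boundMeas_{+}(\ss)$) give \eqref{eq:aug_res_lack_of_bias}.

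For \eqref{eq:aug_res_L_p_bound}, note that the expression inside the modulus equals $\Psi_{S}-\Psi_{0}=\sum_{s=1}^{S}D_{s}$ with $D_{s}\defeq\Psi_{s}-\Psi_{s-1}$, and that both it and $\osc{\varphi}$ are unchanged when $\varphi$ is replaced by $\varphi-c$ for a constant $c$ (the former because $\frac{1}{N}\sum_{i}V_{s}^{i}$ is the same for every $s$), so we may assume $\infnorm{\varphi}=\tfrac{1}{2}\osc{\varphi}$. Then $D_{s}=\frac{1}{N}\sum_{i=1}^{N}Z_{i}^{(s)}$ with $Z_{i}^{(s)}\defeq V_{s}^{i}\varphi(\xi_{s}^{i})-\sum_{j}A_{s}^{ij}V_{s-1}^{j}\varphi(\xi_{s-1}^{j})$; conditionally on $\mathcal{F}_{s-1}$ the $Z_{i}^{(s)}$ are independent (the $\xi_{s}^{i}$ being sampled independently), centred, and bounded by $\infnorm{g}\osc{\varphi}$ (using $V_{s}^{i},V_{s-1}^{j}\le\infnorm{g}$, $|\varphi|\le\tfrac{1}{2}\osc{\varphi}$ and $\sum_{j}A_{s}^{ij}=1$). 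A Marcinkiewicz--Zygmund/Rosenthal-type bound for sums of conditionally independent bounded centred variables, applied conditionally on $\mathcal{F}_{s-1}$ and followed by Minkowski's inequality in $L^{r/2}$, gives a constant $c_{r}$, depending only on $r$ and uniform in the law of $\vxi_{0}$, with $\lpNorm{r}{D_{s}}\le c_{r}N^{-1/2}\infnorm{g}\osc{\varphi}$ for $r\ge2$. Burkholder's inequality for the martingale $(\Psi_{s}-\Psi_{0})$, again with Minkowski in $L^{r/2}$, then yields $\lpNorm{r}{\Psi_{S}-\Psi_{0}}\le\beta_{r}\bigl(\sum_{s=1}^{S}\lpNorm{r}{D_{s}}^{2}\bigr)^{1/2}\le\beta_{r}c_{r}\sqrt{S/N}\,\infnorm{g}\osc{\varphi}$, which is \eqref{eq:aug_res_L_p_bound} with $B_{r}\defeq\beta_{r}c_{r}$ for $r\ge2$; for $1\le r<2$ one uses $\lpNorm{r}{\cdot}\le\lpNorm{2}{\cdot}$ and sets $B_{r}\defeq B_{2}$. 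The only genuinely non-routine point is the butterfly identity of the second paragraph; everything else is standard martingale and concentration bookkeeping.
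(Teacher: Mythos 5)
Your proposal is correct and follows essentially the same route as the paper: the same structural facts about the matrices $A_{s}$ and the weights $V_{s}^{i}$ (double stochasticity, $\sigma(\vxi_{0})$-measurability, boundedness by $\infnorm{g}$, and the butterfly product identity $\prod_{s=1}^{S}A_{s}=\ones_{1/N}$ giving $V_{S}^{i}\equiv N^{-1}\sum_{j}g(\xi_{0}^{j})$), the same telescoping of $\frac{1}{N}\sum_{i}V_{s}^{i}\varphi(\xi_{s}^{i})$ into a martingale across the resampling stages, and Burkholder at the end. The only difference is bookkeeping: the paper refines the filtration to one increment per particle per stage (an $SN$-term martingale difference, each term bounded by $\infnorm{g}\osc{\varphi}/\sqrt{SN}$) and applies Burkholder--Davis--Gundy once, whereas you keep one increment per stage and extract the $N^{-1/2}$ factor via a conditional Marcinkiewicz--Zygmund bound on the $N$ conditionally independent within-stage terms before applying Burkholder across the $S$ stages --- both give the same rate and the same form of constant.
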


An immediate consequence of Proposition \ref{prop:intro_to_aug_resampling} is that if, for example, $S$ is some non-decreasing function $S(N)$ of $N$ such that $\sum_{N=1}^\infty(S(N)/N)^{r/2}<\infty$ for some $r\geq1$, then
\begin{equation}\label{eq:martingale error}
\Bigg(\frac{1}{N}\sum_{i=1}^{N} g(\xi_{0}^i)\Bigg)\Bigg(\frac{1}{N}\sum_{i=1}^{N} \varphi(\xi_{S}^i)-\frac{\sum_{i=1}^{N} g(\xi_{0}^i)\varphi(\xi_{0}^i)}{\sum_{i=1}^{N} g(\xi_{0}^i)}\Bigg) \;\almostsurelyOneArg{N\rightarrow\infty}\; 0,
\end{equation}
without requiring any convergence of $N^{-1} \sum_{i=1}^{N} g(\xi_{0}^i)$ or $N^{-1} \sum_{i=1}^{N} g(\xi_{0}^i)\varphi(\xi_{0}^i)$.

The more technical proofs of the results stated in this section are housed in Appendix \ref{sec:proofs augmented}.

\subsection{Properties of augmented resampling}

The matrices $A_{1},\ldots,A_{S}$ play an important role in augmented resampling and to a large extent they determine its statistical properties. We present first the following result which, although not in its entirety required to prove Proposition \ref{prop:intro_to_aug_resampling}, summarises some key properties of $A_{1},\ldots,A_{S}$ and also makes it formally explicit, how the structure of the diagram in Figure \ref{fig:BR2PF} is obtained.

\begin{lemma}\label{lem:facts about A}
Assume \eqref{ass:NS}. Then for all $1 \leq s \leq S$, $A_{s}$ is symmetric, idempotent, and doubly stochastic. Moreover, for any $1 \leq i \leq m$
\begin{align}\label{eq:explicit nonzero elements}
&\left\{j \in \{1,\ldots,m\}: (\Id_{2^{S-s}} \otimes \ones_{1/2} \otimes \Id_{2^{s-1}})^{ij} \neq 0 \right\} \nonumber\\
&\quad= \left\{ \left((i-1)\bmod 2^{s-1}\right)+(q-1)2^{s-1}+2^{s}\left\lfloor\frac{i-1}{2^{s}}\right\rfloor+1:q \in \{1,2\}\right\},  
\end{align}
and for all $1 \leq i \leq m$, $(\Id_{2^{S-s}} \otimes \ones_{1/2} \otimes \Id_{2^{s-1}})^{ii} = 1/2$.
\end{lemma}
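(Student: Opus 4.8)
The plan is to establish the three algebraic properties of $A_s$ first, then verify the combinatorial description of the nonzero pattern of the ``core'' factor $\Id_{2^{S-s}}\otimes\ones_{1/2}\otimes\Id_{2^{s-1}}$, which carries all the structural content of the lemma.

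\textbf{Algebraic properties of $A_s$.} Recall the definition $A_{s} = \Id_{2^{S-s}} \otimes \ones_{1/2} \otimes \Id_{2^{s-1}} \otimes \ones_{1/M}$. I would use the standard facts that the Kronecker product of symmetric matrices is symmetric, the Kronecker product of idempotent matrices is idempotent (since $(B\otimes C)(D\otimes E)=(BD)\otimes(CE)$), and the Kronecker product of doubly stochastic matrices is doubly stochastic. So it suffices to check each of the four factors. Each identity $\Id_k$ is trivially symmetric, idempotent and doubly stochastic. For $\ones_{1/k}$, symmetry is clear; idempotency follows because $(\ones_{1/k})^2$ has entries $\sum_{l=1}^{k}(1/k)(1/k)=1/k$, so $(\ones_{1/k})^2=\ones_{1/k}$; and each row and column sums to $k\cdot(1/k)=1$, so it is doubly stochastic. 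Hence $A_s$ inherits all three properties. This part is routine.

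\textbf{The nonzero pattern.} The real work is \eqref{eq:explicit nonzero elements}. Write $B_s \defeq \Id_{2^{S-s}} \otimes \ones_{1/2} \otimes \Id_{2^{s-1}}$, an $m\times m$ matrix since $2^{S-s}\cdot 2\cdot 2^{s-1}=2^{S}=m$. I would index rows and columns by the mixed-radix decomposition of $i-1 \in \{0,\ldots,m-1\}$ corresponding to the three tensor factors: using the standard row/column indexing convention for Kronecker products, $B_s^{ij}\neq 0$ exactly when, writing $i-1 = a\,2^{s} + b\,2^{s-1} + c$ and $j-1 = a'\,2^{s} + b'\,2^{s-1} + c'$ with $a,a'\in\{0,\ldots,2^{S-s}-1\}$, $b,b'\in\{0,1\}$, $c,c'\in\{0,\ldots,2^{s-1}-1\}$, we have $a=a'$ (from $\Id_{2^{S-s}}$), $c=c'$ (from $\Id_{2^{s-1}}$), and $b,b'$ arbitrary in $\{0,1\}$ (from $\ones_{1/2}$, all of whose entries are nonzero). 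Extracting the pieces: $c = (i-1)\bmod 2^{s-1}$ and $a = \lfloor (i-1)/2^{s}\rfloor$. Therefore the set of valid $j$ is precisely $\{\, c + (q-1)2^{s-1} + a\,2^{s} + 1 : q\in\{1,2\}\,\}$, which upon substituting the formulas for $c$ and $a$ is exactly the right-hand side of \eqref{eq:explicit nonzero elements}. Finally, the diagonal entry: taking $j=i$ forces $b'=b$, and the corresponding entry of the middle factor $\ones_{1/2}$ is $1/2$, while the two identity factors contribute $1$; hence $B_s^{ii}=1/2$.

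\textbf{Main obstacle.} The one genuinely fiddly point is pinning down the index bookkeeping for the Kronecker product and confirming that the mixed-radix split $i-1 = a\,2^{s}+b\,2^{s-1}+c$ matches the ordering used in \eqref{eq:A matrices}; one has to be careful that the leftmost factor $\Id_{2^{S-s}}$ corresponds to the most significant digits and the rightmost to the least significant (the $\ones_{1/M}$ factor, which would add a further least-significant block, is absent from $B_s$ since we have already divided out the $M$ particles per PE when passing to the $m\times m$ description). Once the indexing convention is fixed, verifying that the two stated cases $q\in\{1,2\}$ correspond to $b'\in\{0,1\}$ and that the diagonal falls in this set is immediate. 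Everything else is bookkeeping, so I would present the Kronecker indexing lemma cleanly and then let the formula drop out.
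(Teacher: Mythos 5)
Your proof is correct, and it reaches the same conclusions via a mildly but genuinely different route in two places. For double stochasticity, the paper does not invoke closure of doubly stochastic matrices under Kronecker products: it first shows from the element-wise Kronecker formula that every entry of $A_s$ lies in $\{0,(2M)^{-1}\}$ with $A_s^{ii}=(2M)^{-1}$, and then uses idempotence together with symmetry to write $\tfrac{1}{2M}=A_s^{ii}=(A_s^{T}A_s)^{ii}=\sum_j (A_s^{ij})^2$, which forces exactly $2M$ nonzero entries per column and hence unit column sums. Your argument --- check that each Kronecker factor ($\Id_k$ and $\ones_{1/k}$) is symmetric, idempotent and doubly stochastic, and propagate all three properties through the product --- is shorter and more transparent, at the cost of invoking one additional (standard, easily verified) closure fact. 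For the nonzero pattern, the paper reduces $(\Id_{2^{S-s}}\otimes\ones_{1/2}\otimes\Id_{2^{s-1}})^{ij}\neq 0$ to the two conditions $\lfloor (i-1)/2^{s}\rfloor=\lfloor (j-1)/2^{s}\rfloor$ and $((i-1)\bmod 2^{s-1})=((j-1)\bmod 2^{s-1})$ and then proves equality with the two-element set in \eqref{eq:explicit nonzero elements} by checking both inclusions separately; your mixed-radix parametrisation $i-1=a\,2^{s}+b\,2^{s-1}+c$ is the same computation but organised so that the set equality drops out in one step, since the free digit $b'\in\{0,1\}$ directly enumerates the two admissible values of $j$. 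Both treatments are complete; yours trades the paper's explicit double-inclusion verification for a cleaner indexing lemma, and your remark about the ordering convention (leftmost factor most significant) is exactly the point one must get right for the formula to match \eqref{eq:A matrices}.
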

Equation \eqref{eq:explicit nonzero elements} formalises the radix-2 butterfly structure seen in Figure \ref{fig:BR2PF} by giving explicit expression for the nonzero elements of $\Id_{2^{S-s}} \otimes \ones_{1/2} \otimes \Id_{2^{s-1}} \in \R^{m\times m}$. By considering $A_{s}\in \R^{mM\times mM}$ as an $m$-by-$m$ matrix of $M$-by-$M$ blocks, the element $(i,j)$ of $\Id_{2^{S-s}} \otimes \ones_{1/2} \otimes \Id_{2^{s-1}}$ is nonzero if and only if the block $(i,j)$ of $A_{s}$ is the full matrix $\frac{1}{2}\ones_{1/M}$.

From Algorithm \ref{alg:augmented_resampling} we obtain the definitions 
\begin{equation}
V^{i}_{0}\defeq g(\xi^{i}_{0}),\qquad V^{i}_{s}\defeq \sum_{j=1}^{N} A^{ij}_{s}V^{j}_{s-1},\qquad 1\leq i\leq N,~ 1 \leq s \leq S\label{eq:V_defn_proofs}
\end{equation}
for the particle weights at each stage. For the proof of Proposition \ref{prop:intro_to_aug_resampling} it is crucial that after finishing all $S$ resampling stages, Algorithm \ref{alg:augmented_resampling} returns an \emph{unweighted} sample in a manner similar to conventional multinomial resampling, i.e.~that $V_{S}^{i} = V_{S}^{j}$ for all $i,j \in \{1,\ldots,N\}$. The proof of this unweighted property is essentially due to the following key result on 
$A_{1},\ldots,A_{S}$.
\begin{lemma}\label{lem:ones product}
Assume \eqref{ass:NS}. Then $\prod_{s=1}^{S} A_{s} = \ones_{1/N}$.
\end{lemma}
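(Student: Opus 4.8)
## Proof Plan

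The plan is to compute the product $\prod_{s=1}^{S} A_s = A_S \cdots A_1$ directly using the Kronecker structure of the matrices in \eqref{eq:A matrices}. The crucial observation is that each $A_s$ factors as $A_s = B_s \otimes \ones_{1/M}$, where $B_s \defeq \Id_{2^{S-s}} \otimes \ones_{1/2} \otimes \Id_{2^{s-1}} \in \R^{m \times m}$. Since $(B_s \otimes \ones_{1/M})(B_{s-1} \otimes \ones_{1/M}) = (B_s B_{s-1}) \otimes (\ones_{1/M}\ones_{1/M}) = (B_s B_{s-1}) \otimes \ones_{1/M}$ (using that $\ones_{1/M}$ is idempotent), an easy induction gives $\prod_{s=1}^S A_s = \left(\prod_{s=1}^S B_s\right) \otimes \ones_{1/M}$. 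Since $\ones_{1/m} \otimes \ones_{1/M} = \ones_{1/N}$, the claim reduces to showing $\prod_{s=1}^S B_s = B_S \cdots B_1 = \ones_{1/m}$, i.e. the corresponding statement at the level of $m$-by-$m$ matrices.

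Next I would establish $B_S \cdots B_1 = \ones_{1/m}$ by induction on $S$ (equivalently on $m = 2^S$). The base case $S=1$, $m=2$ is immediate since $B_1 = \ones_{1/2}$. For the inductive step, I would exploit the nested Kronecker structure: writing $B_s = \Id_{2^{S-s}} \otimes \ones_{1/2} \otimes \Id_{2^{s-1}}$, the last factor $B_S = \ones_{1/2} \otimes \Id_{2^{S-1}}$ mixes the two halves of size $2^{S-1}$, while $B_1, \ldots, B_{S-1}$ each have the form $\Id_2 \otimes C_s$ with $C_s = \Id_{2^{S-1-s}} \otimes \ones_{1/2} \otimes \Id_{2^{s-1}} \in \R^{2^{S-1} \times 2^{S-1}}$ — that is, they act block-diagonally on the two halves, applying within each half exactly the family of matrices appearing in the $S-1$ case. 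Thus $B_{S-1} \cdots B_1 = \Id_2 \otimes (C_{S-1} \cdots C_1)$, and by the induction hypothesis $C_{S-1}\cdots C_1 = \ones_{1/2^{S-1}}$, so $B_{S-1}\cdots B_1 = \Id_2 \otimes \ones_{1/2^{S-1}}$. Multiplying on the left by $B_S = \ones_{1/2} \otimes \Id_{2^{S-1}}$ gives $(\ones_{1/2} \cdot \Id_2) \otimes (\Id_{2^{S-1}} \cdot \ones_{1/2^{S-1}}) = \ones_{1/2} \otimes \ones_{1/2^{S-1}} = \ones_{1/m}$, completing the induction.

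An alternative, perhaps cleaner, route is to verify the identity $\prod_{s=1}^S A_s = \ones_{1/N}$ entrywise using the explicit description of nonzero entries from Lemma \ref{lem:facts about A}: at each stage $s$, the set of indices $j$ reachable from $i$ doubles in size in a controlled way (the index of the "partner" block flips bit $s-1$ of $i-1$ in binary), and after $S$ stages every index is reachable with total weight $2^{-S} = 1/m$ at the block level, hence $1/N$ at the particle level. I would likely present the Kronecker-algebra argument as the main proof since it avoids index bookkeeping, and I expect the only real subtlety to be carefully matching the Kronecker factorisations so that the block-diagonal action of $B_1,\ldots,B_{S-1}$ and the "top-level mixing" of $B_S$ line up exactly with the inductive hypothesis; the identities $\ones_{1/k}\ones_{1/k} = \ones_{1/k}$, $(\Id_a \otimes \Id_b) = \Id_{ab}$, $\ones_{1/a}\otimes\ones_{1/b} = \ones_{1/(ab)}$, and the mixed-product property of $\otimes$ are all routine and do the rest.
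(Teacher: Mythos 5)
Your proof is correct, and it reaches the identity by a genuinely different induction than the paper's. The paper keeps $S$ fixed and inducts on the length $k$ of the partial product, maintaining the explicit invariant $\prod_{s=1}^{k} A_{s} = \Id_{2^{S-k}} \otimes \ones_{1/2^{k}} \otimes \ones_{1/M}$ and collapsing one more identity block into a $\ones$ block at each step via the mixed-product property; the case $k=S$ then gives the result directly. You instead first strip off the particle-level factor by writing $A_s = B_s \otimes \ones_{1/M}$ and reducing to $\prod_{s=1}^{S} B_s = \ones_{1/m}$, and then induct on the problem size $S$ itself, using the structural facts that $B_1,\ldots,B_{S-1}$ act block-diagonally as two copies of the $(S-1)$-instance while $B_S = \ones_{1/2}\otimes\Id_{2^{S-1}}$ mixes the two halves. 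Both arguments rest on the same two ingredients (the mixed-product property and idempotence of $\ones_{1/k}$), and your Kronecker bookkeeping checks out at every step. The paper's version buys an explicit closed form for every partial product $\prod_{s=1}^{k}A_s$, while yours makes the recursive, divide-and-conquer nature of the butterfly explicit and would generalise more transparently to other recursively defined interaction patterns. One shared caveat: both inductions start at $S\geq 1$, whereas \eqref{ass:NS} formally permits $m=1$, $S=0$, for which the empty product is $\Id_N \neq \ones_{1/N}$ unless $M=1$; this degenerate case is implicitly excluded in the paper as well, so it is not a defect of your argument relative to the paper's.
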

Lemma \ref{lem:ones product} enables us to establish the following result which, in addition to the unweighted property, states some other facts about the weights $V^{i}_{s}$ that are required for the proof of Proposition \ref{prop:intro_to_aug_resampling}.\begin{lemma}\label{lem:facts_about_Vs}
Assume \eqref{ass:NS}. For any $1\leq i\leq N$ and $0 \leq s \leq S$,
\begin{enumerate}
\item \label{it:measurability of V} $V^{i}_{s}$ is measurable w.r.t.~$\sigma(\xi_{0})$,
\item \label{it:boundedness of V} $V^{i}_{s}\leq\infnorm{g}$.
\item \label{it:constant weights} $V^{i}_{S}=N^{-1}\sum_{j=1}^{N} g(\xi_{0}^j)$.
\end{enumerate}
\end{lemma}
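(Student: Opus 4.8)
The plan is to prove the three items in order, since each builds on the previous one, and to use Lemmas \ref{lem:facts about A} and \ref{lem:ones product} as the main tools.

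For item \ref{it:measurability of V}, I would argue by induction on $s$. For $s=0$ we have $V^i_0 = g(\xi^i_0)$, which is $\sigma(\vxi_0)$-measurable because $g$ is measurable and $\xi^i_0$ is a coordinate of $\vxi_0$. For the inductive step, $V^i_s = \sum_{j=1}^N A^{ij}_s V^j_{s-1}$ is a finite linear combination (with deterministic coefficients $A^{ij}_s$) of the $\sigma(\vxi_0)$-measurable random variables $V^j_{s-1}$, hence is itself $\sigma(\vxi_0)$-measurable. Note this uses only that the $A_s$ are fixed matrices, not any of their structural properties. The key point worth stating explicitly is that the $V^i_s$ do not depend on the (conditionally independent) resampling draws $\xi^j_s$, only on the weights propagated through the matrices.

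For item \ref{it:boundedness of V}, I would again induct on $s$. The base case $V^i_0 = g(\xi^i_0) \le \infnorm{g}$ is immediate from the definition of $\infnorm{\cdot}$. For the step, using that $A_s$ is doubly stochastic (from Lemma \ref{lem:facts about A}), the rows of $A_s$ have nonnegative entries summing to $1$, so
\[
V^i_s = \sum_{j=1}^N A^{ij}_s V^j_{s-1} \le \left(\sum_{j=1}^N A^{ij}_s\right)\max_j V^j_{s-1} = \max_j V^j_{s-1} \le \infnorm{g},
\]
the last inequality being the inductive hypothesis. (Nonnegativity of the $A^{ij}_s$ is needed to pull the max out; this is part of "doubly stochastic" as the paper uses it, and in any case is clear from \eqref{eq:A matrices} since Kronecker products of nonnegative matrices are nonnegative.)

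For item \ref{it:constant weights}, I would unfold the recursion \eqref{eq:V_defn_proofs} to obtain, for each $i$,
\[
V^i_S = \sum_{(j_0,\ldots,j_{S-1})} A^{i j_{S-1}}_S A^{j_{S-1} j_{S-2}}_{S-1} \cdots A^{j_1 j_0}_1 \, g(\xi^{j_0}_0) = \sum_{j_0=1}^N \Bigl(\prod_{s=1}^S A_s\Bigr)^{i j_0} g(\xi^{j_0}_0),
\]
where the matrix product is in the order $A_S \cdots A_1$ as fixed by the paper's notational convention. By Lemma \ref{lem:ones product}, $\prod_{s=1}^S A_s = \ones_{1/N}$, so every entry $(\prod_{s=1}^S A_s)^{i j_0}$ equals $1/N$, giving $V^i_S = N^{-1}\sum_{j_0=1}^N g(\xi^{j_0}_0)$, independent of $i$. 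In particular items \ref{it:measurability of V} and \ref{it:boundedness of V} for $s=S$ are also consistent with this closed form, and the "unweighted" property $V^i_S = V^j_S$ follows at once.

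I do not anticipate a serious obstacle here: once Lemma \ref{lem:ones product} is granted, item \ref{it:constant weights} is essentially a bookkeeping exercise in composing the recursion, and items \ref{it:measurability of V}--\ref{it:boundedness of V} are routine inductions. The one point requiring a little care is matching the index ordering in the iterated sum with the paper's convention $\prod_{s=1}^S A_s = A_S \cdots A_1$, so that the application of Lemma \ref{lem:ones product} is literally correct; this is the place where a careless sign or transpose would creep in.
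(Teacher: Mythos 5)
Your proposal is correct and follows essentially the same route as the paper: the paper likewise unwinds the recursion \eqref{eq:V_defn_proofs} into the iterated sum $V_s^{i_s}=\sum_{(i_0,\ldots,i_{s-1})}g(\xi_{0}^{i_0})\prod_{q=1}^s A_q^{i_qi_{q-1}}$ (giving measurability), uses row-stochasticity of $A_s$ for the bound, and applies Lemma \ref{lem:ones product} to the matrix product for item (c). Your extra care about the ordering convention $\prod_{s=1}^{S}A_s = A_S\cdots A_1$ and the remark that the weights do not depend on the resampling draws are both accurate and consistent with the paper's argument.
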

\begin{remark}
Although we work throughout the paper with the definition \eqref{eq:A matrices} of $A_{s}$, the specific definition of $(A_{s})_{1\leq s \leq S}$ is irrelevant for the proof of Proposition \ref{prop:intro_to_aug_resampling} as long as the matrices satisfy Lemma \ref{lem:ones product} and are doubly stochastic. Different definitions of $(A_{s})_{1\leq s \leq S}$ for which Lemma \ref{lem:ones product} still holds can be easily devised. Above, the double stochasticity follows from Lemma \ref{lem:facts about A}.
\end{remark}

\subsection{Proof of Proposition \ref{prop:intro_to_aug_resampling} by martingale difference}

The proof of Proposition \ref{prop:intro_to_aug_resampling} is based on expressing the error term on the left hand side of \eqref{eq:martingale error} as a martingale to which we then apply the Burkholder inequality. For the required martingale construction, we observe another important property of Algorithm \ref{alg:augmented_resampling}; for all $1\leq s \leq S$ the random samples $\vxi_{s} \defeq (\xi^{1}_{s},\ldots,\xi^{N}_{s})$ satisfy a \emph{one step conditional independence} property, i.e.~the particles $\xi^{1}_{s},\ldots,\xi^{N}_{s}$ are conditionally independent given $\vxi_0,\ldots,\vxi_{s-1}$. We also see that for each $1\leq i\leq N$ and $B\in\cX$, we have
\begin{equation}
\P\big(\xi_s^i\in B \,\big|\, \vxi_0,\ldots,\vxi_{k-1}\big)=\frac{1}{V_s^i}\sum_{j=1}^{N} A_s^{ij}V_{s-1}^j \mathbb{I}_B(\xi_{s-1}^j),\label{eq:P_in_terms_of_V_proofs}
\end{equation}
where $\mathbb{I}_B$ denotes the indicator function of the set $B\in \cX$.

To construct the required martingale via a martingale difference, we define a sequence $\cM \defeq \{(X_{\rho},\cF_{\rho});\, 0 \leq \rho \leq SN\}$ where $X_{0} \defeq 0$, $\cF_{0} \defeq \sigma(\vxi_{0})$ and for all $0 < \rho \leq SN$, we define 
\begin{align*}
X_{\rho} &\defeq
\frac{V^{i_{N}(\rho)}_{s_{N}(\rho)}}{\sqrt{SN}}\left(\barphi(\xi^{i_{N}(\rho)}_{s_{N}(\rho)}) - \frac{1}{V^{i_{N}(\rho)}_{s_{N}(\rho)}} \sum_{j=1}^{N} A^{i_{N}(\rho)j}_{s_{N}(\rho)}V^{j}_{s_{N}(\rho)-1}\barphi(\xi^{j}_{s_{N}(\rho)-1})\right), \\
\cF_{\rho} &\defeq \cF_{\rho-1} \vee \sigma(\xi^{i_{N}(\rho)}_{s_{N}(\rho)}) 
\end{align*}
where
\begin{equation*}
\barphi \defeq \varphi - \frac{\sum_{i=1}^{N}g(\xi^{i}_{0})\varphi(\xi^{i}_{0})}{\sum_{i=1}^{N}g(\xi^{i}_{0})}, \qquad \varphi \in \boundMeas(\X),
\end{equation*}
and for any $k\in \mathbb{N}$ 
\begin{equation}\label{eq:index map}
i_{k}(\rho) \defeq ((\rho - 1) \bmod k)+1  \qquad s_k(\rho) \defeq  \left\lceil\frac{\rho}{k}\right\rceil.
\end{equation}
The purpose of \eqref{eq:index map} is simply to define a bijective index map taking a one dimensional index $\rho$ in the range $\{1,\ldots,Sk\}$ into a pair of indices $s_{k}(\rho)\in \{1,\ldots,S\}$ and $i_{k}(\rho) \in \{1,\ldots,k\}$. The following proposition establishes the required martingale properties of $\cM$.
\begin{proposition}\label{prop:martingale decomposition}
Assume \eqref{ass:NS}. The following statements hold:
\begin{enumerate}

\item\label{it:measurability} $X_{\rho}$ is $\cF_{\rho}$-measurable for all $0 \leq \rho \leq SN$;

\item\label{it:zero mean} $\E\left[X_{\rho}\,\mid\,\cF_{\rho-1}\right] = 0$ (a.s.) for all $0 < \rho \leq SN$;

\item\label{it:boundedness} $\left|X_{\rho}\right| \leq \infnorm{g}\osc{\varphi}/\sqrt{SN}$ for all $0 \leq \rho \leq SN$;

\item and we have the identities
\begin{align}
\sqrt{\frac{S}{N}}\sum_{\rho=1}^{SN} X_{\rho} &= \frac{1}{N}\sum_{i=1}^{N} V^{i}_{S}\barphi(\xi^{i}_{S}) \label{eq:first decompo}\\
&= \left(\frac{1}{N}\sum_{i=1}^{N}g(\xi^{i}_{0})\right)\left(\frac{1}{N}\sum_{i=1}^{N}\varphi(\xi^{i}_{S})\right) - \frac{1}{N}\sum_{i=1}^{N} g(\xi^{i}_{0})\varphi(\xi^{i}_{0}) \label{eq:second decompo}
\end{align}
\end{enumerate}
\end{proposition}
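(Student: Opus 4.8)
The plan is to verify the four claims of Proposition \ref{prop:martingale decomposition} in turn, treating (a)--(c) as essentially bookkeeping on the definitions and reserving the real work for the telescoping identity in (d).

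For claim (a), $X_\rho$ is built from $\xi^{i_N(\rho)}_{s_N(\rho)}$, the weights $V^j_{s_N(\rho)-1}$ and $V^{i_N(\rho)}_{s_N(\rho)}$, and the particles $\xi^j_{s_N(\rho)-1}$; by Lemma \ref{lem:facts_about_Vs}\eqref{it:measurability of V} the weights are $\sigma(\vxi_0)$-measurable and hence $\cF_{\rho-1}$-measurable, the stage-$(s_N(\rho)-1)$ particles are $\cF_{\rho-1}$-measurable because the index map \eqref{eq:index map} processes a whole stage before moving to the next, and $\xi^{i_N(\rho)}_{s_N(\rho)}$ is exactly the coordinate adjoined in $\cF_\rho$. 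For claim (b), I would condition on $\cF_{\rho-1}$: since $V^{i_N(\rho)}_{s_N(\rho)}$ and the sum over $j$ are $\cF_{\rho-1}$-measurable, and the conditional law of $\xi^{i_N(\rho)}_{s_N(\rho)}$ given $\vxi_0,\ldots,\vxi_{s_N(\rho)-1}$ is given by \eqref{eq:P_in_terms_of_V_proofs}, the conditional expectation of $\barphi(\xi^{i_N(\rho)}_{s_N(\rho)})$ is precisely $(V^{i_N(\rho)}_{s_N(\rho)})^{-1}\sum_j A^{i_N(\rho)j}_{s_N(\rho)}V^j_{s_N(\rho)-1}\barphi(\xi^j_{s_N(\rho)-1})$; here I must note that conditioning on $\cF_{\rho-1}$ rather than on $\sigma(\vxi_0,\ldots,\vxi_{s_N(\rho)-1})$ is harmless because, within a fixed stage, the one-step conditional independence property means the already-sampled coordinates of $\vxi_{s_N(\rho)}$ carry no extra information about $\xi^{i_N(\rho)}_{s_N(\rho)}$. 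For claim (c), I would bound the bracket: $\barphi(\xi^{i_N(\rho)}_{s_N(\rho)})$ minus the convex combination of $\barphi(\xi^j_{s_N(\rho)-1})$ values (it is a convex combination because $A_{s}$ is doubly stochastic and $V^i_s=\sum_j A^{ij}_s V^j_{s-1}$, so the coefficients $A^{i_N(\rho)j}_{s_N(\rho)}V^j_{s_N(\rho)-1}/V^{i_N(\rho)}_{s_N(\rho)}$ are nonnegative and sum to one) has absolute value at most $\osc{\barphi}=\osc{\varphi}$, and $V^{i_N(\rho)}_{s_N(\rho)}\leq\infnorm{g}$ by Lemma \ref{lem:facts_about_Vs}\eqref{it:boundedness of V}, giving the stated bound $\infnorm{g}\osc{\varphi}/\sqrt{SN}$.

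The substantive part is claim (d). For \eqref{eq:first decompo} I would sum over $\rho$ by grouping according to the stage $s=s_N(\rho)$ and the particle index $i=i_N(\rho)$, so that $\sum_{\rho=1}^{SN}X_\rho = (SN)^{-1/2}\sum_{s=1}^S\sum_{i=1}^N V^i_s\big(\barphi(\xi^i_s)-(V^i_s)^{-1}\sum_j A^{ij}_s V^j_{s-1}\barphi(\xi^j_{s-1})\big)$. The second term simplifies: $\sum_i \sum_j A^{ij}_s V^j_{s-1}\barphi(\xi^j_{s-1}) = \sum_j V^j_{s-1}\barphi(\xi^j_{s-1})$ because $A_s$ is column-stochastic ($\sum_i A^{ij}_s=1$), so the inner double sum telescopes in $s$: $\sum_{s=1}^S\big(\sum_i V^i_s\barphi(\xi^i_s)-\sum_j V^j_{s-1}\barphi(\xi^j_{s-1})\big) = \sum_i V^i_S\barphi(\xi^i_S)-\sum_i V^i_0\barphi(\xi^i_0)$. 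Now $V^i_0=g(\xi^i_0)$, and by the definition of $\barphi$, $\sum_i g(\xi^i_0)\barphi(\xi^i_0) = \sum_i g(\xi^i_0)\varphi(\xi^i_0) - \big(\sum_i g(\xi^i_0)\big)\cdot\frac{\sum_i g(\xi^i_0)\varphi(\xi^i_0)}{\sum_i g(\xi^i_0)} = 0$, so the boundary term at $s=0$ vanishes, leaving $\sum_{\rho}X_\rho = (SN)^{-1/2}\sum_i V^i_S\barphi(\xi^i_S)$; multiplying by $\sqrt{S/N}$ and combining with the $(SN)^{-1/2}$ gives $N^{-1}\sum_i V^i_S\barphi(\xi^i_S)$, which is \eqref{eq:first decompo}. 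For \eqref{eq:second decompo} I would invoke Lemma \ref{lem:facts_about_Vs}\eqref{it:constant weights}: $V^i_S = N^{-1}\sum_j g(\xi^j_0)$ for every $i$, so $N^{-1}\sum_i V^i_S\barphi(\xi^i_S) = \big(N^{-1}\sum_j g(\xi^j_0)\big)\cdot N^{-1}\sum_i\barphi(\xi^i_S)$, and expanding $\barphi=\varphi - \frac{\sum_i g(\xi^i_0)\varphi(\xi^i_0)}{\sum_i g(\xi^i_0)}$ inside $N^{-1}\sum_i\barphi(\xi^i_S)$ produces exactly $\big(N^{-1}\sum_i g(\xi^i_0)\big)\big(N^{-1}\sum_i\varphi(\xi^i_S)\big) - N^{-1}\sum_i g(\xi^i_0)\varphi(\xi^i_0)$, which is \eqref{eq:second decompo}.

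The main obstacle I anticipate is not any single computation but the careful handling of the filtration in claim (b): one must argue that $\E[\barphi(\xi^{i_N(\rho)}_{s_N(\rho)})\mid\cF_{\rho-1}]$ coincides with the conditional expectation given $\sigma(\vxi_0,\ldots,\vxi_{s_N(\rho)-1})$ appearing in \eqref{eq:P_in_terms_of_V_proofs}, which requires the one-step conditional independence of $\xi^1_s,\ldots,\xi^N_s$ given $\vxi_0,\ldots,\vxi_{s-1}$ noted just before the proposition, together with the observation that $\cF_{\rho-1}$ (for $\rho$ in stage $s$) is generated by $\sigma(\vxi_0,\ldots,\vxi_{s-1})$ and the already-sampled coordinates $\xi^1_s,\ldots,\xi^{i_N(\rho)-1}_s$, the latter being conditionally independent of $\xi^{i_N(\rho)}_s$. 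Everything else is a telescoping sum and substitution of the definitions of $\barphi$ and of the $V$'s.
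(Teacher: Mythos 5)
Your proposal is correct and follows essentially the same route as the paper's proof: measurability and boundedness from Lemma \ref{lem:facts_about_Vs}, the zero-mean property from the one-step conditional independence and \eqref{eq:P_in_terms_of_V_proofs}, the telescoping sum over stages using the (column-)stochasticity of $A_s$ and the vanishing of $N^{-1}\sum_i V^i_0\barphi(\xi^i_0)$ for \eqref{eq:first decompo}, and Lemma \ref{lem:facts_about_Vs}\eqref{it:constant weights} for \eqref{eq:second decompo}. Your explicit handling of the filtration subtlety in claim (b) is a welcome refinement of a point the paper passes over silently, but it does not change the argument.
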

\if\proofs1
\begin{proof}
By the definitions of $X_{\rho}$ and $\cF_{\rho}$ we have \eqref{it:measurability} by Lemma \ref{lem:facts_about_Vs}\eqref{it:measurability of V}. Claim \eqref{it:zero mean} follows from the one step conditional independence and \eqref{eq:P_in_terms_of_V_proofs}. Claim \eqref{it:boundedness} follows from Lemma \ref{lem:facts_about_Vs}\eqref{it:boundedness of V}. Since $N^{-1}\sum_{i=1}^{N}V^{i}_{0} \barphi(\xi^{i}_{0}) = 0$, we have the decomposition
\begin{align*}
\frac{1}{N}\sum_{i=1}^{N} V^{i}_{S}\barphi(\xi^{i}_{S}) &= 
\sum_{q = 1}^{S} \left(
\frac{1}{N}\sum_{i_{q}=1}^{N} V^{i_{q}}_{q}\barphi(\xi^{i_{q}}_{q}) 
- \frac{1}{N}\sum_{i_{q-1}=1}^{N} V^{i_{q-1}}_{q-1}\barphi(\xi^{i_{q-1}}_{q-1})
\right)
\end{align*}
Because $A_{q}$ is double stochastic we have $\sum_{j=1}^{N} A_{q}^{ji}=1$ and hence
\begin{align*}
&\frac{1}{N}\sum_{i_{q}=1}^{N} V^{i_{q}}_{q}\barphi(\xi^{i_{q}}_{q}) 
- \frac{1}{N}\sum_{i_{q-1}=1}^{N}V^{i_{q-1}}_{q-1}\barphi(\xi^{i_{q-1}}_{q-1}) \\
&= \frac{1}{N}\sum_{i_{q}=1}^{N} V^{i_{q}}_{q}\barphi(\xi^{i_{q}}_{q}) 
- \frac{1}{N}\sum_{j=1}^{N}\sum_{i_{q-1}=1}^{N}A^{ji_{q-1}}_{q}V^{i_{q-1}}_{q-1}\barphi(\xi^{i_{q-1}}_{q-1}) \\
&= \frac{1}{N}\sum_{i_{q}=1}^{N} V^{i_{q}}_{q}\left(\barphi(\xi^{i_{q}}_{q}) - \frac{1}{V^{i_{q}}_{q}}\sum_{i_{q-1}=1}^{N}A^{i_{q}i_{q-1}}_{q}V^{i_{q-1}}_{q-1}\barphi(\xi^{i_{q-1}}_{q-1})\right) \\
&= \sqrt{\frac{S}{N}}\sum_{i=1}^{N} X_{(q-1)N+i},
\end{align*}
proving \eqref{eq:first decompo}.

By Lemma \ref{lem:facts_about_Vs}\eqref{it:constant weights} and \eqref{eq:V_defn_proofs}, we can prove \eqref{eq:second decompo} by writing
\begin{align*}
&\left(\frac{1}{N}\sum_{i=1}^{N}g(\xi^{i}_{0})\right)\left(\frac{1}{N}\sum_{i=1}^{N}\varphi(\xi^{i}_{S})\right) - \frac{1}{N}\sum_{i=1}^{N} g(\xi^{i}_{0})\varphi(\xi^{i}_{0})\\
&=\frac{1}{N}\sum_{i=1}^{N}V^{i}_{S}\varphi(\xi^{i}_{S})
- \frac{1}{N}\sum_{i=1}^{N} g(\xi^{i}_{0})\varphi(\xi^{i}_{0}) \\
&= \frac{1}{N}\sum_{i=1}^{N} V^{i}_{S}\left(\varphi(\xi^{i}_{S}) - \frac{\sum_{i=1}^{N} g(\xi^{i}_{0})\varphi(\xi^{i}_{0})}{\sum_{i}g(\xi^{i}_{0})}\right) \\
&= \frac{1}{N}\sum_{i=1}^{N} V^{i}_{S}\barphi(\xi^{i}_{S})\qedhere
\end{align*}
\end{proof}
\fi

By Proposition \ref{prop:martingale decomposition} the proof of Proposition \ref{prop:intro_to_aug_resampling} is obtained readily as follows.
\begin{proof}[Proof of Proposition \ref{prop:intro_to_aug_resampling}]
The lack of bias \eqref{eq:aug_res_lack_of_bias} follows by Proposition \ref{prop:martingale decomposition}(\ref{it:zero mean}), \eqref{eq:first decompo}, \eqref{eq:second decompo}, and the tower property of conditional expectations. Bound \eqref{eq:aug_res_L_p_bound} follows by Burkholder-Davis-Gundy inequality and Proposition \ref{prop:martingale decomposition}(\ref{it:boundedness}) by writing
\begin{equation*}
\E\left[\left|\sum_{\rho=1}^{SN} X_{\rho}\right|^r\right] \leq B^{r}_{r}\E\left[\left|\sqrt{\sum_{\rho=1}^{SN} \left|X_{\rho}\right|^2}\right|^{r}\right] \leq B^{r}_{r} \infnorm{g}^{r} \osc{\varphi}^{r}.
\end{equation*}

\end{proof}

\section{Particle filter with augmented resampling}
\label{sec:r2pf}

We now turn to analysing the implications of replacing Algorithm \ref{alg:multinomial_resampling} in BPF with Algorithm \ref{alg:augmented_resampling}. The following mild regularity condition  on the underlying HMM is assumed to hold.
\begin{assumption}\label{ass:potential}
For all $n\geq 0$, $g_{n} \in \boundMeas_{+}(\X)$.
\end{assumption}
\noindent
Under \eqref{ass:potential}, we show that the resulting particle filter is convergent in mean (of order $r\geq 1$). In order to establish uniform in time convergence in mean, the following strong but standard regularity assumption is made \citep{whiteley_et_al14, delmoral04}.
\begin{assumption}\label{ass:regularity}
There exists $\delta \geq 1$ and $\epsilon \in (0,1)$ such that
\begin{equation*}
\sup_{n\geq 0}\sup_{x,y} \frac{g_{n}(x)}{g_{n}(y)} \leq \delta, \qquad \text{and} \qquad f(x,\,\cdot\,) \geq \epsilon f(y,\,\cdot\,),~\forall x,y \in \X^2.
\end{equation*}
\end{assumption}
\noindent

\begin{theorem}\label{thm:convergence} 
Fix $N$ and $S$ and assume \eqref{ass:NS} and \eqref{ass:potential}. If the measures $(\pi^{N}_{n})_{n\geq 0}$ are calculated by Algorithm \ref{alg:bpf} deploying Algorithm \ref{alg:augmented_resampling}, then we have the following:
\begin{enumerate}
\item \label{it:convergence} For all $n\geq 0$ and $r\geq 1$, there exists $C_{n,r} \in \R_{+}$ such that
\begin{equation}\label{eq:convergence ind ass}
\sup_{\varphi\in\boundMeas_{1}(\X)} \E\left[\left|\pi^{N}_{n}(\varphi) - \pi_{n}(\varphi)\right|^{r}\right]^{\frac{1}{r}} \leq C_{n,r}\sqrt{\frac{S}{N}}.
\end{equation}

\item \label{it:uniform convergence} If in addition \eqref{ass:regularity} holds, then for all $r\geq 1$ there exists $C_{r} \in \R_{+}$ such that
\begin{equation*}
\sup_{n\geq 0}\sup_{\varphi\in\boundMeas_{1}(\X)} \E\left[\left|\pi^{N}_{n}(\varphi) - \pi_{n}(\varphi)\right|^{r}\right]^{\frac{1}{r}} \leq C_{r}\sqrt{\frac{S}{N}}.
\end{equation*}
\end{enumerate}
\end{theorem}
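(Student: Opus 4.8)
The plan is to run the standard inductive argument for particle filter convergence, but with the resampling error controlled by Proposition \ref{prop:intro_to_aug_resampling} rather than by the classical multinomial resampling bound. I would introduce the unnormalised Feynman--Kac flow: define $Q_n(\varphi)(x) \defeq g_n(x)\int f(x,\ud y)\varphi(y)$ and the usual decomposition of $\pi_{n+1}$ in terms of $\pi_n$, i.e.\ $\pi_{n+1}(\varphi) = \pi_n(Q_n \one)^{-1}\pi_n(Q_n\varphi)$. Correspondingly, one iteration of the algorithm maps $\pi^N_n$ to $\pi^N_{n+1}$ by: (i) augmented resampling of $(\xi^i_n)$ with weight $g_n$, producing $(\widehat\zeta^i_n)$ with empirical measure $\widehat\pi^N_n$; then (ii) the mutation step $\zeta^i_{n+1}\sim f(\widehat\zeta^i_n,\cdot)$. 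I would write the total error $\pi^N_{n+1}(\varphi)-\pi_{n+1}(\varphi)$ as a sum of three pieces: a mutation (sampling) error $\pi^N_{n+1}(\varphi) - \widehat\pi^N_n(f\varphi)$, a resampling error $\widehat\pi^N_n(f\varphi) - [\text{normalised weighted version of }\pi^N_n]$, and an ``initialisation of the step'' error coming from $\pi^N_n$ already being wrong, i.e.\ the difference between the normalised weighted $\pi^N_n$ and $\pi_{n+1}(\varphi)$; this last term is handled by the induction hypothesis together with the Lipschitz-type stability of the map $\mu\mapsto \mu(Q_n\varphi)/\mu(Q_n\one)$.

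For part (a) I would argue by induction on $n$. The base case $n=0$ is the classical i.i.d.\ Monte Carlo bound $M_r$ for $\pi^N_0$, which gives rate $N^{-1/2}\le \sqrt{S/N}$. For the inductive step: the mutation error is a sum of conditionally independent, bounded, mean-zero terms given $\widehat{\boldsymbol\zeta}_n$, so by Burkholder (Minkowski for the conditional independence plus the standard martingale-in-$i$ trick) it is $O(N^{-1/2})$ in $L_r$. The resampling error is exactly the quantity controlled by \eqref{eq:aug_res_L_p_bound} of Proposition \ref{prop:intro_to_aug_resampling} applied conditionally on $\vxi_0 = \vzeta_n$, with $g = g_n$ and the test function $f\varphi$ (note $\osc{f\varphi}\le \osc{\varphi}\le 2$ since $f$ is a Markov kernel): it is bounded by $B_r\sqrt{S/N}\,\infnorm{g_n}\osc{f\varphi}$ after dividing through by $\pi^N_n(g_n)$, which requires a lower bound on $\pi^N_n(g_n)$ --- here I would use \eqref{ass:potential} ($g_n$ strictly positive and bounded) and the almost-sure lack of bias \eqref{eq:aug_res_lack_of_bias} to control the normalising constant, or more carefully split on the event that $\pi^N_n(g_n)$ is close to $\pi_n(g_n)$ using the induction hypothesis. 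The third term is handled by writing $\pi^N_n(g_n f\varphi)/\pi^N_n(g_n) - \pi_n(g_n f\varphi)/\pi_n(g_n)$ as a ratio and bounding it by a constant (depending on $\osc\varphi$, $\infnorm{g_n}$, and $\pi_n(g_n)^{-1}$) times $\sup_{\psi\in\boundMeas_1}\E[|\pi^N_n(\psi)-\pi_n(\psi)|^r]^{1/r}$, which is the induction hypothesis. Collecting, $C_{n+1,r}$ is obtained from $C_{n,r}$ via an affine recursion, giving the claimed finite constant for each $n$.

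For part (b), the additional ingredient is time-uniformity, and this is the main obstacle: the naive recursion for $C_{n,r}$ from part (a) has a multiplicative constant (coming from $\pi_n(g_n)^{-1}$ and the local stability constant) that need not be $\le 1$, so iterating it blindly gives a bound growing in $n$. Under \eqref{ass:regularity} one instead uses the well-known exponential forgetting of the Feynman--Kac semigroup: the composition of many one-step prediction maps is a contraction in total variation (or in the $\osc$-seminorm) with rate governed by $\epsilon$ and $\delta$, uniformly in $n$ (see \citet{delmoral04}, or the framework of \citet{whiteley_et_al14}). Concretely I would telescope $\pi^N_n(\varphi)-\pi_n(\varphi)$ over the $n$ steps as a sum $\sum_{k=0}^{n} [\,\Phi_{k:n}(\text{error introduced at step }k)\,]$, where $\Phi_{k:n}$ denotes the (normalised, nonlinear) semigroup from time $k$ to time $n$, apply the uniform Lipschitz/contraction bound $\mathrm{Lip}(\Phi_{k:n})\le c\,\rho^{n-k}$ for some $\rho<1$, $c<\infty$ depending only on $\epsilon,\delta$, bound each per-step error in $L_r$ by $C'\sqrt{S/N}$ using the estimates from part (a) (now with a constant uniform in $n$, since \eqref{ass:regularity} bounds $\infnorm{g_n}/\pi_n(g_n)$ uniformly), and sum the geometric series $\sum_{k}c\rho^{n-k}C'\sqrt{S/N}\le \frac{c}{1-\rho}C'\sqrt{S/N} =: C_r\sqrt{S/N}$. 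The one subtlety to be careful about is that the per-step errors are not independent and the Lipschitz constants are random (they depend on the empirical measures); the clean way around this is to take the Lipschitz bound in a form that holds deterministically under \eqref{ass:regularity} (the mixing condition makes the contraction coefficient a deterministic function of $\epsilon,\delta$ only), then apply Minkowski's inequality in $L_r$ to the telescoping sum. I expect the bookkeeping of which quantities are conditioned on what --- and ensuring the conditional application of Proposition \ref{prop:intro_to_aug_resampling} is legitimate at each step --- to be the most delicate part, but no genuinely new idea beyond the classical argument is needed.
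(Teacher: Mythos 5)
Your proposal follows essentially the same route as the paper: part (a) is proved by induction with the same mutation/resampling/propagation decomposition (the paper packages the latter two pieces into Proposition \ref{prop:induction step}\eqref{it:step part a}), and part (b) telescopes the error through the nonlinear semigroup and invokes the uniform contraction afforded by \eqref{ass:regularity} (packaged in Lemma \ref{lem:telescope reformulation}, with $\sum_{p}\alpha_{p,n}\le\delta/\epsilon^{3}$ playing the role of your summed geometric series). The one small refinement worth noting is that the paper never needs a lower bound on the random normalising constant $\pi^{N}_{n}(g_{n})$: the standard decomposition keeps the deterministic $\pi_{n}(g_{n})$ in the denominator and applies Proposition \ref{prop:intro_to_aug_resampling} to the unnormalised quantity $\pi^{N}_{n}(g_{n})\widehat{\pi}^{N}_{n}(\varphi)-\pi^{N}_{n}(g_{n}\varphi)$, which sidesteps the event-splitting you propose.
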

\noindent
Although Theorem \ref{thm:convergence} resembles many existing results on BPF, and its variations, the interpretation is somewhat different. The result is stated under the assumption \eqref{ass:NS} which leaves the convergence rate ambiguous. However, if we write the r.h.s.~of \eqref{eq:convergence ind ass} in terms of $m$ and $M$, we observe that by fixing $m$, \eqref{eq:convergence ind ass} yields the standard $M^{-1/2}$ rate of convergence, and by fixing $M$, a slower $\sqrt{\log_{2}(m)/m}$ rate is obtained. The rate is slower due to the numerator term $\sqrt{S} = \sqrt{\log_{2}(m)}$ which can be intuitively interpreted to trace back to the resampling errors introduced at each stage of augmented resampling. In both cases, by Borel-Cantelli argument, Theorem \ref{thm:convergence} also yields the law of large numbers, i.e.~that $\pi^{mM}_{n}(\varphi) - \pi_{n}(\varphi) \to 0$ almost surely as $m\to\infty$ (resp. $M\to\infty$) and $M$ (resp. $m$) is kept fixed.


While the convergence rate $M^{-1/2}$ that we obtain for fixed $m$ is known to be optimal, the analysis that we carry out to prove Theorem \ref{thm:convergence} does not explicitly imply that also the $\sqrt{\log_{2}(m)/m}$ rate, obtained for fixed $M$, is optimal. However, we conjecture this begin the case. Some evidence supporting this conjecture is given in the unpublished work \citep{heine_et_al_14}, where a CLT for a similar, but not identical, algorithm was shown to have the same scaling factor.



In the following subsections we go through the steps of proving Theorem \ref{thm:convergence}. The more technical proofs are postponed to Appendix \ref{sec:proofs R2PF}.

\subsection{Preliminary results}

The proof of Theorem \ref{thm:convergence}(\ref{it:convergence}) is by induction. The following lemma, whose primary purpose is to initialise the induction, is a special instance of the more general result proved in \citep[Lemma 7.3.3]{delmoral04} and hence we omit the proof.
\begin{lemma}\label{lem:delmoral}
Let $(\zeta^{1},\ldots,\zeta^{N})$ be an i.i.d.~sample from some distribution $\pi$ defined on $(\X,\cX)$. Then there exists a constant $C^{\ast}_{r}\in \R_{+}$ depending only on $r$ such that
\begin{equation*}
\sup_{\varphi\in\boundMeas_{1}(\X)}\lpnorm{r}{\frac{1}{N}\sum_{i=1}^{N} \varphi(\zeta^{i})-\pi(\varphi)} \leq \frac{C^{\ast}_{r}}{\sqrt{N}}.
\end{equation*}
\end{lemma}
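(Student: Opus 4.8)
The plan is to reduce the claim to a single application of the Burkholder--Davis--Gundy (BDG) inequality, exactly the tool already used to prove Proposition \ref{prop:intro_to_aug_resampling}, so that the argument stays within the paper's existing toolkit. First I would center the test function: fix $\varphi \in \boundMeas_{1}(\X)$ and set $Y_{i} \defeq \varphi(\zeta^{i}) - \pi(\varphi)$ for $1 \leq i \leq N$. Since $(\zeta^{i})$ are i.i.d.\ from $\pi$, the $Y_{i}$ are i.i.d., mean zero, and uniformly bounded: because $\pi(\varphi)$ is a $\pi$-average of values of $\varphi$, one has $|Y_{i}| \leq \osc{\varphi} \leq 2\infnorm{\varphi} \leq 2$, and crucially this bound is uniform over all $\varphi \in \boundMeas_{1}(\X)$.

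The key step is to view the partial sums $S_{k} \defeq \sum_{i=1}^{k} Y_{i}$ as a martingale with respect to the filtration $\cF_{k} \defeq \sigma(\zeta^{1},\ldots,\zeta^{k})$; the martingale differences are precisely the $Y_{i}$. Applying BDG gives a constant $B_{r}$ depending only on $r$ with
\begin{equation*}
\E\left[\left|S_{N}\right|^{r}\right] \leq B_{r}^{r}\, \E\left[\left(\sum_{i=1}^{N} Y_{i}^{2}\right)^{r/2}\right].
\end{equation*}
Now I would bound the quadratic variation crudely using the uniform boundedness: $\sum_{i=1}^{N} Y_{i}^{2} \leq 4N$ deterministically, and since $x \mapsto x^{r/2}$ is increasing this yields $\E[|S_{N}|^{r}] \leq B_{r}^{r}\,(4N)^{r/2}$. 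Dividing by $N^{r}$ and taking $r$-th roots gives
\begin{equation*}
\lpnorm{r}{\frac{1}{N}\sum_{i=1}^{N}\varphi(\zeta^{i}) - \pi(\varphi)} = \frac{1}{N}\E\left[\left|S_{N}\right|^{r}\right]^{1/r} \leq \frac{2 B_{r}}{\sqrt{N}},
\end{equation*}
so that $C^{\ast}_{r} \defeq 2 B_{r}$ works.

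The only points requiring care are uniformity and the range of $r$, and both are mild. Uniformity over $\boundMeas_{1}(\X)$ is automatic because the only property of $\varphi$ used is the bound $|Y_{i}| \leq 2$, which holds for every $\varphi$ with $\infnorm{\varphi} \leq 1$; hence the right-hand side is independent of $\varphi$ and the supremum may be taken freely. The resulting $C^{\ast}_{r}$ depends only on the BDG constant $B_{r}$, hence only on $r$, as required. For $r \geq 2$ the same conclusion can alternatively be read off from Rosenthal's inequality, and for $r<2$ from the Marcinkiewicz--Zygmund inequality, but the martingale/BDG route handles all $r \geq 1$ uniformly and matches the style of Proposition \ref{prop:intro_to_aug_resampling}. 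I do not anticipate a genuine obstacle here: the argument is essentially a degenerate, single-stage version of the martingale-difference computation already carried out for augmented resampling, with the crude deterministic bound on the quadratic variation replacing the stagewise weight bounds of Lemma \ref{lem:facts_about_Vs}.
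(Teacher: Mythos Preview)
Your proof is correct. The paper itself does not supply a proof of this lemma; it simply invokes it as a special case of Lemma~7.3.3 in \citep{delmoral04} and omits the argument. Your Burkholder--Davis--Gundy route is perfectly in line with the paper's toolkit---indeed, it is essentially the same computation the paper carries out in the proof of Lemma~\ref{lem:mutation martingale}, where the martingale $X_{\rho} = N^{-1}\sum_{i\leq\rho}(\varphi(\zeta^{i}) - f(\varphi)(\widehat{\zeta}^{i}))$ is bounded via BDG to obtain the factor $2B_{r}/\sqrt{N}$. So your approach not only works but matches the style of the surrounding arguments; the paper just chose to cite the result rather than repeat the calculation.
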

We also frequently use the following result to bound the error introduced by the mutation step of the particle filter. 
\begin{lemma}\label{lem:mutation martingale}
Let $(\widehat{\zeta}^{i},\ldots,\widehat{\zeta}^{N})$ be a $\X^{N}$ valued random variable and let
\begin{equation}\label{eq:joint law}
(\zeta^{i},\ldots,\zeta^{N}) \,\mid\, (\widehat{\zeta}^{i},\ldots,\widehat{\zeta}^{N})  \sim \prod_{i=1}^{N}f(\widehat{\zeta}^{i},\,\cdot\,).
\end{equation}
Then there exists a constant $B_{r} \in \R_{+}$ such that for all $N>0$
\begin{align*}
\sup_{\varphi\in\boundMeas_{1}(\X)}\lpnorm{r}{\frac{1}{N}\sum_{i=1}^{N}\varphi(\zeta^{i}) - \frac{1}{N}\sum_{i=1}^{N}f(\varphi)(\widehat{\zeta}^{i})} \leq \frac{2B_{r}}{\sqrt{N}}.
\end{align*}
\end{lemma}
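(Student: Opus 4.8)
Lemma \ref{lem:mutation martingale} — This is a standard "mutation step" bound, and I would prove it by a martingale difference / conditional-independence argument exactly mirroring the proof of Proposition \ref{prop:martingale decomposition}.

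The plan is as follows. Conditionally on $(\widehat{\zeta}^{1},\ldots,\widehat{\zeta}^{N})$, the variables $\zeta^{1},\ldots,\zeta^{N}$ are independent with $\zeta^{i}\sim f(\widehat{\zeta}^{i},\,\cdot\,)$, so $\E[\varphi(\zeta^{i})\mid \widehat{\zeta}^{1},\ldots,\widehat{\zeta}^{N}] = f(\varphi)(\widehat{\zeta}^{i})$. Hence the quantity of interest,
\begin{equation*}
\frac{1}{N}\sum_{i=1}^{N}\varphi(\zeta^{i}) - \frac{1}{N}\sum_{i=1}^{N}f(\varphi)(\widehat{\zeta}^{i}) = \frac{1}{N}\sum_{i=1}^{N}\big(\varphi(\zeta^{i}) - f(\varphi)(\widehat{\zeta}^{i})\big),
\end{equation*}
is a normalised sum of conditionally independent, conditionally mean-zero terms. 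I would set up the filtration $\cG_{0} \defeq \sigma(\widehat{\zeta}^{1},\ldots,\widehat{\zeta}^{N})$ and $\cG_{i} \defeq \cG_{0}\vee\sigma(\zeta^{1},\ldots,\zeta^{i})$ for $1\leq i\leq N$, and define the martingale increments $D_{i} \defeq N^{-1}(\varphi(\zeta^{i}) - f(\varphi)(\widehat{\zeta}^{i}))$. Then $\E[D_{i}\mid\cG_{i-1}] = 0$ and $|D_{i}| \leq \operatorname{osc}(\varphi)/N \leq 2\|\varphi\|/N \leq 2/N$ for $\varphi\in\boundMeas_{1}(\X)$ (one can centre $\varphi$ so that the oscillation, rather than twice the sup norm, appears, but the factor $2$ in the stated bound means this refinement is unnecessary).

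With this in place, apply the Burkholder–Davis–Gundy inequality to the terminal value $\sum_{i=1}^{N} D_{i}$: there is a constant $B_r$ depending only on $r$ with
\begin{equation*}
\E\left[\left|\sum_{i=1}^{N}D_{i}\right|^{r}\right] \leq B_r^{r}\,\E\left[\left(\sum_{i=1}^{N}|D_{i}|^{2}\right)^{r/2}\right] \leq B_r^{r}\left(\frac{4}{N^{2}}\cdot N\right)^{r/2} = B_r^{r}\left(\frac{2}{\sqrt{N}}\right)^{r},
\end{equation*}
using $|D_{i}|\leq 2/N$ so that $\sum_{i}|D_i|^2 \leq 4/N$. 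Taking $r$-th roots and then the supremum over $\varphi\in\boundMeas_{1}(\X)$ gives the claimed bound $2B_r/\sqrt{N}$. This is essentially the same computation that closes the proof of Proposition \ref{prop:intro_to_aug_resampling}, just with independent rather than resampling-type increments.

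There is no real obstacle here — the result is elementary and the statement even absorbs constants generously (e.g.\ it allows $2B_r$ rather than a sharp constant, and uses $\|\varphi\|$ in place of the oscillation). The only minor point to be careful about is that the bound must be uniform in the (arbitrary) law of $(\widehat{\zeta}^{1},\ldots,\widehat{\zeta}^{N})$; this is automatic because the increment bound $|D_i|\leq 2/N$ holds pathwise and BDG is distribution-free once the martingale structure is fixed. One could alternatively cite a standard Marcinkiewicz–Zygmund / Rosenthal-type inequality for conditionally independent arrays and skip the explicit martingale construction, but the martingale-difference route is self-contained and consistent with the style already used in the paper.
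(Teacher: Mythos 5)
Your proposal is correct and follows essentially the same route as the paper: the paper also forms the partial-sum martingale $X_{\rho}=N^{-1}\sum_{i=1}^{\rho}(\varphi(\zeta^{i})-f(\varphi)(\widehat{\zeta}^{i}))$ with respect to the filtration $\sigma(\widehat{\zeta}^{1},\ldots,\widehat{\zeta}^{N},\zeta^{1},\ldots,\zeta^{\rho})$ and concludes by the Burkholder--Davis--Gundy inequality. Your write-up in fact supplies the increment bound $|D_{i}|\leq 2/N$ and the quadratic-variation computation that the paper leaves implicit.
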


Instead of a proof by induction, the proof of Theorem \ref{thm:convergence}(\ref{it:uniform convergence}) is based on the proof of Theorem 7.4.4 in~\citep{delmoral04}. For any probability measure $\mu$ on $(\X,\cX)$ and any $\varphi\in\boundMeas(\X)$, we define
\begin{equation*}
\Phi_{0}(\pi_{-1}^{N})\defeq\pi_{0}, \qquad\text{and}\qquad \Phi_{n}(\mu)(\varphi) \defeq \frac{\mu(g_{n-1}f(\varphi))}{\mu(g_{n-1})}, \qquad n>0.
\end{equation*}
We note that $\Phi_{n}$ is the mapping which generates the sequence of exact measures $(\pi_{n})_{n\geq0}$ by the recursion 
\begin{equation*}
\pi_{n} = \Phi_{n}(\pi_{n-1}), \qquad n \geq 0.
\end{equation*}
By using these notations, we have the following corollary of the proof of Theorem 7.4.4 in~\citep{delmoral04}. 
\begin{lemma}\label{lem:telescope reformulation}
Assume \eqref{ass:regularity}. Then for all $0 \leq n$, $0 \leq p \leq n$ and $\varphi \in \boundMeas_{1}(X)$, there exists $\alpha_{p,n} \in \R_{+}$ and $\varphi_{p,n,\varphi} \in \boundMeas_{1}(\X)$ such that 
\begin{equation*}
\big|\pi^{N}_{n}(\varphi) - \pi_{n}(\varphi)\big| \leq \sum_{p=0}^{n} \alpha_{p,n}\big|\big(\pi^{N}_{p}-\Phi_{p}(\pi^{N}_{p-1})\big)(\varphi_{p,n,\varphi})\big|
\end{equation*}
and $\sum_{p=0}^{n} \alpha_{p,n} \leq \delta/\epsilon^3$.
\end{lemma}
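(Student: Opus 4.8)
The statement to prove is Lemma~\ref{lem:telescope reformulation}, which is a standard-type "telescoping decomposition with geometric stability" estimate for the exact filter operators $\Phi_n$ under the strong mixing condition \eqref{ass:regularity}. Since the lemma is explicitly advertised as a corollary of the proof of Theorem~7.4.4 in \citep{delmoral04}, my plan is to reconstruct that argument in the present notation.

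\textbf{The plan.} First I would telescope the error $\pi^N_n(\varphi) - \pi_n(\varphi)$ along the composition of one-step maps. Write $\pi_n = \Phi_n \circ \Phi_{n-1} \circ \cdots \circ \Phi_{p+1}(\pi_p)$ and, for $0 \le p \le n$, denote by $\Phi_{p+1:n}$ the composed map carrying a measure at time $p$ to a measure at time $n$ (with $\Phi_{n+1:n} = \mathrm{Id}$). Then
\begin{equation*}
\pi^N_n(\varphi) - \pi_n(\varphi) = \sum_{p=0}^{n} \Big[ \Phi_{p+1:n}(\pi^N_p)(\varphi) - \Phi_{p+1:n}(\Phi_p(\pi^N_{p-1}))(\varphi) \Big],
\end{equation*}
using the convention $\Phi_0(\pi^N_{-1}) = \pi_0$ so that the $p=0$ term measures the Monte Carlo error in the initial i.i.d. sample, and noting the last term ($p=n$) is exactly $\pi^N_n(\varphi) - \Phi_n(\pi^N_{n-1})(\varphi)$. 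Each summand is the difference of the \emph{same} nonlinear map $\Phi_{p+1:n}$ applied to two measures that agree except for the single-step sampling error at time $p$.

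\textbf{Key steps.} The second and central step is to control $\big| \Phi_{p+1:n}(\mu)(\varphi) - \Phi_{p+1:n}(\nu)(\varphi) \big|$ when $\mu, \nu$ differ by the time-$p$ empirical-vs-predicted discrepancy. Here I would invoke the two standard consequences of \eqref{ass:regularity}: (i) the multiplicative bound $\sup_{x,y} g_n(x)/g_n(y) \le \delta$ controls the normalisation ratios appearing when one differentiates $\mu \mapsto \mu(g f(\varphi))/\mu(g)$, and (ii) the Doeblin-type minorisation $f(x,\cdot) \ge \epsilon f(y,\cdot)$ makes each one-step update a strict contraction in total variation (oscillation) with coefficient $1-\epsilon$, or more precisely makes the composed semigroup contract geometrically. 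Unwinding the quotient structure of $\Phi$, one shows each summand is bounded by a constant $\alpha_{p,n}$ times $\big| (\pi^N_p - \Phi_p(\pi^N_{p-1}))(\varphi_{p,n,\varphi}) \big|$ for a suitable test function $\varphi_{p,n,\varphi} \in \boundMeas_1(\X)$ — this $\varphi_{p,n,\varphi}$ is essentially (a normalised, oscillation-rescaled version of) the function obtained by pulling $\varphi$ back through the dual of $\Phi_{p+1:n}$, i.e. through the Feynman–Kac semigroup $Q_{p+1}\cdots Q_n$ acting on $\varphi$ and then centering. The coefficient $\alpha_{p,n}$ collects one factor of $\delta$ from the ratio at stage $p$ and a geometric factor $(1-\epsilon)^{n-p}$ (up to constants) from the $n-p$ contractions performed by $\Phi_{p+1:n}$ on the oscillation of the pulled-back function.

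\textbf{Closing the sum.} Finally I would sum the geometric bounds: $\sum_{p=0}^n \alpha_{p,n} \le \delta \sum_{k=0}^{\infty} (1-\epsilon)^k \cdot (\text{bounded factor})$, and check that the constants can be arranged so that the total is at most $\delta/\epsilon^3$. (The $\epsilon^3$ rather than $\epsilon^1$ is the price of the crude way the minorisation constant enters the contraction estimate for the normalised quotient — two of the $\epsilon$'s typically come from lower-bounding $\mu(g_{n-1})$-type normalisations and controlling the ratio of the updated measures, the third from the $\sum (1-\epsilon)^k = 1/\epsilon$ geometric series.) I expect the main obstacle to be precisely this bookkeeping: writing $\Phi_{p+1:n}(\mu) - \Phi_{p+1:n}(\nu)$ in a form where the nonlinearity (the normalising denominators) is linearised cleanly, identifying the correct centered, $\sup$-norm-$\le 1$ test function $\varphi_{p,n,\varphi}$, and verifying that the accumulated constants really collapse to the clean bound $\delta/\epsilon^3$ rather than something larger. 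Since all of this is carried out in detail in the proof of Theorem~7.4.4 of \citep{delmoral04}, I would present it concisely and refer the reader there for the routine parts.
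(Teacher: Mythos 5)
Your proposal is correct and follows exactly the route the paper itself relies on: the paper gives no proof of this lemma but presents it as a corollary of the proof of Theorem 7.4.4 in \citep{delmoral04}, which is precisely the telescoping decomposition through the composed maps $\Phi_{p+1:n}$, the Dobrushin-type contraction from the minorisation condition, and the geometric summation of the resulting coefficients that you describe. The only caveat is that your sketch defers the constant bookkeeping (arriving at exactly $\delta/\epsilon^{3}$) to the cited reference, but so does the paper.
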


\subsection{Convergence}

Before the proof of Theorem \ref{thm:convergence}, we introduce an intermediate result, Proposition \ref{prop:induction step} below, consisting of two parts. The first part establishes the induction step needed for the proof of Theorem \ref{thm:convergence}\eqref{it:convergence}. The second part is used in the proof of Theorem \ref{thm:convergence}\eqref{it:uniform convergence} and it establishes a uniform bound for the local error terms $\pi^{N}_{n} - \Phi_{n}(\pi^{N}_{n-1})$ appearing in Lemma \ref{lem:telescope reformulation}.
%
For the brevity of notation we introduce the following probability measures
\begin{align}\label{eq:pi tilde and pi hat}
\widetilde{\pi}^{N}_{n} \defeq \frac{1}{N}\sum_{i=1}^{N} f(\widehat{\zeta}_{n-1}^{i},\,\cdot\,) \qquad 
\widehat{\pi}_{n}^{N} \defeq \frac{1}{N}\sum_{i=1}^{N}\delta_{\widehat{\zeta}^{i}_{n}}, \qquad n > 0. 
\end{align}
We also define
\begin{equation*}
\widehat{\pi}_{n}(\ud x) \defeq \frac{g_{n}(x)\pi_{n}(\ud x)}{\pi_{n}(g_{n})}, \qquad n \geq 0,
\end{equation*}
which is the exact filtering distribution associated with the HMM \eqref{eq:HMM}.

%
\begin{proposition}\label{prop:induction step}
Assume \eqref{ass:NS} and \eqref{ass:potential}. 
\begin{enumerate}
\item \label{it:step part a}
If for some $n\geq 0$ and some $r\geq 1$ there exists $C_{n,r} \in \R_{+}$ such that
\begin{equation}\label{eq:convergence precondition}
\sup_{\varphi\in\boundMeas_{1}(\X)}\E\left[\left| \pi^{N}_{n}(\varphi) - \pi_{n}(\varphi) \right|^{r}\right]^{\frac{1}{r}} \leq C_{n,r}\sqrt{\frac{S}{N}},
\end{equation} 
then there also exists $\widehat{C}_{n,r}\in \R$ such that
\begin{equation*}
\sup_{\varphi\in\boundMeas_{1}(\X)} \E\left[\left|\widehat{\pi}^{N}_{n}(\varphi) - \widehat{\pi}_{n}(\varphi)\right|^{r}\right]^{\frac{1}{r}} \leq \widehat{C}_{n,r}\sqrt{\frac{S}{N}}.
\end{equation*}
\item\label{it:step part b} If in addition \eqref{ass:regularity} holds, then for all $r\geq 1$ there exists $\widehat{C}_{r}\in \R_{+}$ such that
\begin{equation*}
\sup_{n\geq 0}\sup_{\varphi\in\boundMeas_{1}(\X)} \E\left[\left|\pi^{N}_{n}(\varphi) - \Phi_{n}(\pi^{N}_{n-1})(\varphi)\right|^{r}\right]^{\frac{1}{r}} \leq \widehat{C}_{r}\sqrt{\frac{S}{N}} .
\end{equation*}
\end{enumerate}
\end{proposition}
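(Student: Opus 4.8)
The plan is to prove the two parts separately, with part \eqref{it:step part a} handling the resampling-plus-reweighting step and part \eqref{it:step part b} adding the mutation step under the strong mixing condition \eqref{ass:regularity}. For part \eqref{it:step part a}, I would decompose the error $\widehat\pi^N_n(\varphi) - \widehat\pi_n(\varphi)$ by inserting intermediate quantities. Recall that $\widehat\pi^N_n = N^{-1}\sum_i \delta_{\widehat\zeta^i_n}$ is the empirical measure of the output of augmented resampling applied to $(\zeta^i_n)_i$ with weights $g_n$, and $\widehat\pi_n(\varphi) = \pi_n(g_n\varphi)/\pi_n(g_n)$. The natural decomposition writes
\begin{align*}
\widehat\pi^N_n(\varphi) - \widehat\pi_n(\varphi)
&= \left(\widehat\pi^N_n(\varphi) - \frac{\pi^N_n(g_n\varphi)}{\pi^N_n(g_n)}\right)
+ \left(\frac{\pi^N_n(g_n\varphi)}{\pi^N_n(g_n)} - \frac{\pi_n(g_n\varphi)}{\pi_n(g_n)}\right).
\end{align*}
The second bracket is controlled by the precondition \eqref{eq:convergence precondition}: since $g_n$ is bounded and strictly positive, $\pi_n(g_n) > 0$ (one would invoke boundedness of $1/\pi_n(g_n)$, which holds for fixed $n$ under \eqref{ass:potential} and, uniformly in $n$, under \eqref{ass:regularity}), and a standard ratio-of-means manipulation plus Minkowski turns the $L_r$ bounds on $\pi^N_n(g_n\varphi) - \pi_n(g_n\varphi)$ and $\pi^N_n(g_n) - \pi_n(g_n)$ into an $L_r$ bound on the ratio difference, again of order $\sqrt{S/N}$. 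For the first bracket, this is exactly the fluctuation of augmented resampling about its conditional mean: by Proposition \ref{prop:intro_to_aug_resampling}, conditioning on $(\zeta^i_n)_i$,
\begin{equation*}
\E\!\left[\widehat\pi^N_n(\varphi)\,\middle|\,(\zeta^i_n)_i\right] = \frac{\sum_i g_n(\zeta^i_n)\varphi(\zeta^i_n)}{\sum_i g_n(\zeta^i_n)} = \frac{\pi^N_n(g_n\varphi)}{\pi^N_n(g_n)},
\end{equation*}
and the $L_r$ bound \eqref{eq:aug_res_L_p_bound} gives
\begin{equation*}
\left\|\left(\tfrac1N\textstyle\sum_i g_n(\zeta^i_n)\right)\!\left(\widehat\pi^N_n(\varphi) - \tfrac{\pi^N_n(g_n\varphi)}{\pi^N_n(g_n)}\right)\right\|_r \leq B_r\sqrt{\tfrac SN}\,\|g_n\|\,\osc{\varphi}.
\end{equation*}
Dividing by $N^{-1}\sum_i g_n(\zeta^i_n) \geq \inf g_n$ (using \eqref{ass:potential}) removes the leading factor and yields the claimed bound on the first bracket, with a constant $\widehat C_{n,r}$ absorbing $\|g_n\|$, $1/\inf g_n$, $1/\pi_n(g_n)$ and the precondition constant $C_{n,r}$.

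For part \eqref{it:step part b}, I would combine the argument above with Lemma \ref{lem:mutation martingale} and Lemma \ref{lem:telescope reformulation}. Under \eqref{ass:regularity}, the quantities $\|g_n\|$, $1/\inf g_n$ and $1/\pi_n(g_n)$ are all bounded uniformly in $n$ (the first two via the ratio bound $\delta$ after normalising $g_n$, the last via the standard mixing argument), so the bound from part \eqref{it:step part a} becomes uniform in $n$: there is $\widehat C_r$ with $\|\widehat\pi^N_n(\varphi) - \widehat\pi_n(\varphi)\|_r \leq \widehat C_r\sqrt{S/N}$ whenever the precondition holds with a constant uniform in $n$. But $\Phi_n(\pi^N_{n-1})(\varphi) = \pi^N_{n-1}(g_{n-1}f(\varphi))/\pi^N_{n-1}(g_{n-1}) = \widehat\pi^N_{n-1}$-conditional-mean of $f(\varphi)$-type expression; more precisely one writes
\begin{equation*}
\pi^N_n(\varphi) - \Phi_n(\pi^N_{n-1})(\varphi)
= \big(\pi^N_n(\varphi) - \widetilde\pi^N_n(\varphi)\big) + \big(\widetilde\pi^N_n(\varphi) - \widehat\pi^N_{n-1}(f(\varphi))\big) + \big(\widehat\pi^N_{n-1}(f(\varphi)) - \Phi_n(\pi^N_{n-1})(\varphi)\big).
\end{equation*}
Here the first term is bounded by Lemma \ref{lem:mutation martingale} (applied with $\varphi$, noting $\|f(\varphi)\|\leq\|\varphi\|\leq 1$), giving $O(1/\sqrt N) \leq O(\sqrt{S/N})$; the middle term is the augmented-resampling fluctuation bounded exactly as in part \eqref{it:step part a} but with the test function $f(\varphi)$ and weights $g_{n-1}$, giving $O(\sqrt{S/N})$; and the third term equals $\widehat\pi^N_{n-1}(f(\varphi)) - \widehat\pi_{n-1}^{(\cdot)}$-type difference... actually, it is exactly $\big(\tfrac1N\sum_i f(\varphi)(\widehat\zeta^i_{n-1})\big) - \tfrac{\pi^N_{n-1}(g_{n-1}f(\varphi))}{\pi^N_{n-1}(g_{n-1})}$, which is again the conditionally-centred augmented resampling error and is already contained in the middle term after regrouping. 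The key point is that \emph{none of these three pieces requires} a convergence bound on $\pi^N_{n-1}$ itself — they are all ``one-step'' fluctuation terms whose $L_r$ norms are controlled purely by $\|g_{n-1}\|$, $1/\inf g_{n-1}$ (bounded under \eqref{ass:regularity}) and the universal constants $B_r$ from Proposition \ref{prop:intro_to_aug_resampling} and Lemma \ref{lem:mutation martingale}. Hence the bound is uniform in $n$ without any inductive hypothesis, which is precisely what part \eqref{it:step part b} claims.

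The main obstacle, and the place where care is needed, is the handling of the ratio $\pi^N_n(g_n\varphi)/\pi^N_n(g_n)$: the denominator $\pi^N_n(g_n)$ is random and one must control its reciprocal in $L_r$ (or more precisely control products such as $\pi^N_n(g_n)\times(\text{error})$ and then divide by a deterministic lower bound). Under \eqref{ass:potential} alone the clean move is to never actually invert the random denominator but instead, as in the display above, keep the factor $N^{-1}\sum_i g_n(\zeta^i_n)$ attached to the augmented-resampling error (exactly the form in which \eqref{eq:aug_res_L_p_bound} is stated) and only at the very end divide by the deterministic bound $\inf g_n > 0$; for the $\pi^N_n \to \pi_n$ comparison term one similarly uses $\pi_n(g_n) > 0$ deterministically and bounds $|\pi^N_n(g_n) - \pi_n(g_n)|$ and $|\pi^N_n(g_n\varphi) - \pi_n(g_n\varphi)|$ via \eqref{eq:convergence precondition} applied to the normalised test functions $g_n/\|g_n\|$ and $g_n\varphi/\|g_n\|$. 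A secondary bookkeeping point is that $\osc{\varphi}\leq 2\|\varphi\|\leq 2$ for $\varphi\in\boundMeas_1(\X)$, and likewise $\osc{f(\varphi)}\leq 2$, so all oscillation factors reduce to absolute constants; and in part \eqref{it:step part b} one must confirm the uniform-in-$n$ lower bound on $\pi_n(g_n)$, which is the classical consequence of the mixing condition \eqref{ass:regularity} (it follows since $\pi_n$ can be written as a normalised Feynman–Kac flow and the ratio bound $\delta$ controls oscillations of the normalising constants).
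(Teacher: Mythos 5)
Your overall strategy coincides with the paper's: split the error into the augmented-resampling fluctuation about its conditional mean (controlled by Proposition \ref{prop:intro_to_aug_resampling}) plus a ratio-of-means comparison (controlled by the precondition \eqref{eq:convergence precondition}), and observe that part \eqref{it:step part b} needs no inductive hypothesis because it compares $\pi^{N}_{n}$ with $\Phi_{n}(\pi^{N}_{n-1})$ rather than with $\pi_{n}$. Part \eqref{it:step part b} is essentially correct, modulo the cosmetic point that your ``middle term'' $\widetilde{\pi}^{N}_{n}(\varphi)-\widehat{\pi}^{N}_{n-1}(f(\varphi))$ is identically zero by the definition \eqref{eq:pi tilde and pi hat}, so your three-term split collapses to the paper's two-term split (mutation error via Lemma \ref{lem:mutation martingale}, resampling error via Proposition \ref{prop:intro_to_aug_resampling}); and under \eqref{ass:regularity} one has $g_{n}\geq \infnorm{g_{n}}/\delta$ pointwise, hence $\pi^{N}_{n}(g_{n})\geq\infnorm{g_{n}}/\delta$ almost surely, which is exactly the lower bound the paper divides by.

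The genuine gap is in part \eqref{it:step part a}, where you remove the random prefactor $N^{-1}\sum_{i}g_{n}(\zeta^{i}_{n})$ by dividing by $\inf g_{n}$ and absorb $1/\inf g_{n}$ into $\widehat{C}_{n,r}$. Assumption \eqref{ass:potential} only says $g_{n}\in\boundMeas_{+}(\X)$, i.e.~$g_{n}$ is bounded, measurable and pointwise strictly positive; it does \emph{not} give $\inf_{x}g_{n}(x)>0$ (consider $g_{n}(x)=e^{-x^{2}}$ on $\X=\R$), so your constant may be infinite. Part \eqref{it:step part a} assumes only \eqref{ass:NS} and \eqref{ass:potential}, so this step fails as written. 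The paper sidesteps it by never inverting a quantity that can degenerate: with $\barphi_{n}\defeq\varphi-\pi^{N}_{n}(g_{n}\varphi)/\pi^{N}_{n}(g_{n})$ it uses the identity
\begin{equation*}
\widehat{\pi}^{N}_{n}(\barphi_{n}) = \frac{\pi^{N}_{n}(g_{n})\widehat{\pi}^{N}_{n}(\varphi)-\pi^{N}_{n}(g_{n}\varphi)}{\pi_{n}(g_{n})} + \frac{\widehat{\pi}^{N}_{n}(\barphi_{n})\big(\pi_{n}(g_{n})-\pi^{N}_{n}(g_{n})\big)}{\pi_{n}(g_{n})},
\end{equation*}
so the only division is by the deterministic, strictly positive $\pi_{n}(g_{n})$; the first term is handled by Proposition \ref{prop:intro_to_aug_resampling} and the second by $|\widehat{\pi}^{N}_{n}(\barphi_{n})|\leq 2$ together with the precondition applied to $g_{n}/\infnorm{g_{n}}$. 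Replacing your $\inf g_{n}$ step with this add-and-subtract, the rest of your argument goes through and yields the paper's constant $\widehat{C}_{n,r}=(2B_{r}+4C_{n,r})\infnorm{g_{n}}/\pi_{n}(g_{n})$.
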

\noindent
Part \ref{it:step part a}) introduces the precondition \eqref{eq:convergence precondition} to bound the local error which effectively leads to the proof of Theorem \ref{thm:convergence}\eqref{it:convergence} being by induction. Under the assumption \eqref{ass:regularity} in part \ref{it:step part b}) such condition is not needed and the analysis becomes somewhat simpler.

\begin{proof}[Proof of Theorem \ref{thm:convergence}]
Fix $r\geq 1$. The proof of part \ref{it:convergence}) is by induction in $n \geq 0$.  
The induction is initialised by observing that at rank $n=0$, \eqref{eq:convergence ind ass} holds by Lemma \ref{lem:delmoral}. Suppose now that \eqref{eq:convergence ind ass} holds at some rank $n\geq 0$. By Minkowski's inequality, and the fact that $\pi_{n+1}(\varphi) = \widehat{\pi}_{n}(f(\varphi))$, we have
\begin{align*}
\sup_{\varphi\in\boundMeas_{1}(\X)}\lpnorm{r}{\pi^{N}_{n+1}(\varphi) - \pi_{n+1}(\varphi)} &\leq \sup_{\varphi\in\boundMeas_{1}(\X)}\lpnorm{r}{\pi^{N}_{n+1}(\varphi) - \widehat{\pi}^{N}_{n}(f(\varphi))} \\
&~+\sup_{\varphi\in\boundMeas_{1}(\X)}\lpnorm{r}{\widehat{\pi}^{N}_{n}(f(\varphi)) - \widehat{\pi}_{n}(f(\varphi))}.
\end{align*}
By applying Lemma \ref{lem:mutation martingale} and Proposition \ref{prop:induction step}, respectively, to the first and the second term on the r.h.s., we obtain the bound
\begin{align*}
\sup_{\varphi\in\boundMeas_{1}(\X)}\lpnorm{r}{\pi^{N}_{n+1}(\varphi) - \pi_{n+1}(\varphi)} &\leq 2B_{r}\sqrt{\frac{S}{N}} + \widehat{C}_{n,r}\sqrt{\frac{S}{N}}, 
\end{align*}
and thus \eqref{eq:convergence ind ass} holds at rank $n+1$ with $C_{n+1,r} = 2B_{r}+\widehat{C}_{n,r}$.

For part \ref{it:uniform convergence}) we have by Lemma \ref{lem:telescope reformulation}
\begin{align}
\E[|\pi^{N}_{n}(\varphi)-\pi_{n}(\varphi)|^{r}]^{\frac{1}{r}} 
\leq \sum_{p=0}^{n} \alpha_{p,n}\E\left[\left|\left(\pi^{N}_{p}-\Phi_{p}(\pi^{N}_{p-1})\right)\left(\varphi_{p,n,\varphi}\right)\right|^{r}\right]^{\frac{1}{r}} \label{eq:telescope decompo}
\leq \widehat{C}_{r}\frac{\delta}{\epsilon^3}\sqrt{\frac{S}{N}},
\end{align}
where the second inequality follows from Proposition \ref{prop:induction step}\eqref{it:step part b} and Lemma \ref{lem:telescope reformulation}.
\end{proof}

\section{Augmented resampling for particle islands}
\label{sec:another r2 resampler}



\if0
{\color{red}
At each stage of augmented resampling, each PE effectively resamples $M$ particles from a set of $2M$ particles. An efficient way of implementing this resampling step is obtained by observing that both PEs in each pair of communicating PEs perform the resampling on the same set of $2M$ particles which is obtained as the union of particles associated with the PE itself or its paired PE.

At each stage of augmented resampling, PE$i$, for each $1\leq i \leq m$, performs a multinomial resampling whereby $M$ particles are sampled with replacement from the set of $2M$ particles that is obtained as the union of the samples associated with PE$i$ itself and its paired PE.

and at each stage, the paired PEs perform this resampling on identical sets of $2M$ particles. If we define
\begin{equation*}
\cV^{i}_{s} \defeq \sum_{j=1}^{M} V^{(i-1)M+j}_{s}, \qquad 1 \leq i \leq m,
\end{equation*}
and let $i_{0},i_{1} \in \{1,\ldots,m\}$ denote the indices of two paired PEs at stage $s$, then the total number of particles resampled from the island associated with PE$i_{0}$, i.e.~the set $\{\xi^{(i_{0}-1)M+1}_{s},\ldots,\xi^{(i_{0}-1)M+M}_{s}\}$ by the paired PEs PE$i_{0}$ and PE$i_{1}$ jointly has the law
\begin{equation*}
M_{0} \sim \mathrm{Binomial}\left(2M,\frac{\cV^{i_{0}}_{s}}{\cV^{i_{0}}_{s} + \cV^{i_{1}}_{s}}\right),
\end{equation*}
and trivially the number of particles generated from the island associated with PE$i_{1}$ is $M_{1} \defeq 2M-M_{0}$. An efficient way of implementing the resampling at stage $s$ is thus obtained by first generating the random variables $M_{0}$ and $M_{1}$. Then PE$i_{0}$ resamples $M_{0}$ samples from its associated island, PE$i_{1}$ resamples $M_{1}$ samples from its associated island and finally, the PE with larger 
 
 operation is to first generate one binomial random number for each pair of interacting PEs to indicate how many particles are resampled from each PE. After each PE has completed the resampling, the PEs with more resampled particles send the surplus particles to the paired PE with fewer than $M$ resampled particles. In the end, each PE has exactly $M$ particles. 

Consider the case scenario where the total weight of the $i$th PE, 
\begin{equation*}
\widehat{W}^{i}_{n} \defeq \frac{1}{M}\sum_{j=1}^{M}g_{n}(\zeta^{(i-1)M+j}_{n}),\qquad 1\leq i \leq m,
\end{equation*}
is close to zero for all but one PE. If the resampling at a single stage of the augmented resampling is implemented as described above, the sole PE with non-zero weight would have to resample $2M$ particles from a sample of size $M$. Obviously, this is computationally more expensive than resampling $M$ particles. 
}
\fi

So far we have seen that by replacing the multinomial resampling (Algorithm \ref{alg:multinomial_resampling}) in the BPF with the augmented resampling (Algorithm \ref{alg:augmented_resampling}), we obtain a convergent approximation of $(\pi_{n})_{n\geq 0}$. However, the proposed algorithm has some shortcomings in efficiency which will address in this section.

First, we observe that at each stage of Algorithm \ref{alg:augmented_resampling}, each PE resamples $M$ particles out of $2M$ particles, which is in general more computationally expensive than resampling $M$ out of $M$ particles. Second, we observe that in order to do the resampling, a PE must receive the $M$ individual particle weights from the paired PE, which may imply a notable communication cost, especially for large $M$. In this section we propose a modification which addresses both of these sources of computation and communication cost; in the proposed method each PE resamples $M$ particles out of $M$ particles and communicates only a single weight with its paired PE at each stage.

The proposed modification is reminiscent to the IPF algorithm of \citet{verge_et_al15} with the exception that the between island (i.e.~between PE) resampling is done in multiple stages by means of augmented resampling. We dub the algorithm \emph{augmented island resampling particle filter (AIRPF)} and it is described in Algorithm \ref{alg: ir2pf} below, where we also use the shorthand notations, 
\begin{align*}
\widecheck{\vzeta}_{n} \defeq (\widecheck{\zeta}^{1}_{n},\ldots,\widecheck{\zeta}^{N}_{n})\qquad \text{ and }\qquad
\widecheck{\vW}_{n} \defeq (\widecheck{W}_{n}^{1},\ldots,\widecheck{W}_{n}^{N}), \qquad n\geq 0,
\end{align*}
where $\widecheck{\zeta}^{i}_{n}$ and $\widecheck{W}_{n}^{i}$ for all $1\leq i\leq N$ will be defined below by Algorithms \ref{alg: ir2pf} and \ref{alg:grouped mn}.


\begin{algorithm}[!ht]
\caption{Augmented Island Resampling Particle Filter}
\label{alg: ir2pf}
\begin{algorithmic}

\For{$i=1,\ldots,mM$}
	\State $\zeta^{i}_{0} \sim \pi_{0}$
\EndFor

\For{$n\geq 1$}
	\State $(\widecheck{\vzeta}_{n-1},\widecheck{\vW}_{n-1}) \leftarrow $\textsc{WithinIslandResample}$(\vzeta_{n-1},g_{n-1})$
	\State $\widecheck{g}_{n-1}(\,\cdot\,) \leftarrow \sum_{i=1}^{N}\widecheck{W}^{i}_{n-1}\ind[\,\cdot\, = \widecheck{\zeta}^{i}_{n-1}]$
	\State $\widehat{\vzeta}_{n-1} \leftarrow $\textsc{AugmentedIslandResample}$(\widecheck{g}_{n-1}, \widecheck{\vzeta}_{n-1})$

	\For{$i=1,\ldots,mM$}
		\State $\zeta^{i}_{n} \sim f(\widehat{\zeta}^{i}_{n-1},\,\cdot\,)$
	\EndFor
	
\EndFor	
\end{algorithmic}
\end{algorithm}

Essentially AIRPF has two resampling steps. First is the within island (i.e.~within PE) resampling step which preforms multinomial resampling of $M$ particles within each PE. Subsequently, in the second step, the $m$ groups of $M$ particles per PE are resampled by duplicating the entire samples of size $M$ without selecting individual particles within the samples. These two resampling subroutines will be analysed theoretically in the following two sections. The analysis is analogous to that conducted for the augmented resampling algorithm in Section \ref{sec:augmented analysis}. The more technical proof are postponed to Appendix \ref{sec:proof for modified algorithm}.

\subsection{Within island resampling}

For a formal description of \textsc{WithinIslandResample} we define $A \in \R^{mM\times mM}$ as
\begin{equation*}
A \defeq \Id_{m} \otimes \ones_{1/M},
\end{equation*}
and notations 
\begin{align*}
\vxi_{\mathrm{in}} \defeq (\xi^{1}_{\mathrm{in}},\ldots,\xi^{N}_{\mathrm{in}}),~\vW_{\mathrm{out}} \defeq (W^{1}_{\mathrm{out}},\ldots,W^{N}_{\mathrm{out}}),
~\text{ and }~
\vxi_{\mathrm{out}} \defeq (\xi^{1}_{\mathrm{out}},\ldots,\xi^{N}_{\mathrm{out}}).
\end{align*}
The within island resampling then proceeds as described in Algorithm \ref{alg:grouped mn}.
\begin{algorithm}[!ht]
\caption{Within island resample}
\label{alg:grouped mn}
\begin{algorithmic}[1]
\State {$(\vxi_{\mathrm{out}},\vW_{\mathrm{out}})=\textsc{WithinIslandResample}\left(\vxi_{\mathrm{in}},g\right)$}{}
\For{$i=1,\ldots,N$}
	\State $W^{i}_{\mathrm{out}} \leftarrow \sum_{j=1}^{N}A^{ij}g(\xi^{i}_{\mathrm{in}})$
	\State $\xi^{i}_{\mathrm{out}} \sim  (W^{i}_{\mathrm{out}})^{-1}\sum_{j=1}^{N}A^{ij}g(\xi^{i}_{\mathrm{in}})\delta_{\xi^{j}_{\mathrm{in}}}$ \label{it:withinisland cond dist}
\EndFor	

\end{algorithmic}
\end{algorithm}

From Algorithm \ref{alg:grouped mn} and the definition of $A$ we obtain the following expression for the  weights $\vW_{\mathrm{out}}$ returned by Algorithm \ref{alg:grouped mn}
\begin{equation}\label{eq:out weight}
W^{i}_{\mathrm{out}} \defeq \frac{1}{M}\sum_{j=1}^{M} g(\xi^{(k-1)M+j}_{\mathrm{in}}), \qquad (k-1)M < i \leq kM, \qquad 1\leq k \leq m.
\end{equation}
Note in particular that for any $1 \leq k \leq m$ the weights with indices in $\{(k-1)M+1,\ldots,kM\}$ are equal.

Proposition \ref{prop:gmn convergence} below is our main result for Algorithm \ref{alg:grouped mn}, and it is analogous to Proposition \ref{prop:intro_to_aug_resampling}; part \ref{it:part a}) establishes a result similar to Proposition \ref{prop:intro_to_aug_resampling} for the entire sample $\vxi_{\mathrm{out}}$ while part \ref{it:part b}) establishes a similar result for individual PEs, i.e.~the sub-samples $(\xi^{(k-1)M+1}_{\mathrm{out}},\ldots,\xi^{kM}_{\mathrm{out}})$, where $1 \leq k \leq m$. 

\begin{proposition}\label{prop:gmn convergence}
Assume \eqref{ass:NS}. If $\vxi_{\mathrm{in}}$ is any $\X^{N}$ valued random variable, $\vxi_{\mathrm{out}}$ is generated according to Algorithm \ref{alg:grouped mn}, $g\in\boundMeas_{+}(\X)$ and we define probability measures
\begin{equation}\label{eq:tmp measures}
\pi^{N} \defeq \dfrac{1}{N}\sum_{i=1}^{N}\delta_{\xi^{i}_{\mathrm{in}}}, \qquad \pi^{M,k} \defeq \dfrac{1}{M} \sum_{i=1}^{M} \delta_{\xi^{(k-1)M+i}_{\mathrm{in}}},\qquad 1 \leq k \leq m,
\end{equation}
and 
\begin{equation*}
\widecheck{\pi}^{N} \defeq \dfrac{\sum_{i=1}^{N}W^{i}_{\mathrm{out}}\delta_{\xi^{i}_{\mathrm{out}}}}{\sum_{i=1}^{N}W^{i}_{\mathrm{out}}} \qquad \widecheck{\pi}^{M,k} \defeq \dfrac{\sum_{i=1}^{M}W^{(k-1)M+i}_{\mathrm{out}}\delta_{\xi^{(k-1)M+i}_{\mathrm{out}}}}{\sum_{i=1}^{M}W^{(k-1)M+i}_{\mathrm{out}}}
\qquad 1 \leq k \leq m,
\end{equation*}
then 
\begin{enumerate}
\item\label{it:part a} there exists $B_{r} \in \R_{+}$ such that for any $\varphi \in \boundMeas(\X)$ 
\begin{align}\label{eq:gmn error}
\E\left[\left|\pi^{N}(g)\widecheck{\pi}^{N}(\varphi)-\pi^{N}(g\varphi)\right|^{r}\right]^{\frac{1}{r}} \leq \frac{B_{r}\infnorm{g}\osc{\varphi}}{\sqrt{N}},
\end{align}
\item\label{it:part b}
there exists $B_{r} \in \R_{+}$ such that for any $\varphi \in \boundMeas(\X)$ and any $1\leq k \leq m$ 
\begin{equation}\label{eq:gmn error b}
\lpnorm{r}{\pi^{M,k}(g)\left(\widecheck{\pi}^{M,k}(\varphi) - \frac{\pi^{M,k}(g\varphi)}{\pi^{M,k}(g)}\right)} \leq \frac{B_{r}\infnorm{g}\osc{\varphi}}{\sqrt{M}}.
\end{equation}
\end{enumerate}
\end{proposition}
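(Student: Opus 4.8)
\emph{Proof plan.} The plan is to treat Algorithm~\ref{alg:grouped mn} as a \emph{one-stage} instance of augmented resampling and to re-run the martingale-difference argument behind Proposition~\ref{prop:intro_to_aug_resampling}; concretely, I would prove part~\ref{it:part b}) first and then obtain part~\ref{it:part a}) from it. The preliminary step is to record the facts about $A = \Id_{m}\otimes\ones_{1/M}$ that are analogous to Lemma~\ref{lem:facts about A}: $A$ is symmetric, idempotent and doubly stochastic, and $A^{ij}\neq 0$ if and only if $i$ and $j$ lie in the same block of size $M$. From $\sum_{i}A^{ij}=1$ one reads off the unbiasedness $\E[\pi^{N}(g)\widecheck{\pi}^{N}(\varphi)\mid\vxi_{\mathrm{in}}]=\pi^{N}(g\varphi)$, and from $\sum_{j}A^{ij}=1$ one gets $W^{i}_{\mathrm{out}}\leq\infnorm{g}$. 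The key simplification, already visible in \eqref{eq:out weight}, is that $W^{i}_{\mathrm{out}}$ is constant on each island, equal to $\pi^{M,k}(g)$ for $i$ in island $k$; hence $\sum_{i=1}^{N}W^{i}_{\mathrm{out}}=N\pi^{N}(g)$, so that $\pi^{N}(g)\widecheck{\pi}^{N}(\varphi)=N^{-1}\sum_{i}W^{i}_{\mathrm{out}}\varphi(\xi^{i}_{\mathrm{out}})$, while $\widecheck{\pi}^{M,k}$ collapses to the \emph{unweighted} empirical measure $M^{-1}\sum_{i=1}^{M}\delta_{\xi^{(k-1)M+i}_{\mathrm{out}}}$.

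For part~\ref{it:part b}), fix $k$ and put $\barphi\defeq\varphi-\pi^{M,k}(g\varphi)/\pi^{M,k}(g)$, a constant shift of $\varphi$, so that $\osc{\barphi}=\osc{\varphi}$ and $\pi^{M,k}(g\barphi)=0$. I would define a martingale difference $(X_{\ell},\cF_{\ell})_{0\leq\ell\leq M}$ with $\cF_{0}=\sigma(\vxi_{\mathrm{in}})$, $\cF_{\ell}=\cF_{\ell-1}\vee\sigma(\xi^{(k-1)M+\ell}_{\mathrm{out}})$ and
\[
X_{\ell}\defeq\frac{1}{M}\left(W^{(k-1)M+\ell}_{\mathrm{out}}\,\barphi(\xi^{(k-1)M+\ell}_{\mathrm{out}})-\sum_{j=1}^{N}A^{(k-1)M+\ell,\,j}\,g(\xi^{j}_{\mathrm{in}})\,\barphi(\xi^{j}_{\mathrm{in}})\right).
\]
The three defining properties parallel Proposition~\ref{prop:martingale decomposition}: $\cF_{\ell}$-measurability is immediate; $\E[X_{\ell}\mid\cF_{\ell-1}]=0$ follows from the conditional independence of the draws in the loop of Algorithm~\ref{alg:grouped mn} together with the resampling law on line~\ref{it:withinisland cond dist}; and, since $X_{\ell}$ is $W^{(k-1)M+\ell}_{\mathrm{out}}/M$ (which is at most $\infnorm{g}/M$) times the difference between a value of $\barphi$ and a convex combination of values of $\barphi$, and $\osc{\barphi}=\osc{\varphi}$, we have $|X_{\ell}|\leq\infnorm{g}\osc{\varphi}/M$. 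Telescoping gives $\sum_{\ell=1}^{M}X_{\ell}=\pi^{M,k}(g)\widecheck{\pi}^{M,k}(\varphi)-\pi^{M,k}(g\varphi)$, and the Burkholder--Davis--Gundy inequality, exactly as at the end of the proof of Proposition~\ref{prop:intro_to_aug_resampling}, yields \eqref{eq:gmn error b}.

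For part~\ref{it:part a}), the quickest route is to repeat the same construction over all $N$ output indices simultaneously (now with $\barphi=\varphi-\pi^{N}(g\varphi)/\pi^{N}(g)$, filtration $\cF_{0}=\sigma(\vxi_{\mathrm{in}})$ and the particles $\xi^{1}_{\mathrm{out}},\ldots,\xi^{N}_{\mathrm{out}}$ revealed one at a time): the increments are then bounded by $\infnorm{g}\osc{\varphi}/N$, their sum telescopes to $\pi^{N}(g)\widecheck{\pi}^{N}(\varphi)-\pi^{N}(g\varphi)$, and Burkholder--Davis--Gundy gives \eqref{eq:gmn error}. Equivalently, one may deduce \eqref{eq:gmn error} from \eqref{eq:gmn error b} by writing $\pi^{N}(g)\widecheck{\pi}^{N}(\varphi)-\pi^{N}(g\varphi)=m^{-1}\sum_{k=1}^{m}\big(\pi^{M,k}(g)\widecheck{\pi}^{M,k}(\varphi)-\pi^{M,k}(g\varphi)\big)$ and noting that, conditionally on $\vxi_{\mathrm{in}}$, the $m$ summands are independent and centred, so a conditional Marcinkiewicz--Zygmund (Burkholder) bound upgrades the per-island $M^{-1/2}$ rate to $(mM)^{-1/2}=N^{-1/2}$.

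I do not expect a genuine obstacle, since everything here is a one-stage specialisation of the machinery developed in Section~\ref{sec:augmented analysis}. The only points that need care are (i) choosing the constant shift $\barphi$ so that the telescoped sum is \emph{exactly} the target error, with no leftover term; and (ii) if part~\ref{it:part a}) is deduced from part~\ref{it:part b}), controlling the power-mean step $\big(\sum_{k}D_{k}^{2}\big)^{r/2}$ so that the $m$-dependence comes out as $m^{-1/2}$ for both $r\leq 2$ and $r>2$.
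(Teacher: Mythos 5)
Your proposal is correct and follows essentially the same route as the paper: both parts are proved by building a martingale difference with increments of the form $W^{i}_{\mathrm{out}}\big(\barphi(\xi^{i}_{\mathrm{out}})-(W^{i}_{\mathrm{out}})^{-1}\sum_{j}A^{ij}g(\xi^{j}_{\mathrm{in}})\barphi(\xi^{j}_{\mathrm{in}})\big)$ (up to an inessential choice of normalising constant), choosing the centring $\barphi$ so that the weighted input average vanishes and the telescoped sum is exactly the target error, and then applying the Burkholder--Davis--Gundy inequality. The only differences from the paper's proof are cosmetic: the order of the two parts, and your optional alternative derivation of part (a) from part (b) via conditional independence across islands, which the paper does not use.
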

\noindent
%

\subsection{Augmented island resampling}

Theoretically the augmented resampling for particle islands is very similar to the augmented resampling, Algorithm \ref{alg:augmented_resampling}. With appropriate notational conventions the theoretical analysis becomes nearly identical to that of Section \ref{sec:augmented analysis} with the exception that for any $1\leq s \leq S$, we replace individual particles $\xi^{i}_{s}$ by particle islands $$\vxi^{i}_{s} \defeq (\xi^{(i-1)M+1}_{s},\ldots,\xi^{iM}_{s})$$ and set $M=1$ to signify the fact that there is only one particle island per PE. Following the convention that $M=1$, we define matrices $\bA_{1},\ldots,\bA_{S}$ analogously to \eqref{eq:A matrices} as
\begin{equation*}
\bA_{s} \defeq \Id_{2^{S-s}} \otimes \ones_{1/2} \otimes \Id_{2^{s-1}}.
\end{equation*}
The resulting algorithm is described in Algorithm \ref{alg:new augmented resampling}.

\begin{algorithm}[!ht]
\caption{Augmented island resampling}
\label{alg:new augmented resampling}
\begin{algorithmic}[1]
\State {$\vxi_{\mathrm{out}}=\textsc{AugmentedIslandResample}\left(g,\vxi_{\mathrm{in}}\right)$}{}

\For{$i=1,\ldots,m$}
	\For{$j=1,\ldots,M$}
		\State $\xi^{(i-1)M+j}_{0} \leftarrow \xi^{(i-1)M+j}_{\mathrm{in}}$
	\EndFor
	\State $\bV^{i}_{0}\leftarrow M^{-1}\sum_{j=1}^{M}g(\xi^{(i-1)M+j}_{0})$
\EndFor

\For{$s=1,\ldots,S$}
	\For{$i=1,\ldots,m$}
		\State $\bV^{i}_{s} \leftarrow \sum_{j=1}^{m}\bA^{ij}_{s}\bV^j_{s-1}$
		\State $\vxi^i_{s}\sim (\bV^i_{s})^{-1}\sum_{j=1}^{m}\bA^{ij}_{s}\bV^j_{s-1}\delta_{\bxi^j_{s-1}}$
	\EndFor
\EndFor

\For{$i=1,\ldots,mM$}
	\State $\xi^{i}_{\mathrm{out}} \leftarrow \xi^{i}_{S}$.
\EndFor

\end{algorithmic}
\end{algorithm}

%
\begin{proposition}\label{prop:g augmented convergence}
Assume \eqref{ass:NS}. If $\vxi_{\mathrm{in}}$ is any $\X^N$ valued random variable, $\vxi_{\mathrm{out}}$ is computed according to Algorithm \ref{alg:new augmented resampling} and $g\in\boundMeas_{+}(\X)$, then for any $r\geq 1$ there exists $B_{r} \in \R_{+}$ such that for any $\varphi \in \boundMeas(\X)$, 
\begin{align}\label{eq:group augmented resample error}
\lpnorm{r}{\left(\frac{1}{m}\sum_{i=1}^{m}\vg(\vxi^{i}_{\mathrm{in}})\right)\left(\frac{1}{m}\sum_{i=1}^{m}\vphi(\vxi^{i}_{\mathrm{out}})\right) - \frac{1}{m}\sum_{i=1}^{m}\vg(\vxi^{i}_{\mathrm{in}})\vphi(\vxi^{i}_{\mathrm{in}})} 
\leq B_{r}\|g\|\sqrt{\frac{S}{m}}\osc{\varphi},
\end{align}
where 
\begin{align*}
\vg(\vx) \defeq \frac{1}{M}\sum_{i=1}^{M} g(x^{i}), \qquad \text{and} \qquad 
\vphi(\vx) \defeq \frac{1}{M}\sum_{i=1}^{M} \varphi(x^{i}), 
\end{align*}
for all $\vx = (x^1,\ldots,x^{M}) \in \X^{M}$.
\end{proposition}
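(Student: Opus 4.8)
The plan is to recognise that Proposition \ref{prop:g augmented convergence} is structurally the same statement as Proposition \ref{prop:intro_to_aug_resampling}, but with individual particles $\xi^i_s$ replaced by particle islands $\vxi^i_s$, the sample size $N$ replaced by the number of islands $m$, and the weight function $g$ replaced by the island-averaged weight $\vg$. Concretely, I would introduce island weights $\bV^i_0 \defeq \vg(\vxi^i_{\mathrm{in}})$ and $\bV^i_s \defeq \sum_{j=1}^m \bA^{ij}_s \bV^j_{s-1}$, exactly mirroring \eqref{eq:V_defn_proofs}. The matrices $\bA_s$ are the $m\times m$ factors that also appear inside $A_s$ in \eqref{eq:A matrices} (indeed $A_s = \bA_s \otimes \ones_{1/M}$), so Lemma \ref{lem:facts about A} applies verbatim to give that each $\bA_s$ is symmetric, idempotent and doubly stochastic, and Lemma \ref{lem:ones product} (applied with $M=1$) gives $\prod_{s=1}^S \bA_s = \ones_{1/m}$. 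Hence the analogue of Lemma \ref{lem:facts_about_Vs} holds: $\bV^i_s$ is $\sigma(\vxi_{\mathrm{in}})$-measurable, $\bV^i_s \leq \|g\|$ (since $\bV^i_0 = \vg(\vxi^i_{\mathrm{in}}) \leq \|g\|$ and each $\bA_s$ is a convex averaging), and $\bV^i_S = m^{-1}\sum_{j=1}^m \vg(\vxi^j_{\mathrm{in}})$.

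Next I would set up the martingale exactly as in Section \ref{sec:augmented analysis}. The key observation is that Algorithm \ref{alg:new augmented resampling} still has the one-step conditional independence property: given $\vxi_0,\ldots,\vxi_{s-1}$, the islands $\vxi^1_s,\ldots,\vxi^m_s$ are conditionally independent, with
\[
\P\big(\vxi^i_s \in B \,\big|\, \vxi_0,\ldots,\vxi_{s-1}\big) = \frac{1}{\bV^i_s}\sum_{j=1}^m \bA^{ij}_s \bV^j_{s-1}\mathbb{I}_B(\vxi^j_{s-1}).
\]
I would then define the centred test function $\overline{\vphi} \defeq \vphi - \big(\sum_i \vg(\vxi^i_{\mathrm{in}})\vphi(\vxi^i_{\mathrm{in}})\big)/\big(\sum_i \vg(\vxi^i_{\mathrm{in}})\big)$ on $\X^M$, note that $\osc{\vphi} \leq \osc{\varphi}$ since $\vphi$ is an average of $\varphi$-values, and build the martingale difference array $(X_\rho, \cF_\rho)_{0\leq \rho \leq Sm}$ with $Sm$ steps indexed via the bijection \eqref{eq:index map} with $k=m$, replacing $N$ by $m$ and $\varphi,\xi$ by $\vphi,\vxi$. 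The analogue of Proposition \ref{prop:martingale decomposition} then gives: measurability, the zero conditional-mean property (from one-step conditional independence), the bound $|X_\rho| \leq \|g\|\osc{\vphi}/\sqrt{Sm} \leq \|g\|\osc{\varphi}/\sqrt{Sm}$, and the telescoping identity
\[
\sqrt{\frac{S}{m}}\sum_{\rho=1}^{Sm} X_\rho = \frac{1}{m}\sum_{i=1}^m \bV^i_S \overline{\vphi}(\vxi^i_S) = \left(\frac{1}{m}\sum_{i=1}^m \vg(\vxi^i_{\mathrm{in}})\right)\left(\frac{1}{m}\sum_{i=1}^m \vphi(\vxi^i_{\mathrm{out}})\right) - \frac{1}{m}\sum_{i=1}^m \vg(\vxi^i_{\mathrm{in}})\vphi(\vxi^i_{\mathrm{in}}),
\]
where the last equality uses $\bV^i_S = m^{-1}\sum_j \vg(\vxi^j_{\mathrm{in}})$ and $\vxi^i_{\mathrm{out}} = \vxi^i_S$. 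Finally, applying the Burkholder--Davis--Gundy inequality to $\sum_\rho X_\rho$ together with the uniform bound on $|X_\rho|$ yields $\E[|\sum_\rho X_\rho|^r]^{1/r} \leq B_r \|g\|\osc{\varphi}$, and multiplying by $\sqrt{S/m}$ gives \eqref{eq:group augmented resample error}.

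The main obstacle is essentially bookkeeping rather than a genuine mathematical difficulty: one must check carefully that every lemma from Section \ref{sec:augmented analysis} transfers when "particle" is reinterpreted as "island in $\X^M$" and $N$ is replaced by $m$. The two points that need a moment's care are (i) that $\bV^i_0 = \vg(\vxi^i_{\mathrm{in}}) = M^{-1}\sum_j g(\xi^{(i-1)M+j}_{\mathrm{in}})$ is bounded by $\|g\| = \infnorm{g}$ (not by $M\|g\|$), so the boundedness of all $\bV^i_s$ follows as before from double stochasticity of the $\bA_s$; and (ii) that $\osc{\vphi} \leq \osc{\varphi}$, which is what lets us replace $\osc{\vphi}$ by $\osc{\varphi}$ in the final bound — this is immediate because $\vphi(\vx) - \vphi(\vy) = M^{-1}\sum_i(\varphi(x^i) - \varphi(y^i))$, each summand bounded in absolute value by $\osc{\varphi}$. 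Everything else is a verbatim copy of the argument already given for Proposition \ref{prop:intro_to_aug_resampling}, so I would state the island analogues of Lemmas \ref{lem:facts about A}--\ref{lem:facts_about_Vs} and of Proposition \ref{prop:martingale decomposition} (deferring their near-identical proofs to the appendix) and then conclude as above.
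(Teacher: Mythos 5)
Your proposal is correct and follows essentially the same route as the paper: the paper also constructs the island-level martingale difference array $\{(\bX_{\rho},\bcF_{\rho});0\leq\rho\leq Sm\}$ with the centred function $\vbarphi$, states the island analogue of Proposition \ref{prop:martingale decomposition} (with the same bound $|\bX_{\rho}|\leq\|g\|\osc{\varphi}/\sqrt{Sm}$, which rests on exactly the two observations you flag, $\bV^{i}_{0}=\vg(\vxi^{i}_{\mathrm{in}})\leq\|g\|$ and $\osc{\vphi}\leq\osc{\varphi}$), and concludes by Burkholder--Davis--Gundy. No gaps.
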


Due to the similarity of the proof of Proposition \ref{prop:g augmented convergence} to that of Proposition \ref{prop:intro_to_aug_resampling} we will only outline the proof. First we construct a sequence $\overline{\cM} \defeq \{(\bX_{\rho},\bcF_{\rho}); 0 \leq \rho \leq Sm\}$ such that $\bX_{0} \defeq 0$ and $\bcF_{0} \defeq \sigma(\vxi_{\mathrm{in}})$ and for all $1 \leq \rho \leq Sm$ 
\begin{align*}
\bX_{\rho} &\defeq
\frac{\bV^{i_{m}(\rho)}_{s_{m}(\rho)}}{\sqrt{Sm}}\left(\vbarphi(\bxi^{i_{m}(\rho)}_{s_{m}(\rho)}) - \frac{\sum_{j=1}^{m} \bA^{i_{m}(\rho)j}_{s_{m}(\rho)}\bV^{j}_{s_{m}(\rho)-1}\vbarphi(\bxi^{j}_{s_{m}(\rho)-1})}{\bV^{i_{m}(\rho)}_{s_{m}(\rho)}} \right), \\
\bcF_{\rho} &\defeq \bcF_{\rho-1} \vee \sigma(\vxi^{i_{m}(\rho)}_{s_{m}(\rho)}),
\end{align*}
where $i_{m}(\rho)$ and $s_{m}(\rho)$ are as defined in \eqref{eq:index map}, and 
\begin{equation}\label{eq:centred phi bold}
\vbarphi(\vxi^{i}_{s}) \defeq \vphi(\vxi^{i}_{s}) - \frac{\sum_{j=1}^{m}\vg(\vxi^{j}_0)\vphi(\vxi^{j}_0)}{\sum_{j=1}^{m}\vg(\vxi^{j}_0)}, \qquad 1 \leq i \leq m,~1\leq s\leq S.
\end{equation}
With these notations we obtain the following result, analogous to Proposition \ref{prop:martingale decomposition}. The proof is essentially identical to that of Proposition \ref{prop:martingale decomposition} and hence omitted. 
\begin{proposition}\label{prop:martingale decomposition group}
Assume \eqref{ass:NS}. The following statements hold:
\begin{enumerate}

\item
$\bX_{\rho}$ is $\bcF_{\rho}$-measurable for all $0 \leq \rho \leq Sm$;

\item
$\E\left[\bX_{\rho}\,\mid\,\bcF_{\rho-1}\right] = 0$ (a.s.) for all $0 < \rho \leq Sm$;

\item
$\left|\bX_{\rho}\right| \leq \|g\|\osc{\varphi}/\sqrt{Sm}$ for all $0 \leq \rho \leq Sm$;

\item and we have the identities
\begin{align*}
\sqrt{\frac{S}{m}}\sum_{\rho=1}^{Sm} \bX_{\rho} = \frac{1}{m}\sum_{i=1}^{m} \bV^{i}_{S}\vbarphi(\vxi^{i}_{S}) 
= \left(\frac{1}{m}\sum_{i=1}^{m}\vg(\vxi^{i}_{0})\right)\left(\frac{1}{m}\sum_{i=1}^{m}\vphi(\vxi^{i}_{S})\right) - \frac{1}{m}\sum_{i=1}^{m} \vg(\vxi^{i}_{0})\vphi(\vxi^{i}_{0}) 
\end{align*}
\end{enumerate}
\end{proposition}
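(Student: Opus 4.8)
The plan is to mirror, nearly verbatim, the martingale-difference argument used for Proposition \ref{prop:intro_to_aug_resampling}, but now working at the level of islands rather than individual particles. First I would establish the island analogues of the preparatory facts: that $\bA_{1},\ldots,\bA_{S}$ are symmetric, idempotent, doubly stochastic, and satisfy $\prod_{s=1}^{S}\bA_{s} = \ones_{1/m}$ (this is Lemma \ref{lem:ones product} applied with $m$ in place of $N$ and without the trailing $\otimes\ones_{1/M}$ factor), and that consequently the island weights $\bV^{i}_{s}$ are $\sigma(\vxi_{\mathrm{in}})$-measurable, bounded by $\|g\|$, and satisfy $\bV^{i}_{S} = m^{-1}\sum_{j=1}^{m}\vg(\vxi^{j}_{0})$ for all $i$ — the exact island counterpart of Lemma \ref{lem:facts_about_Vs}. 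The key structural observation, as in Section \ref{sec:augmented analysis}, is the one step conditional independence: given $\vxi_{\mathrm{in}},\vxi_{1},\ldots,\vxi_{s-1}$, the islands $\vxi^{1}_{s},\ldots,\vxi^{m}_{s}$ are conditionally independent with $\P(\vxi^{i}_{s}\in\cdot\mid\cdots) = (\bV^{i}_{s})^{-1}\sum_{j=1}^{m}\bA^{ij}_{s}\bV^{j}_{s-1}\delta_{\bxi^{j}_{s-1}}(\cdot)$.

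Next I would verify Proposition \ref{prop:martingale decomposition group}. Measurability of $\bX_{\rho}$ w.r.t.\ $\bcF_{\rho}$ is immediate from the definitions and the measurability of $\bV^{i}_{s}$. The zero conditional mean $\E[\bX_{\rho}\mid\bcF_{\rho-1}]=0$ follows because the term in parentheses is exactly $\vbarphi(\bxi^{i_{m}(\rho)}_{s_{m}(\rho)})$ minus its conditional expectation under the sampling law just displayed, using that within a given stage the islands are sampled conditionally independently so conditioning on $\bcF_{\rho-1}$ (which contains the islands $1,\ldots,i_m(\rho)-1$ at stage $s_m(\rho)$ plus all earlier stages) does not change the law of island $i_m(\rho)$. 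The boundedness $|\bX_{\rho}|\le\|g\|\osc{\varphi}/\sqrt{Sm}$ follows from $\bV^{i}_{s}\le\|g\|$ together with the fact that $\vbarphi$ minus a weighted average of $\vbarphi$-values has modulus at most $\osc{\vphi}\le\osc{\varphi}$, since $\osc{\vphi}\le\osc{\varphi}$ as $\vphi$ is an average of translates of $\varphi$; here the convexity of the stochastic-matrix weights $(\bV^{i}_{s})^{-1}\bA^{ij}_{s}\bV^{j}_{s-1}$ (they are nonnegative and sum to one in $j$) is what is used. For the telescoping identity I would write $\frac{1}{m}\sum_{i}\bV^{i}_{S}\vbarphi(\vxi^{i}_{S})$ as a sum over $s$ of consecutive differences $\frac{1}{m}\sum_{i}\bV^{i}_{s}\vbarphi(\vxi^{i}_{s}) - \frac{1}{m}\sum_{i}\bV^{i}_{s-1}\vbarphi(\vxi^{i}_{s-1})$, starting from $\frac{1}{m}\sum_{i}\bV^{i}_{0}\vbarphi(\vxi^{i}_{0})=0$ (which holds by the very definition \eqref{eq:centred phi bold} of $\vbarphi$), and then use double stochasticity $\sum_{j}\bA^{ji}_{s}=1$ to rewrite the $(s-1)$-term as $\frac{1}{m}\sum_{j}\sum_{i}\bA^{ji}_{s}\bV^{i}_{s-1}\vbarphi(\vxi^{i}_{s-1})$, matching each stage-$s$ block of $\sqrt{S/m}\,\bX_{\rho}$ exactly as in the displayed computation for Proposition \ref{prop:martingale decomposition}. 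The second equality in the identity then comes from $\bV^{i}_{S}=m^{-1}\sum_{j}\vg(\vxi^{j}_{0})$ and the definition of $\vbarphi$.

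Finally, to conclude Proposition \ref{prop:g augmented convergence} from Proposition \ref{prop:martingale decomposition group}, I would note that $\overline{\cM}$ is a martingale (by the first two parts), apply the Burkholder--Davis--Gundy inequality to $\sum_{\rho=1}^{Sm}\bX_{\rho}$ just as in the proof of Proposition \ref{prop:intro_to_aug_resampling}, and use the uniform bound $|\bX_{\rho}|\le\|g\|\osc{\varphi}/\sqrt{Sm}$ to get $\E[|\sum_{\rho}\bX_{\rho}|^{r}]^{1/r}\le B_{r}\|g\|\osc{\varphi}$, so that $\E[|\sqrt{S/m}\sum_{\rho}\bX_{\rho}|^{r}]^{1/r}\le B_{r}\|g\|\sqrt{S/m}\,\osc{\varphi}$; combining with the fourth part of Proposition \ref{prop:martingale decomposition group} gives exactly \eqref{eq:group augmented resample error}. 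I do not anticipate a genuinely hard step — the whole thing is a faithful transcription of the argument already carried out — but the one point that needs a little care is checking that the $\osc{\cdot}$ bound survives the passage from $\varphi$ on $\X$ to $\vphi$ on $\X^{M}$ (namely $\osc{\vphi}\le\osc{\varphi}$) and that the conditioning filtration $\bcF_{\rho-1}$ really does leave the conditional law of the $i_m(\rho)$-th island at stage $s_m(\rho)$ as asserted, i.e.\ that the index map \eqref{eq:index map} together with one step conditional independence makes the martingale-difference structure go through; the rest is bookkeeping identical to Section \ref{sec:augmented analysis}.
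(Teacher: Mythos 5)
Your proposal is correct and follows exactly the route the paper intends: the paper omits this proof, stating it is essentially identical to that of Proposition \ref{prop:martingale decomposition}, and your island-level transcription (including the analogues of Lemmata \ref{lem:ones product} and \ref{lem:facts_about_Vs} for $\bA_{s}$ and $\bV^{i}_{s}$, the one step conditional independence, and the telescoping/double-stochasticity argument) supplies precisely the details being alluded to. Your two flagged points of care --- that $\osc{\vphi}\leq\osc{\varphi}$ since $\vphi$ averages values of $\varphi$, and that the convex weights $(\bV^{i}_{s})^{-1}\bA^{ij}_{s}\bV^{j}_{s-1}$ sum to one --- are exactly the right checks and both go through.
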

Proposition \ref{prop:g augmented convergence} then follows by Burkholder-Davis-Gundy inequality similarly as Proposition \ref{prop:intro_to_aug_resampling}.

\subsection{Modified augmented resampling}
\label{sec:modified augmentd resampling}

In the introduction we stated that augmented resampling enables a straightforward way to further control the communication of particle information between PEs. We will now address this claim more closely.

By Lemma \ref{lem:facts about A} we know that $\bA_{s}$ is symmetric and that for any $1\leq s \leq S$, each row of $\bA_{s}$ has exactly two nonzero elements of which one is on the diagonal. This implies that the pairs of indices of the nonzero columns for each row of $\bA_{s}$ form a partition of $\{1,\ldots,m\}$ into $m/2$ pairs of indices
\begin{equation*}
P_{i,s} \defeq (\ell^{i}_{s},r^{i}_{s}),\qquad 1  \leq i \leq m/2,
\end{equation*}
for which, by \eqref{eq:explicit nonzero elements}, we can obtain explicit expressions as
\begin{align*}
(\ell^{1}_{s},\ldots,\ell^{m/2}_{s}) &\defeq ((2^{s}(i-1) + (j-1) + 1)_{1 \leq j \leq 2^{s-1}})_{1 \leq i \leq 2^{S-s}}, \\
(r^{1}_{s},\ldots,r^{m/2}_{s}) &\defeq ((2^{s}(i-1) + (j-1) + 1 + 2^{s-1})_{1 \leq j \leq 2^{s-1}})_{1 \leq i \leq 2^{S-s}}.
\end{align*}
If, for any $1\leq s\leq S$ we associate the subsample $\vxi^{i}_{s}$ with PE $i$, as we have done so far, then the pair $P_{i,s}$ has the interpretation of representing the indices of PEs that are paired up for communication at stage $s$, and they are illustrated in Figure \ref{fig:pair_diagram}. 

\begin{remark}
For our purposes, the indexing of pairs $(P_{1,s},\ldots,P_{m/2,s})$ where $1\leq s \leq S$ is fixed could be replaced with any permutation of $\{1,\ldots,m/2\}$.
\end{remark}

%
%

\begin{figure}[t!]
\begin{center}

\tikzset{
    basicvertex/.style = {
    	draw,
		circle,
		minimum size = 1mm,
		outer sep = 0pt,
		fill=white,
		inner sep = 1pt
    }
}

\tikzset{
    basicedge/.style = {
		color = gray,
		shorten <=0cm, 
		shorten >=0cm,
		line width=.5pt    
		}
}

\mbox{
\begin{minipage}{0.47\textwidth}
\begin{center}
\mbox{

\begin{tikzpicture}
\tikzstyle{every node}=[scale=.8,font=\normalsize]
\def \vstep {-1.5}
\def \hstep {1}
\def \r {2}
\def \m {3} 
\def \N {8}
\def \X {4}
\def \Y {2}
\def \vspaces{{1.25,2,3,4}}

\foreach \i in {0,...,2}
{
	\foreach \j in {1,...,8}
	{	
		\foreach \k in {0,...,1}
		{
			\pgfmathsetmacro{\result}{\vspaces[\i]}
			\pgfmathsetmacro{\resulttwo}{\vspaces[\i+1]}
			\draw[basicedge]
			let 
				\n1 = {int(mod((\j-1),\r^(\i))+\k*\r^(\i)+\r^(\i+1)*int(floor((\j-1)/\r^(\i+1))))+1}, 
				\n2 = {\resulttwo}
			in 							
				(\j*\hstep,\result*\vstep) -- (\n1*\hstep,\n2*\vstep);
		}
	}
}

\draw[-,line width=2pt] (1*\hstep,2*\vstep) to [out=-45, in=225] node [midway,below=5pt,fill=white,inner sep=2pt,draw,line width=.5pt,rectangle] {$P_{1,1}$} (2*\hstep,2*\vstep);
\draw[-,line width=2pt] (3*\hstep,2*\vstep) to [out=-45, in=225] node [midway,below=5pt,fill=white,inner sep=2pt,draw,line width=.5pt,rectangle] {$P_{2,1}$} (4*\hstep,2*\vstep);
\draw[-,line width=2pt] (5*\hstep,2*\vstep) to [out=-45, in=225] node [midway,below=5pt,fill=white,inner sep=2pt,draw,line width=.5pt,rectangle] {$P_{3,1}$} (6*\hstep,2*\vstep);
\draw[-,line width=2pt] (7*\hstep,2*\vstep) to [out=-45, in=225] node [midway,below=5pt,fill=white,inner sep=2pt,draw,line width=.5pt,rectangle] {$P_{4,1}$} (8*\hstep,2*\vstep);

\draw[-,line width=2pt] (1*\hstep,3*\vstep) to [out=-45, in=225] node [midway,below=3pt,fill=white,inner sep=2pt,draw,line width=.5pt,rectangle] {$P_{1,2}$}(3*\hstep,3*\vstep) ;
\draw[-,line width=2pt] (2*\hstep,3*\vstep) to [out=-45, in=225] node [midway,below=3pt,fill=white,inner sep=2pt,draw,line width=.5pt,rectangle] {$P_{2,2}$}(4*\hstep,3*\vstep);
\draw[-,line width=2pt] (5*\hstep,3*\vstep) to [out=-45, in=225] node [midway,below=3pt,fill=white,inner sep=2pt,draw,line width=.5pt,rectangle] {$P_{3,2}$}(7*\hstep,3*\vstep);
\draw[-,line width=2pt] (6*\hstep,3*\vstep) to [out=-45, in=225] node [midway,below=3pt,fill=white,inner sep=2pt,draw,line width=.5pt,rectangle] {$P_{4,2}$}(8*\hstep,3*\vstep);

\draw[-,line width=2pt] (1*\hstep,4*\vstep) to [out=-30, in=210] node [midway,below=5pt,fill=white,inner sep=2pt,draw,line width=.5pt,rectangle] {$P_{1,3}$}(5*\hstep,4*\vstep);
\draw[-,line width=2pt] (2*\hstep,4*\vstep) to [out=-30, in=210] node [midway,below=5pt,fill=white,inner sep=2pt,draw,line width=.5pt,rectangle] {$P_{2,3}$}(6*\hstep,4*\vstep);
\draw[-,line width=2pt] (3*\hstep,4*\vstep) to [out=-30, in=210] node [midway,below=5pt,fill=white,inner sep=2pt,draw,line width=.5pt,rectangle] {$P_{3,3}$}(7*\hstep,4*\vstep);
\draw[-,line width=2pt] (4*\hstep,4*\vstep) to [out=-30, in=210] node [midway,below=5pt,fill=white,inner sep=2pt,draw,line width=.5pt,rectangle] {$P_{4,3}$}(8*\hstep,4*\vstep);

\foreach \i in {0,...,3}
{
	\foreach \j in {1,...,8}
	{
		\pgfmathparse{\vspaces[\i]};
		\pgfmathsetmacro{\s}{\i};
		\pgfmathsetmacro{\k}{\j};
	  	\node[basicvertex] (\i\j) at (\j*\hstep,\pgfmathresult*\vstep) {$\vxi^{\pgfmathprintnumber[precision=0]{\k}}_{\pgfmathprintnumber[precision=0]{\s}}$};
	}
}	

\end{tikzpicture}
}
\end{center}
\end{minipage}
}
\end{center}
\caption{Illustration of the paired PEs at each stage of the augemented island resampling in the case $m=8$.}
\label{fig:pair_diagram}
\end{figure}
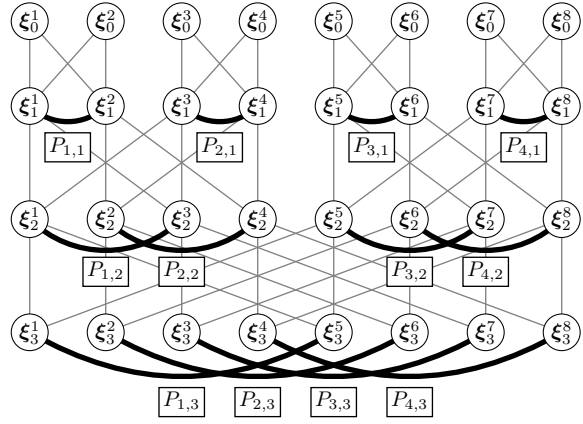


Now consider the PE $\ell^{i}_{s}$ at stage $s$ for some $1 \leq i \leq m/2$. By line \ref{line:sampling aug island} of Algorithm \ref{alg:new augmented resampling} and the discussion above we now see that when drawing the sample $\vxi^{\ell^{i}_{s}}_{s}$, PE $\ell^{i}_{s}$  essentially randomly decides whether to keep the sample $\vxi^{\ell^{i}_{s}}_{s-1}$ from the previous stage or assume the sample $\vxi^{r^{i}_{s}}_{s-1}$ of its paired PE. Simultaneously the paired PE $r^{i}_{s}$ makes randomly and independently a similar decision between $\vxi^{\ell^{i}_{s}}_{s-1}$ or $\vxi^{r^{i}_{s}}_{s-1}$.

It may be the case, in particular if $\bV^{\ell^{i}_{s}}_{s-1} \approx \bV^{r_{i,s}}_{s-1}$, that after the random sampling on line \ref{line:sampling aug island} of Algorithm \ref{alg:new augmented resampling} at stage $s$, one has $\big(\vxi^{\ell^{i}_{s}}_{s},\vxi^{r^{i}_{s}}_{s}\big) = \big(\vxi^{r^{i}_{s}}_{s-1},\vxi^{\ell^{i}_{s}}_{s-1}\big)$ i.e.~the paired PEs have simply exchanged their particles. Intuitively, it seems that performing this exchange is unnecessary, as the purpose of resampling is to duplicate particles appropriately many, possibly zero, times and hence only the number of duplicates is expected to matter, not the order in which they are allocated to the PEs. Thus to reduce the time spent on the communication between PEs, it seems advisable to avoid the above-mentioned exchange. 
\mbox{Algorithm \ref{alg:new augmented resampling mod}} describes a simple modification of Algorithm \ref{alg:new augmented resampling} designed to avoid this seemingly redundant particle exchange. 

%
%
%
%
%
\begin{algorithm}[!ht]
\caption{Modified Augmented Island Resampling}
\label{alg:new augmented resampling mod}
\begin{algorithmic}[1]
\label{alg:bf}
\State {$\vxi_{\mathrm{out}}=\textsc{AugmentedIslandResample}\left(g,\vxi_{\mathrm{in}}\right)$}{}

\For{$i=1,\ldots,m$}
	\For{$j=1,\ldots,M$}
		\State $\xi^{(i-1)M+j}_{0} \leftarrow \xi^{(i-1)M+j}_{\mathrm{in}}$
	\EndFor
	\State $\bV^{i}_{0}\leftarrow \vg(\vxi_{0}^{i})$
\EndFor

\For{$s=1,\ldots,S$}
	\For{$i=1,\ldots,m$}
		\State $\bV^{i}_{s} \leftarrow \sum_{j=1}^{m}\bA^{ij}_{s}\bV^j_{s-1}$
		\State $\widetilde{\vxi}^i_{s}\sim (\bV^i_{s})^{-1}\sum_{j=1}^{m}\bA^{ij}_{s}\bV^j_{s-1}\delta_{\bxi^j_{s-1}}$\label{line:sampling aug island}
	\EndFor

	\For{$i=1,\ldots,m/2$} \label{line:loop}
	\State\begin{equation*} 
\big(\vxi^{r^{i}_{s}}_{s},\vxi^{\ell^{i}_{s}}_{s}\big) = 
\begin{cases}
\big(\widetilde{\vxi}^{\ell^{i}_{s}}_{s},\widetilde{\vxi}^{r^{i}_{s}}_{s}\big), & \text{if}\quad \big(\widetilde{\vxi}^{r^{i}_{s}}_{s},\widetilde{\vxi}^{\ell^{i}_{s}}_{s}\big)=\big(\widetilde{\vxi}^{\ell^{i}_{s}}_{s-1},\widetilde{\vxi}^{r^{i}_{s}}_{s-1}\big)\\
\big(\widetilde{\vxi}^{r^{i}_{s}}_{s},\widetilde{\vxi}^{\ell^{i}_{s}}_{s}\big), &\text{otherwise.}
\end{cases}
\end{equation*}\label{line:exchange}
\EndFor
\EndFor

\For{$i=1,\ldots,mM$}
	\State $\xi^{i}_{\mathrm{out}} \leftarrow \xi^{i}_{S}$.
\EndFor

\end{algorithmic}
\end{algorithm}


The modification on lines \ref{line:loop} and \ref{line:exchange} of Algorithm \ref{alg:new augmented resampling mod} changes slightly the statistical behaviour of the algorithm and hence Propositions \ref{prop:g augmented convergence} and \ref{prop:martingale decomposition group} are not immediately valid for Algorithm \ref{alg:new augmented resampling mod}. However, similar results with an appropriate martingale difference construction can be obtained. 

We define a sequence $\widetilde{\cM} \defeq \{(\tX_{\rho},\tF_{\rho}); 0 \leq \rho \leq Sm/2\}$ such that $\tX_{0} = 0$ and $\tF_{0} = \sigma(\vxi_{0})$, and for all $0 < \rho \leq Sm/2$,
\begin{align}\label{eq:paired X}
\tX_{\rho} \defeq \tX^{r(\rho)}_{s(\rho)} + \tX^{\ell(\rho)}_{s(\rho)}, \qquad
\tF_{\rho} \defeq \tF_{\rho-1} \vee \sigma(\vxi^{r(\rho)}_{s(\rho)}) \vee \sigma(\vxi^{\ell(\rho)}_{s(\rho)}),
\end{align}
where $r(\rho) \defeq r^{i_{m/2}(\rho)}_{s_{m/2}(\rho)}$, $\ell(\rho) \defeq \ell^{i_{m/2}(\rho)}_{s_{m/2}(\rho)}$, $s(\rho) \defeq s_{m/2}(\rho)$ and $i_{m/2}(\rho)$ and $s_{m/2}(\rho)$ are as defined in \eqref{eq:index map}, and for all $1 \leq i\leq m/2$ and $1\leq s\leq S$
\begin{align*}
\widetilde{X}^{i}_{s} &= \frac{\bV^{i}_{s}}{\sqrt{Sm}}\left(\vbarphi(\vxi^{i}_{s}) - \frac{1}{\bV^{i}_{s}}\sum_{j=1}^{m}\bA_{s}^{ij}\bV^{j}_{s-1}\vbarphi(\vxi^{j}_{s-1})\right).
\end{align*}
Function $\vbarphi$ is as defined in \eqref{eq:centred phi bold}.


\begin{proposition}\label{prop:martingale for parsimonious communication}
Assume (A1). We have the following
\begin{enumerate}
\item $\tX_{\rho}$ is $\tF_{\rho}$-measurable for all $0\leq \rho \leq Sm/2$.
\item $\E[\tX_{\rho}\,\mid\,\tF_{\rho-1}] = 0$ a.s. for all $0 < \rho \leq Sm/2$.
\item $\tX_{\rho}\leq 2\|g\|\osc{\varphi}/\sqrt{Sm}$, for all $0 \leq \rho \leq Sm/2$.
\item and we have the identities
\begin{align}
\sqrt{\frac{S}{m}}\sum_{\rho=1}^{Sm/2} \tX_{\rho} &= \frac{1}{m}\sum_{i=1}^{m} \bV^{i}_{S}\vbarphi(\vxi^{i}_{S}) \label{eq:first decompo modified}\\
&= \left(\frac{1}{m}\sum_{i=1}^{m}\vg(\vxi^{i}_{0})\right)\left(\frac{1}{m}\sum_{i=1}^{m}\vphi(\vxi^{i}_{S})\right) - \frac{1}{m}\sum_{i=1}^{m} \vg(\vxi^{i}_{0})\vphi(\vxi^{i}_{0}) \label{eq:second decompo modified}
\end{align}

\end{enumerate}
\end{proposition}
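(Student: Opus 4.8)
The plan is to mirror the proof of Proposition \ref{prop:martingale decomposition} as closely as possible, accounting for the paired-swap modification on lines \ref{line:loop}--\ref{line:exchange} of Algorithm \ref{alg:new augmented resampling mod}. The first observation is that the swap operation on line \ref{line:exchange} only permutes, within each pair $P_{i,s}$, the two island-valued random variables $\widetilde{\vxi}^{\ell^{i}_{s}}_{s}$ and $\widetilde{\vxi}^{r^{i}_{s}}_{s}$; it never changes the \emph{multiset} $\{\vxi^{\ell^{i}_{s}}_{s},\vxi^{r^{i}_{s}}_{s}\} = \{\widetilde{\vxi}^{\ell^{i}_{s}}_{s},\widetilde{\vxi}^{r^{i}_{s}}_{s}\}$ produced at stage $s$, and it does not touch the weights $\bV^{i}_{s}$, which are computed from $\bA_{s}$ and $\bV^{j}_{s-1}$ alone (lines before the swap). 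Consequently the key sums $\frac{1}{m}\sum_{i=1}^{m}\bV^{i}_{S}\vbarphi(\vxi^{i}_{S})$ and the telescoping identity \eqref{eq:second decompo modified} are invariant under the swap — this is exactly why the identities in item 4 coincide with those of Proposition \ref{prop:martingale decomposition group}. So the content of item 4 reduces to the same algebra as in Proposition \ref{prop:martingale decomposition}: $\bA_{s}$ doubly stochastic gives $\sum_{j}\bA^{ji}_{s}=1$, so $\frac1m\sum_i \bV^i_s\vbarphi(\vxi^i_s)-\frac1m\sum_i\bV^i_{s-1}\vbarphi(\vxi^i_{s-1}) = \sum_{i=1}^{m/2}(\widetilde X^{r^i_s}_s+\widetilde X^{\ell^i_s}_s)$, which telescopes to $\frac1m\sum_i\bV^i_S\vbarphi(\vxi^i_S)$ using $\bV^i_0=\vg(\vxi^i_0)$ and $\sum_i\vg(\vxi^i_0)\vbarphi(\vxi^i_0)=0$; the second equality then follows from Lemma \ref{lem:facts_about_Vs}\eqref{it:constant weights} (applied with $\bA_s$, $M=1$), exactly as in the proof of Proposition \ref{prop:martingale decomposition}. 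Note that pairing $\widetilde X^{r^i_s}_s$ with $\widetilde X^{\ell^i_s}_s$ into a single $\tX_\rho$ is what makes the telescoping work even after the swap, since the swap mixes the $s$-level quantities only within a pair.

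For items 1--3 the argument is standard but must be re-examined in light of the swap. Measurability (item 1): $\bV^{i}_{s}$ is $\sigma(\vxi_0)$-measurable by the analogue of Lemma \ref{lem:facts_about_Vs}\eqref{it:measurability of V}, $\vbarphi(\vxi^j_{s-1})$ is $\bcF_{\rho-1}$-measurable, and $\vxi^{r(\rho)}_{s(\rho)},\vxi^{\ell(\rho)}_{s(\rho)}$ are $\tF_\rho$-measurable by the definition of $\tF_\rho$, so $\tX_\rho = \widetilde X^{r(\rho)}_{s(\rho)}+\widetilde X^{\ell(\rho)}_{s(\rho)}$ is $\tF_\rho$-measurable. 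Boundedness (item 3): each $\widetilde X^{i}_{s}$ is bounded by $\|g\|\osc{\varphi}/\sqrt{Sm}$ by the same reasoning as in Proposition \ref{prop:martingale decomposition}\eqref{it:boundedness} — the conditional expectation structure means $\widetilde X^i_s$ is a weighted difference of $\vbarphi$-values against their barycentre, bounded by $\bV^i_s\osc{\vphi}/\sqrt{Sm}\le\|g\|\osc{\varphi}/\sqrt{Sm}$ since $\osc{\vphi}\le\osc{\varphi}$ — hence $|\tX_\rho|\le 2\|g\|\osc{\varphi}/\sqrt{Sm}$.

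The one genuinely new point, and the main obstacle, is the zero-mean property (item 2). In the unmodified algorithm, $\E[\widetilde X^i_s\mid\bcF_{\rho-1}]=0$ because $\vxi^i_s$ has conditional law $(\bV^i_s)^{-1}\sum_j\bA^{ij}_s\bV^j_{s-1}\delta_{\bxi^j_{s-1}}$ (the analogue of \eqref{eq:P_in_terms_of_V_proofs}). After the swap, the individual conditional law of $\vxi^{r^i_s}_s$ (respectively $\vxi^{\ell^i_s}_s$) is no longer this, but I claim the \emph{pair} still has the right conditional expectation: I would argue that conditionally on $\tF_{\rho-1}$, the joint law of $(\vxi^{r^i_s}_s,\vxi^{\ell^i_s}_s)$ is obtained from that of $(\widetilde\vxi^{r^i_s}_s,\widetilde\vxi^{\ell^i_s}_s)$ — which are conditionally independent each with the above marginal — by a measurable (deterministic given the two values) permutation that fixes the unordered pair. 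Therefore $\vbarphi(\vxi^{r^i_s}_s)+\vbarphi(\vxi^{\ell^i_s}_s) = \vbarphi(\widetilde\vxi^{r^i_s}_s)+\vbarphi(\widetilde\vxi^{\ell^i_s}_s)$ almost surely, since a sum over a multiset is invariant under reordering. Hence
\begin{equation*}
\E[\tX_\rho\mid\tF_{\rho-1}] = \E\big[\widetilde X^{r(\rho)}_{s(\rho)}+\widetilde X^{\ell(\rho)}_{s(\rho)}\,\big|\,\tF_{\rho-1}\big] = \E\big[\bar X^{r(\rho)}_{s(\rho)}+\bar X^{\ell(\rho)}_{s(\rho)}\,\big|\,\tF_{\rho-1}\big] = 0,
\end{equation*}
where $\bar X^i_s$ denotes the same expression with $\widetilde\vxi$ in place of $\vxi$, and the last equality uses the analogue of \eqref{eq:P_in_terms_of_V_proofs} for $\widetilde\vxi$ together with the fact that $\bV^i_s$ and $\bar X^i_s$'s conditioning quantities are $\tF_{\rho-1}$-measurable. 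The subtlety to get right is that although $\vxi^{r^i_s}_s$ alone need not be conditionally $\tF_{\rho-1}$-distributed like $\widetilde\vxi^{r^i_s}_s$, what enters $\tX_\rho$ is only the symmetric sum, and that the centering $\vbarphi$ in \eqref{eq:centred phi bold} depends only on $\vxi_0$, hence is $\tF_{\rho-1}$-measurable and unaffected by the swap. Once item 2 is in hand, Proposition \ref{prop:g augmented convergence} for Algorithm \ref{alg:new augmented resampling mod} follows verbatim by Burkholder--Davis--Gundy exactly as in the proof of Proposition \ref{prop:intro_to_aug_resampling}.
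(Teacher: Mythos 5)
Your proof is correct, and on items 1, 3 and 4 it takes essentially the paper's route: measurability and boundedness are inherited from the unmodified construction, and the identities \eqref{eq:first decompo modified}--\eqref{eq:second decompo modified} are pathwise algebraic consequences of the double stochasticity of $\bA_{s}$ and of $\bV^{i}_{S}$ being constant in $i$, hence unaffected by the change of law induced by the swap --- which is exactly how the paper argues part (d). Where you genuinely diverge is item 2. The paper proceeds by direct enumeration: writing $p_{\ell} \defeq \tfrac{1}{2}\bV^{\ell(\rho)}_{s(\rho)-1}/\bV^{\ell(\rho)}_{s(\rho)}$ and $p_{r}\defeq 1-p_{\ell}$, it computes the post-swap conditional probabilities $\P(\vxi^{\ell(\rho)}_{s(\rho)} = \vxi^{\ell(\rho)}_{s(\rho)-1} \mid \tF_{\rho-1}) = p_{\ell}^2 + 2p_{r}p_{\ell}$, $\P(\vxi^{\ell(\rho)}_{s(\rho)} = \vxi^{r(\rho)}_{s(\rho)-1} \mid \tF_{\rho-1}) = p_{r}^2$ (and their mirror images for $r(\rho)$), and then checks by a short calculation that the two conditional means sum to $2\big(p_{\ell}\vbarphi(\vxi^{\ell(\rho)}_{s(\rho)-1})+p_{r}\vbarphi(\vxi^{r(\rho)}_{s(\rho)-1})\big)$. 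You instead observe that the swap is a measurable permutation fixing the unordered pair, so the symmetric statistic $\vbarphi(\vxi^{\ell(\rho)}_{s(\rho)})+\vbarphi(\vxi^{r(\rho)}_{s(\rho)})$ coincides pathwise with its pre-swap counterpart, whose conditional expectation is immediate from the conditionally i.i.d.\ pre-swap law. This is cleaner, avoids the case enumeration, and makes transparent why pairing $\tX^{\ell(\rho)}_{s(\rho)}$ with $\tX^{r(\rho)}_{s(\rho)}$ into a single increment is the right move. One step you leave implicit and should state: passing from the unweighted identity $\vbarphi(\vxi^{\ell(\rho)}_{s(\rho)})+\vbarphi(\vxi^{r(\rho)}_{s(\rho)})=\vbarphi(\widetilde{\vxi}^{\ell(\rho)}_{s(\rho)})+\vbarphi(\widetilde{\vxi}^{r(\rho)}_{s(\rho)})$ to the equality of the \emph{weighted} sums entering $\tX_{\rho}$ requires $\bV^{\ell(\rho)}_{s(\rho)}=\bV^{r(\rho)}_{s(\rho)}$; this holds because $\bA_{s}$ assigns weight $1/2$ to each member of the pair, so both equal $\tfrac12\big(\bV^{\ell(\rho)}_{s(\rho)-1}+\bV^{r(\rho)}_{s(\rho)-1}\big)$ --- the identity with which the paper opens its proof, and which also underlies the equality of the two barycentre terms you use.
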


\begin{proposition}\label{prop:island augmented resampling}
Assume \eqref{ass:NS} and let $g\in\boundMeas_{+}(\ss)$. Then for any $r\geq1$ there exists a finite constant $\widetilde{B}_{r}$, depending only on $r$, such that no matter what the distribution of $\vxi_{0}$ is, we have
\begin{align*}
\E\Bigg[\Bigg|\Bigg(\frac{1}{m}\sum_{i=1}^{m} \vg(\vxi_{\mathrm{in}}^i)\Bigg)\Bigg(\frac{1}{m}\sum_{i=1}^{m} \vphi(\vxi_{\mathrm{out}}^i)\Bigg)-\frac{1}{m}\sum_{i=1}^{m} \vg(\vxi_{\mathrm{in}}^i)\vphi(\vxi_{\mathrm{in}}^i)\Bigg|^{r}\Bigg]^{\frac{1}{r}} 
\leq \tilde{B_{r}}\sqrt{\frac{S}{m}} \|g\|\osc{\varphi}.
\end{align*}
\end{proposition}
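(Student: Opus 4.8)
The plan is to follow exactly the route used for Proposition~\ref{prop:intro_to_aug_resampling}: write the error on the left-hand side as the scaled partial sum of a bounded martingale difference sequence, and then apply the Burkholder--Davis--Gundy inequality. Almost all of the structural work has already been carried out in Proposition~\ref{prop:martingale for parsimonious communication}, which establishes that $\widetilde{\cM} = \{(\tX_{\rho},\tF_{\rho}); 0\leq \rho\leq Sm/2\}$ is a martingale: each $\tX_{\rho}$ is $\tF_{\rho}$-measurable with $\E[\tX_{\rho}\mid\tF_{\rho-1}]=0$ a.s., there is the uniform pathwise bound $|\tX_{\rho}|\leq 2\|g\|\osc{\varphi}/\sqrt{Sm}$, and the partial-sum identity \eqref{eq:first decompo modified}--\eqref{eq:second decompo modified} reads
\[
\sqrt{\tfrac{S}{m}}\sum_{\rho=1}^{Sm/2}\tX_{\rho}=\left(\frac{1}{m}\sum_{i=1}^{m}\vg(\vxi^{i}_{0})\right)\left(\frac{1}{m}\sum_{i=1}^{m}\vphi(\vxi^{i}_{S})\right)-\frac{1}{m}\sum_{i=1}^{m}\vg(\vxi^{i}_{0})\vphi(\vxi^{i}_{0}),
\]
which, recalling that $\vxi^{i}_{0}=\vxi^{i}_{\mathrm{in}}$ and $\vxi^{i}_{S}=\vxi^{i}_{\mathrm{out}}$ in Algorithm~\ref{alg:new augmented resampling mod}, is precisely the quantity whose $L^{r}$ norm is to be bounded.

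Given this, the remaining argument is short and I would carry it out as follows. First, since $\widetilde{\cM}$ has $Sm/2$ increments, the quadratic variation is bounded pathwise by $\sum_{\rho=1}^{Sm/2}|\tX_{\rho}|^{2}\leq \tfrac{Sm}{2}\cdot\tfrac{4\|g\|^{2}\osc{\varphi}^{2}}{Sm}=2\|g\|^{2}\osc{\varphi}^{2}$. Then the Burkholder--Davis--Gundy inequality gives, for each $r\geq 1$, a finite constant $B_{r}$ depending only on $r$ such that
\[
\E\!\left[\left|\sum_{\rho=1}^{Sm/2}\tX_{\rho}\right|^{r}\right]\leq B_{r}^{r}\,\E\!\left[\left(\sum_{\rho=1}^{Sm/2}|\tX_{\rho}|^{2}\right)^{r/2}\right]\leq B_{r}^{r}\bigl(2\|g\|^{2}\osc{\varphi}^{2}\bigr)^{r/2}.
\]
Multiplying through by $\sqrt{S/m}$ and using the identity above then yields the claim with $\widetilde{B}_{r}=\sqrt{2}\,B_{r}$; the constant is independent of the law of $\vxi_{0}$ precisely because the bound on $|\tX_{\rho}|$ is pathwise.

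The genuinely delicate point is not in this final assembly but already inside Proposition~\ref{prop:martingale for parsimonious communication}, namely verifying that the deterministic relabelling on lines~\ref{line:loop}--\ref{line:exchange} of Algorithm~\ref{alg:new augmented resampling mod} does not destroy the martingale-difference property $\E[\tX_{\rho}\mid\tF_{\rho-1}]=0$. The idea is that the swap on line~\ref{line:exchange} only permutes the two islands within a pair $P_{i,s}$, so the \emph{unordered} pair $\{\vxi^{\ell^{i}_{s}}_{s},\vxi^{r^{i}_{s}}_{s}\}$ has the same conditional law as the unpermuted one, and $\tX_{\rho}$ has been defined (see \eqref{eq:paired X}) to aggregate the two paired one-step increments $\tX^{r(\rho)}_{s(\rho)}+\tX^{\ell(\rho)}_{s(\rho)}$ in a way that is invariant under this swap --- this is exactly why the martingale here is indexed over pairs rather than over individual islands as in Proposition~\ref{prop:martingale decomposition}. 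Once that invariance is established, the conditional-mean-zero property follows from the one-step conditional independence and the analogue of \eqref{eq:P_in_terms_of_V_proofs}, just as in Proposition~\ref{prop:martingale decomposition}. I would also record that $Sm/2\in\{1,2,\ldots\}$, which holds under \eqref{ass:NS} since $m=2^{S}$ with $S\geq 1$, so the index range of $\widetilde{\cM}$ is well defined.
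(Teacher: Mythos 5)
Your proposal is correct and follows essentially the same route as the paper, which likewise proves this result in one line by applying the Burkholder--Davis--Gundy inequality to the martingale of Proposition~\ref{prop:martingale for parsimonious communication} and invoking its identities \eqref{eq:first decompo modified}--\eqref{eq:second decompo modified} and the increment bound. Your explicit pathwise bound on the quadratic variation and the resulting constant $\widetilde{B}_{r}=\sqrt{2}\,B_{r}$ simply fill in the details the paper leaves implicit.
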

\begin{proof}
The claim follows by applying the Burkholder-Davis-Gundy inequality to the expectation 
\begin{equation*}
\E\left[\left|\sqrt{\frac{S}{m}}\sum_{\rho=1}^{Sm/2}\tX_{\rho}\right|^{r}\right].
\end{equation*}
\end{proof}


\section{Augmented island resampling particle filter}
\label{sec:grouped r2pf}

We will now analyse the convergence properties of Algorithm \ref{alg: ir2pf}. The analysis is somewhat more complicated than the analogous analysis in Section \ref{sec:r2pf}. Additional complications arise due to Proposition \ref{prop:g augmented convergence} being independent of $M$. This implies that the two regimes identified earlier, i.e.~$m$ fixed, $M\to\infty$ (regime 1) and $m\to\infty$, $M$ fixed (regime 2), cannot be covered by one overarching analysis as before, but the scenarios have to be studied separately. The more technical proofs are postponed to Appendix \ref{sec:proof for airpf}.

\subsection{Convergence when $m$ is fixed and $M\to\infty$}

We introduce the following two PE specific empirical measure approximations
\begin{equation}\label{eq:PE-wise prediction appro}
\pi^{M,k}_{n} \defeq \frac{1}{M}\sum_{j=1}^{M} \delta_{\zeta^{(k-1)M+j}_{n}},\quad \widehat{\pi}^{M,k}_{n} \defeq \frac{1}{M}\sum_{j=1}^{M} \delta_{\widehat{\zeta}^{(k-1)M+j}_{n}},\quad 1 \leq k \leq m,
\end{equation}
for $\pi_{n}$ and $\widehat{\pi}_{n}$, respectively, based on the PE specific subsamples 
\begin{equation*}
\vzeta^{k}_{n}\defeq(\zeta^{(k-1)M+1}_{n},\ldots,\zeta^{kM}_{n}), \quad\text{and}\quad \widehat{\vzeta}^{k}_{n} \defeq (\widehat{\zeta}^{(k-1)M+1}_{n},\ldots,\widehat{\zeta}^{kM}_{n}) \qquad 1 \leq k \leq m.
\end{equation*}
With these notations we have the following analogue of Proposition \ref{prop:induction step}.

%
%
%
\begin{proposition}\label{prop:gr2ph induction step 1}
Assume \eqref{ass:NS} and \eqref{ass:potential}. If for some $n \geq 0$, there exists $C_{n,r}\in \R_{+}$ such that for all $1 \leq k \leq m$
\begin{equation}\label{eq:gr2pf induction ass}
\sup_{\varphi\in\boundMeas_{1}(\X)} \lpnorm{r}{\pi^{M,k}_{n}(\varphi) - \pi_{n}(\varphi)} \leq \frac{C_{n,r}}{\sqrt{M}},
\end{equation}
then there exists $\widehat{C}_{n,r} \in \R_{+}$ such that for all $1 \leq k \leq m$
\begin{align}\label{eq:claim inequality}
\sup_{\varphi\in\boundMeas_{1}(\X)} \lpnorm{r}{\widehat{\pi}^{M,k}_{n}(\varphi) - \widehat{\pi}_{n}(\varphi)} &\leq \frac{\widehat{C}_{n,r}}{\sqrt{M}}.
\end{align}
\end{proposition}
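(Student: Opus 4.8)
The plan rests on a structural feature of the island resampling step: at each of its $S$ stages, Algorithm \ref{alg:new augmented resampling} (and likewise Algorithm \ref{alg:new augmented resampling mod}) copies an \emph{entire} island $\vxi^{j}_{s-1}$ onto $\vxi^{i}_{s}$ and never mixes particles between islands. Hence after all $S$ stages the $k$-th output island $\widehat{\vzeta}^{k}_{n}$ is a verbatim copy of the $K$-th within-island-resampled island $\widecheck{\vzeta}^{K}_{n}$ for some random index $K=K(k)\in\{1,\ldots,m\}$. Writing $\widecheck{\pi}^{M,k'}_{n}\defeq M^{-1}\sum_{j=1}^{M}\delta_{\widecheck{\zeta}^{(k'-1)M+j}_{n}}$ (which, the weights $\widecheck{\vW}_{n}$ being constant within each island, coincides with the normalised weighted island measure), this gives the pathwise identity $\widehat{\pi}^{M,k}_{n}=\widecheck{\pi}^{M,K}_{n}$. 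It therefore suffices to bound $\widecheck{\pi}^{M,k'}_{n}(\varphi)-\widehat{\pi}_{n}(\varphi)$ in $L^{r}$ \emph{uniformly in $k'$} --- this is precisely where the hypothesis that \emph{every} island $\pi^{M,k}_{n}$ approximates $\pi_{n}$ is used --- and then to pay a finite, $m$-dependent constant to pass to the random index $K$.

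Fixing $1\le k'\le m$, I would use the decomposition
\[
\widecheck{\pi}^{M,k'}_{n}(\varphi)-\widehat{\pi}_{n}(\varphi)=\Bigl(\widecheck{\pi}^{M,k'}_{n}(\varphi)-\tfrac{\pi^{M,k'}_{n}(g_{n}\varphi)}{\pi^{M,k'}_{n}(g_{n})}\Bigr)+\Bigl(\tfrac{\pi^{M,k'}_{n}(g_{n}\varphi)}{\pi^{M,k'}_{n}(g_{n})}-\tfrac{\pi_{n}(g_{n}\varphi)}{\pi_{n}(g_{n})}\Bigr).
\]
The first bracket is the within-island multinomial resampling error: conditionally on $\vzeta_{n}$, $\widecheck{\pi}^{M,k'}_{n}(\varphi)$ is the average of $M$ i.i.d.\ draws of mean $\pi^{M,k'}_{n}(g_{n}\varphi)/\pi^{M,k'}_{n}(g_{n})$ and range at most $\osc{\varphi}$, so the conditional Burkholder--Davis--Gundy inequality and the tower property bound its $L^{r}$ norm by $B_{r}\osc{\varphi}/\sqrt{M}$ (this is, up to the harmless factor $\pi^{M,k'}_{n}(g_{n})\le\infnorm{g_{n}}$, Proposition \ref{prop:gmn convergence}\eqref{it:part b}). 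For the second bracket, putting $\widetilde{\varphi}\defeq\varphi-\widehat{\pi}_{n}(\varphi)$ so that $\pi_{n}(g_{n}\widetilde{\varphi})=0$, it equals $(\pi^{M,k'}_{n}-\pi_{n})(g_{n}\widetilde{\varphi})/\pi^{M,k'}_{n}(g_{n})$, and its numerator has $L^{r}$ norm at most $\infnorm{g_{n}\widetilde{\varphi}}\,C_{n,r}/\sqrt{M}$ by the induction hypothesis \eqref{eq:gr2pf induction ass} applied in island $k'$ (note $g_{n}\widetilde{\varphi}/\infnorm{g_{n}\widetilde{\varphi}}\in\boundMeas_{1}(\X)$).

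The one genuine obstacle is that \eqref{ass:potential} does not bound $g_{n}$ away from $0$, so one cannot simply divide by $\pi^{M,k'}_{n}(g_{n})$; this is handled exactly as in the proof of Proposition \ref{prop:induction step}\eqref{it:step part a}, by splitting on the event $E_{M}\defeq\{\pi^{M,k'}_{n}(g_{n})\ge\pi_{n}(g_{n})/2\}$. On $E_{M}$ one has $1/\pi^{M,k'}_{n}(g_{n})\le 2/\pi_{n}(g_{n})$ and both brackets are $O(M^{-1/2})$ in $L^{r}$; on $E_{M}^{c}$ the difference $\widecheck{\pi}^{M,k'}_{n}(\varphi)-\widehat{\pi}_{n}(\varphi)$ is trivially at most $\osc{\varphi}$ in absolute value, while $\P(E_{M}^{c})\le\P\bigl(|(\pi^{M,k'}_{n}-\pi_{n})(g_{n})|\ge\pi_{n}(g_{n})/2\bigr)\le\bigl(2\infnorm{g_{n}}C_{n,r}/(\pi_{n}(g_{n})\sqrt{M})\bigr)^{r}$ by \eqref{eq:gr2pf induction ass} and Markov's inequality, so the contribution of $E_{M}^{c}$ to the $r$-th absolute moment is $O(M^{-r/2})$. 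Combining yields $\sup_{\varphi\in\boundMeas_{1}(\X)}\lpnorm{r}{\widecheck{\pi}^{M,k'}_{n}(\varphi)-\widehat{\pi}_{n}(\varphi)}\le\widehat{C}'_{n,r}/\sqrt{M}$ uniformly in $k'$. Finally, from the pathwise bound $|\widehat{\pi}^{M,k}_{n}(\varphi)-\widehat{\pi}_{n}(\varphi)|=|\widecheck{\pi}^{M,K}_{n}(\varphi)-\widehat{\pi}_{n}(\varphi)|\le\max_{1\le k'\le m}|\widecheck{\pi}^{M,k'}_{n}(\varphi)-\widehat{\pi}_{n}(\varphi)|$ and the crude estimate $\E[\max_{k'}|\cdot|^{r}]\le\sum_{k'}\E[|\cdot|^{r}]$, one obtains \eqref{eq:claim inequality} with $\widehat{C}_{n,r}=m^{1/r}\widehat{C}'_{n,r}$. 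Beyond this denominator bookkeeping, the argument is a direct transcription of the single-sample analysis of Section \ref{sec:r2pf} to a single island, glued together by the island-copying observation.
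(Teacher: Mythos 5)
Your argument is correct, and it takes a genuinely different route from the paper's. The paper's proof uses the same outer skeleton --- its second and third Minkowski terms in \eqref{eq:threeway decompo} are exactly your two brackets, controlled via Proposition \ref{prop:gmn convergence}\eqref{it:part b} and the induction hypothesis --- but it attacks the island-resampling discrepancy $\widehat{\pi}^{M,k}_{n}-\widecheck{\pi}^{M,k}_{n}$ head on: it inserts the global measure $\widecheck{\pi}^{N}_{n}$, writes $\E[|\widehat{\pi}^{M,k}_{n}(\varphi)-\widecheck{\pi}^{N}_{n}(\varphi)|^{r}]$ as $\E[\sum_{i}P_{i}|\widecheck{\pi}^{M,i}_{n}(\varphi)-\widecheck{\pi}^{N}_{n}(\varphi)|^{r}]$ with $P_{i}$ the conditional probability that island $k$ ends up carrying a copy of island $i$, and then pays Cauchy--Schwarz to trade $L^{r}$ norms for $L^{2r}$ norms summed over all $m$ islands. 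Your island-copying observation exploits the same structural fact (that $\sum_{i}P_{i}=1$, i.e.\ each output island is a verbatim copy of some input island) through the cruder but cleaner pathwise bound $|\widehat{\pi}^{M,k}_{n}(\varphi)-\widehat{\pi}_{n}(\varphi)|\le\max_{k'}|\widecheck{\pi}^{M,k'}_{n}(\varphi)-\widehat{\pi}_{n}(\varphi)|$, which costs only a union bound and a factor $m^{1/r}$ and avoids the $L^{2r}$ bookkeeping entirely; since $m$ is fixed in this regime, both routes deliver the same $O(M^{-1/2})$ rate with $m$-dependent constants. The other divergence is in handling the random denominators: the paper never divides by $\pi^{M,k}_{n}(g_{n})$, instead multiplying through by the deterministic $\pi_{n}(g_{n})$ via the identities \eqref{eq:indentities} and using the Crisan--Doucet-type bound \eqref{eq:middle term}, in which $|\mu(g_{n}\varphi)/\mu(g_{n})|\le\infnorm{\varphi}$ disposes of the random denominator deterministically, whereas you split on $\{\pi^{M,k'}_{n}(g_{n})\ge\pi_{n}(g_{n})/2\}$ and control the complement by Markov; both are sound, and your first bracket indeed needs no splitting at all since the within-island conditional law is well defined under \eqref{ass:potential} and the centred summands are bounded by $\osc{\varphi}$ regardless of the weights. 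One small caveat: your parenthetical identifying your first-bracket bound with Proposition \ref{prop:gmn convergence}\eqref{it:part b} ``up to the harmless factor $\pi^{M,k'}_{n}(g_{n})$'' is not literally right, because that factor sits inside the expectation and may be small, so it cannot simply be divided out; this is immaterial, however, since your direct conditional Burkholder--Davis--Gundy argument stands on its own.
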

Proposition \ref{prop:gr2ph induction step 1} enables us to proof the convergence of Algorithm \ref{alg: ir2pf} by induction according to the following theorem.
\begin{theorem}\label{thm:grouped convergence 1}
Assume \eqref{ass:NS} and \eqref{ass:potential}. If $\pi^{N}_{n}$ is computed according to Algorithm \ref{alg: ir2pf}, then for all $n\geq 0$ there exists $C_{n,r}\in \R_{+}$ such that 
\begin{align*}
\sup_{\varphi\in\boundMeas_{1}(\X)}\lpnorm{r}{\pi^{N}_{n}(\varphi) - \pi_{n}(\varphi)} \leq \frac{C_{n,r}}{\sqrt{M}}.
\end{align*}
\end{theorem}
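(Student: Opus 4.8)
The plan is to establish the sharper, island-wise bound: for every $n\geq 0$ there is a constant $C_{n,r}\in\R_{+}$ (depending also on the fixed $m$) with
\[
\sup_{1\leq k\leq m}\ \sup_{\varphi\in\boundMeas_{1}(\X)}\lpnorm{r}{\pi^{M,k}_{n}(\varphi)-\pi_{n}(\varphi)}\leq\frac{C_{n,r}}{\sqrt{M}},
\]
and then to read off the theorem by averaging over the $m$ islands: since $N=mM$ we have $\pi^{N}_{n}=\frac{1}{m}\sum_{k=1}^{m}\pi^{M,k}_{n}$, so Minkowski's inequality gives $\lpnorm{r}{\pi^{N}_{n}(\varphi)-\pi_{n}(\varphi)}\leq\frac{1}{m}\sum_{k=1}^{m}\lpnorm{r}{\pi^{M,k}_{n}(\varphi)-\pi_{n}(\varphi)}\leq C_{n,r}/\sqrt{M}$. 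The island-wise bound is proved by induction on $n$, and it is essential that the induction hypothesis be kept uniform in $k$: the augmented island resampling redistributes whole samples among the PEs, so the post-resampling accuracy of any single PE depends on the pre-resampling accuracy of all of them, which is exactly why the hypothesis of Proposition \ref{prop:gr2ph induction step 1} is stated over all $k$ simultaneously.

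For $n=0$, the subsample $(\zeta^{(k-1)M+1}_{0},\ldots,\zeta^{kM}_{0})$ is i.i.d.\ from $\pi_{0}$ for each $k$, so Lemma \ref{lem:delmoral}, with $N$ there replaced by $M$, gives the bound at rank $0$ with $C_{0,r}=C^{\ast}_{r}$. Assume the bound holds at some rank $n$. Then the precondition \eqref{eq:gr2pf induction ass} of Proposition \ref{prop:gr2ph induction step 1} is met, and the proposition yields a constant $\widehat{C}_{n,r}$ with $\sup_{k}\sup_{\varphi\in\boundMeas_{1}(\X)}\lpnorm{r}{\widehat{\pi}^{M,k}_{n}(\varphi)-\widehat{\pi}_{n}(\varphi)}\leq\widehat{C}_{n,r}/\sqrt{M}$; this single invocation already absorbs the combined effect of \textsc{WithinIslandResample} and \textsc{AugmentedIslandResample}, because $\widehat{\pi}^{M,k}_{n}$ is built from the particles $\widehat{\vzeta}_{n}$ produced after both subroutines in Algorithm \ref{alg: ir2pf}.

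It then remains to propagate this bound through the mutation step. Using $\pi_{n+1}(\varphi)=\widehat{\pi}_{n}(f(\varphi))$ and Minkowski's inequality,
\[
\lpnorm{r}{\pi^{M,k}_{n+1}(\varphi)-\pi_{n+1}(\varphi)}\leq\lpnorm{r}{\pi^{M,k}_{n+1}(\varphi)-\widehat{\pi}^{M,k}_{n}(f(\varphi))}+\lpnorm{r}{\widehat{\pi}^{M,k}_{n}(f(\varphi))-\widehat{\pi}_{n}(f(\varphi))}.
\]
In Algorithm \ref{alg: ir2pf} the $N$ particles are mutated conditionally independently, so, conditionally on $\widehat{\vzeta}_{n}$, the $k$-th island subsample $\vzeta^{k}_{n+1}$ has law $\prod_{j=1}^{M}f(\widehat{\zeta}^{(k-1)M+j}_{n},\,\cdot\,)$; hence Lemma \ref{lem:mutation martingale}, with $N$ replaced by $M$, bounds the first term by $2B_{r}/\sqrt{M}$ uniformly in $\varphi\in\boundMeas_{1}(\X)$. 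For the second term, $f(\varphi)\in\boundMeas_{1}(\X)$ whenever $\varphi\in\boundMeas_{1}(\X)$ since $f$ is a Markov kernel, so the bound on $\widehat{\pi}^{M,k}_{n}-\widehat{\pi}_{n}$ obtained above, used with test function $f(\varphi)$, bounds it by $\widehat{C}_{n,r}/\sqrt{M}$. Thus the island-wise bound holds at rank $n+1$ with $C_{n+1,r}=2B_{r}+\widehat{C}_{n,r}$, closing the induction.

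I do not expect a genuine obstacle in this argument itself: it is a short induction whose substantive content is outsourced to Proposition \ref{prop:gr2ph induction step 1} (and through it to Propositions \ref{prop:gmn convergence} and \ref{prop:g augmented convergence}), which is where the real work lies. The only points in the present proof demanding care are keeping the induction hypothesis uniform in $k$, applying Lemma \ref{lem:delmoral} and Lemma \ref{lem:mutation martingale} to the size-$M$ subsamples rather than to the full ensemble, and noting that $f(\varphi)$ stays in $\boundMeas_{1}(\X)$ so that the resampling bound can be re-used with test function $f(\varphi)$.
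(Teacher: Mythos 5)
Your proof is correct and follows essentially the same route as the paper: an induction on $n$ for the island-wise quantity $\sup_{k}\sup_{\varphi}\lpnorm{r}{\pi^{M,k}_{n}(\varphi)-\pi_{n}(\varphi)}$, initialised via Lemma \ref{lem:delmoral} on the size-$M$ subsamples and propagated via Lemma \ref{lem:mutation martingale} and Proposition \ref{prop:gr2ph induction step 1}. The only difference is that you make explicit the final averaging step $\pi^{N}_{n}=\frac{1}{m}\sum_{k=1}^{m}\pi^{M,k}_{n}$ with Minkowski, which the paper leaves implicit.
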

\begin{proof}
The proof is by induction, the assumption being that for some $n \geq 0$
\begin{equation}\label{eq:gr2pf induction assumption}
\sup_{\varphi\in\boundMeas_{1}(\X)}\sup_{1\leq k \leq m} \lpnorm{r}{\pi^{M,k}_{n}(\varphi)-\pi_{n}(\varphi)}\leq \frac{C_{n,r}}{\sqrt{M}}.
\end{equation}
The induction is started by observing that \eqref{eq:gr2pf induction assumption} holds for $n=0$ by Lemma \ref{lem:delmoral}. Now suppose \eqref{eq:gr2pf induction assumption} holds for some $n\geq 0$. By Minkowski's inequality
\begin{align*}
\sup_{\varphi\in\boundMeas_{1}(\X)}\lp{r}{\pi^{M,k}_{n+1}(\varphi) - \pi_{n+1}(\varphi)} &\leq 
\sup_{\varphi\in\boundMeas_{1}(\X)}\lp{r}{\pi^{M,k}_{n+1}(\varphi) - \widehat{\pi}^{M,k}_{n}(f(\varphi))} \\&~+ 
\sup_{\varphi\in\boundMeas_{1}(\X)}\lp{r}{\widehat{\pi}^{M,k}_{n}(f(\varphi))-\widehat{\pi}_{n}(f(\varphi))}.
\end{align*}
Similarly as in the proof of Theorem \ref{thm:convergence} we can bound the two terms on the r.h.s.~by using Lemma \ref{lem:mutation martingale} and Proposition \ref{prop:gr2ph induction step 1}, respectively, to see that \eqref{eq:gr2pf induction assumption} holds for $n+1$ with $C_{n+1,r} = 2B_{r} + \widehat{C}_{n,r}$.
\end{proof}

\subsection{Convergence when $m \to \infty$ and $M$ is fixed}

For regime 2 we have the following analogues of Proposition \ref{prop:gr2ph induction step 1} and Theorem \ref{thm:grouped convergence 1}.
\begin{proposition}\label{prop:induction step M fixed}
Assume \eqref{ass:NS} and \eqref{ass:potential}. If for some $n\geq 0$ there exists $C_{n,r}\in \R_{+}$ such that
\begin{equation}\label{eq:group convergence precondition}
\sup_{\varphi\in\boundMeas_{1}(\X)}\lpnorm{r}{\pi^{N}_{n}(\varphi) - \pi_{n}(\varphi)} \leq C_{n,r}\sqrt{\frac{S}{m}},
\end{equation} 
then there exists $\widehat{C}_{n,r}\in \R$ such that
\begin{equation*}
\sup_{\varphi\in\boundMeas_{1}(\X)} \lpnorm{r}{\widehat{\pi}^{N}_{n}(\varphi) - \widehat{\pi}_{n}(\varphi)} \leq \widehat{C}_{n,r}\sqrt{\frac{S}{m}}, 
\end{equation*}
where $\widehat{\pi}^{N}_{n} = N^{-1}\sum_{i=1}^{N}\delta_{\widehat{\xi}_{n}^{i}}$ for all $n\geq 0$.
\end{proposition}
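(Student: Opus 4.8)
The plan is to mirror the proof of Proposition~\ref{prop:induction step}(\ref{it:step part a}), but using the $\sqrt{S/m}$-rate results of Section~\ref{sec:another r2 resampler} in place of Proposition~\ref{prop:intro_to_aug_resampling}. Recall that in Algorithm~\ref{alg: ir2pf} the sample $\widehat{\vzeta}_{n}$ is obtained from $\vzeta_{n}$ in two steps: \textsc{WithinIslandResample}$(\vzeta_{n},g_{n})$ returns $(\widecheck{\vzeta}_{n},\widecheck{\vW}_{n})$, and then \textsc{AugmentedIslandResample} is applied with the (particle-labelled) weight $\widecheck{g}_{n}$ to give $\widehat{\vzeta}_{n}$. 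Writing $\widecheck{\pi}^{N}_{n} \defeq (\sum_{i}\widecheck{W}^{i}_{n})^{-1}\sum_{i}\widecheck{W}^{i}_{n}\delta_{\widecheck{\zeta}^{i}_{n}}$ and using island notation $\widecheck{\vzeta}^{i}_{n},\widehat{\vzeta}^{i}_{n}$, the first task is to record the bookkeeping identities $\frac{1}{m}\sum_{i=1}^{m}\vg(\widecheck{\vzeta}^{i}_{n}) = \pi^{N}_{n}(g_{n})$, $\frac{1}{m}\sum_{i=1}^{m}\vphi(\widehat{\vzeta}^{i}_{n}) = \widehat{\pi}^{N}_{n}(\varphi)$, and $\frac{1}{m}\sum_{i=1}^{m}\vg(\widecheck{\vzeta}^{i}_{n})\vphi(\widecheck{\vzeta}^{i}_{n}) = \pi^{N}_{n}(g_{n})\widecheck{\pi}^{N}_{n}(\varphi)$, where $\vg,\vphi$ are the island averages (of $\widecheck{g}_{n}$ and $\varphi$) as in Proposition~\ref{prop:g augmented convergence}. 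These follow from \eqref{eq:out weight} (the within-island output weights are constant on each island) together with $\sum_{i}\widecheck{W}^{i}_{n} = N\pi^{N}_{n}(g_{n})$; the only point requiring care is that weights must be tracked by particle label rather than by particle value.

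With these identities in hand, Proposition~\ref{prop:gmn convergence}(\ref{it:part a}) applied to the within-island step gives $\lpnorm{r}{\pi^{N}_{n}(g_{n})\widecheck{\pi}^{N}_{n}(\varphi) - \pi^{N}_{n}(g_{n}\varphi)} \leq B_{r}\infnorm{g_{n}}\osc{\varphi}/\sqrt{N}$, and Proposition~\ref{prop:g augmented convergence} (equivalently Proposition~\ref{prop:island augmented resampling}), applied to the augmented island step with the weight $\widecheck{g}_{n}$ — for which $\infnorm{\widecheck{g}_{n}}\leq\infnorm{g_{n}}$ — gives $\lpnorm{r}{\pi^{N}_{n}(g_{n})\widehat{\pi}^{N}_{n}(\varphi) - \pi^{N}_{n}(g_{n})\widecheck{\pi}^{N}_{n}(\varphi)} \leq B_{r}\infnorm{g_{n}}\osc{\varphi}\sqrt{S/m}$. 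Since $N = mM \geq m$ and $S \geq 1$, the triangle inequality yields $\lpnorm{r}{\pi^{N}_{n}(g_{n})\widehat{\pi}^{N}_{n}(\varphi) - \pi^{N}_{n}(g_{n}\varphi)} \leq 2B_{r}\infnorm{g_{n}}\osc{\varphi}\sqrt{S/m}$. Next, because $\pi_{n}\big(g_{n}(\varphi - \widehat{\pi}_{n}(\varphi))\big) = 0$ by definition of $\widehat{\pi}_{n}$,
\begin{equation*}
\pi^{N}_{n}(g_{n})\big(\widehat{\pi}^{N}_{n}(\varphi) - \widehat{\pi}_{n}(\varphi)\big) = \big[\pi^{N}_{n}(g_{n})\widehat{\pi}^{N}_{n}(\varphi) - \pi^{N}_{n}(g_{n}\varphi)\big] + \big(\pi^{N}_{n} - \pi_{n}\big)\big(g_{n}(\varphi - \widehat{\pi}_{n}(\varphi))\big),
\end{equation*}
and applying the hypothesis \eqref{eq:group convergence precondition} to $g_{n}(\varphi - \widehat{\pi}_{n}(\varphi))/(\infnorm{g_{n}}\osc{\varphi}) \in \boundMeas_{1}(\X)$ bounds the $L^{r}$ norm of the second term by $C_{n,r}\infnorm{g_{n}}\osc{\varphi}\sqrt{S/m}$. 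Hence $\lpnorm{r}{\pi^{N}_{n}(g_{n})(\widehat{\pi}^{N}_{n}(\varphi) - \widehat{\pi}_{n}(\varphi))} \leq K_{n,r}\sqrt{S/m}$ with $K_{n,r} \defeq (2B_{r} + C_{n,r})\infnorm{g_{n}}\osc{\varphi}$.

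The main obstacle — the only step that is not a routine transcription of Proposition~\ref{prop:induction step}(\ref{it:step part a}) — is removing the random factor $\pi^{N}_{n}(g_{n})$, since under \eqref{ass:potential} alone $g_{n}$ need not be bounded below. I would handle this by the standard truncation at half the deterministic normalising constant: on the event $A \defeq \{\pi^{N}_{n}(g_{n}) \geq \pi_{n}(g_{n})/2\}$ one divides safely, so $|\widehat{\pi}^{N}_{n}(\varphi) - \widehat{\pi}_{n}(\varphi)| \leq (2/\pi_{n}(g_{n}))\,|\pi^{N}_{n}(g_{n})(\widehat{\pi}^{N}_{n}(\varphi) - \widehat{\pi}_{n}(\varphi))|$; on $A^{c}$ one uses the deterministic bound $|\widehat{\pi}^{N}_{n}(\varphi) - \widehat{\pi}_{n}(\varphi)| \leq \osc{\varphi}$ together with $\P(A^{c}) \leq \P(|(\pi^{N}_{n} - \pi_{n})(g_{n})| > \pi_{n}(g_{n})/2)$ and Markov's inequality, which by \eqref{eq:group convergence precondition} gives $\P(A^{c}) \leq (2\infnorm{g_{n}}C_{n,r}/\pi_{n}(g_{n}))^{r}(S/m)^{r/2}$. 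Splitting $\E[|\widehat{\pi}^{N}_{n}(\varphi) - \widehat{\pi}_{n}(\varphi)|^{r}]$ over $A$ and $A^{c}$ and combining yields the claimed bound with a constant $\widehat{C}_{n,r}$ depending only on $r$, $\infnorm{g_{n}}$, $\pi_{n}(g_{n})$ and $C_{n,r}$, hence on $n$; since $\osc{\varphi} \leq 2$ for $\varphi \in \boundMeas_{1}(\X)$, the bound is uniform over such $\varphi$. I expect this denominator argument to be the technical heart, with the two-stage structure of the resampling introducing no real difficulty because the within-island error enters only at the faster rate $N^{-1/2} = O(\sqrt{S/m})$ and is absorbed into the $\sqrt{S/m}$ term from the augmented island step.
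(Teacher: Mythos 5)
Your proof is correct, and its skeleton matches the paper's: pass through the intermediate weighted measure $\widecheck{\pi}^{N}_{n}$, control the within-island stage with Proposition \ref{prop:gmn convergence}\eqref{it:part a} at rate $N^{-1/2}$ and the augmented island stage with Proposition \ref{prop:g augmented convergence} at rate $\sqrt{S/m}$, and use the hypothesis \eqref{eq:group convergence precondition} to handle the normalisation. The one place you genuinely diverge is the step you single out as "the technical heart": dividing out the random factor $\pi^{N}_{n}(g_{n})$. You do this by truncating on $A=\{\pi^{N}_{n}(g_{n})\geq\pi_{n}(g_{n})/2\}$ and controlling $\P(A^{c})$ by Markov's inequality, which is valid but costs an extra $2^{r}$-type constant and an appeal to the deterministic bound $|\widehat{\pi}^{N}_{n}(\varphi)-\widehat{\pi}_{n}(\varphi)|\leq\osc{\varphi}$ on the bad event. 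The paper never divides by the random normaliser at all: it decomposes $\widehat{\pi}^{N}_{n}(\varphi)-\widehat{\pi}_{n}(\varphi)$ into three terms each of which is written as (unnormalised error)$/\pi_{n}(g_{n})$ plus (bounded self-normalised average)$\times(\pi_{n}(g_{n})-\pi^{N}_{n}(g_{n}))/\pi_{n}(g_{n})$ — the identities \eqref{eq:indentities} and the Crisan--Doucet-style bound \eqref{eq:middle term} — so that every denominator is the deterministic $\pi_{n}(g_{n})$ and the discrepancy term is absorbed by the hypothesis in $L^{r}$ directly. This is exactly the device already used in the proof of Proposition \ref{prop:induction step}\eqref{it:step part a}, which you said you were mirroring, so the truncation is avoidable; your version buys nothing extra but loses a cleaner constant, while remaining a perfectly sound alternative. (Two minor points, neither fatal: your chain $N^{-1/2}\leq\sqrt{S/m}$ implicitly needs $S\geq1$, i.e.\ $m\geq2$, which is the only nondegenerate case anyway; and your bookkeeping identity $\frac{1}{m}\sum_{i}\vg(\widecheck{\vzeta}^{i}_{n})\vphi(\widecheck{\vzeta}^{i}_{n})=\pi^{N}_{n}(g_{n})\widecheck{\pi}^{N}_{n}(\varphi)$, which you correctly flag as requiring label-based weights, is precisely the computation the paper carries out explicitly before applying Proposition \ref{prop:g augmented convergence}.)
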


\begin{theorem}\label{thm:g convergence 2} 
Assume \eqref{ass:NS} and \eqref{ass:potential}. If $\pi^{N}_{n}$ is computed according to Algorithm \ref{alg: ir2pf}, then for all $n\geq 0$ there exists $C_{n,r} \in \R_{+}$ such that 
\begin{align*}
\sup_{\varphi\in\boundMeas_{1}(\X)}\lpnorm{r}{\pi^{N}_{n}(\varphi) - \pi_{n}(\varphi)} \leq C_{n,r}{\sqrt{\frac{S}{m}}}.
\end{align*}
\end{theorem}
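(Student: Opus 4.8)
The plan is to prove Theorem~\ref{thm:g convergence 2} by induction on $n$, mirroring the structure of the proof of Theorem~\ref{thm:convergence}\eqref{it:convergence} but replacing the role of Proposition~\ref{prop:induction step} with Proposition~\ref{prop:induction step M fixed}. The induction hypothesis is that \eqref{eq:group convergence precondition} holds at rank $n$, i.e.~$\sup_{\varphi\in\boundMeas_{1}(\X)}\lpnorm{r}{\pi^{N}_{n}(\varphi) - \pi_{n}(\varphi)} \leq C_{n,r}\sqrt{S/m}$. The base case $n=0$ follows immediately from Lemma~\ref{lem:delmoral}, since $(\zeta^{i}_{0})_{1\leq i\leq N}$ is i.i.d.~$\pi_{0}$ and $N^{-1/2}\leq m^{-1/2}\leq\sqrt{S/m}$ once $S\geq 1$.

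For the inductive step, I would first invoke Proposition~\ref{prop:induction step M fixed} to pass from the prediction approximation at rank $n$ to the filtering approximation at rank $n$, obtaining $\sup_{\varphi}\lpnorm{r}{\widehat{\pi}^{N}_{n}(\varphi) - \widehat{\pi}_{n}(\varphi)}\leq\widehat{C}_{n,r}\sqrt{S/m}$. The key point here is that Algorithm~\ref{alg: ir2pf} implements the resampling in two substeps---\textsc{WithinIslandResample} (Algorithm~\ref{alg:grouped mn}) followed by \textsc{AugmentedIslandResample} (Algorithm~\ref{alg:new augmented resampling mod})---and Proposition~\ref{prop:induction step M fixed} is exactly the statement that the combined error of these substeps, together with the input error, stays at the $\sqrt{S/m}$ rate. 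Then, as in the proof of Theorem~\ref{thm:convergence}, I use Minkowski's inequality with the decomposition
\begin{align*}
\lpnorm{r}{\pi^{N}_{n+1}(\varphi) - \pi_{n+1}(\varphi)} &\leq \lpnorm{r}{\pi^{N}_{n+1}(\varphi) - \widehat{\pi}^{N}_{n}(f(\varphi))} + \lpnorm{r}{\widehat{\pi}^{N}_{n}(f(\varphi)) - \widehat{\pi}_{n}(f(\varphi))},
\end{align*}
using $\pi_{n+1}(\varphi) = \widehat{\pi}_{n}(f(\varphi))$. The first term is bounded by $2B_{r}/\sqrt{N}\leq 2B_{r}\sqrt{S/m}$ via Lemma~\ref{lem:mutation martingale} applied to the mutation step (noting the conditional independence in \eqref{eq:joint law} holds since particles mutate independently), and the second term is bounded by $\widehat{C}_{n,r}\sqrt{S/m}$ by the result just obtained. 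This closes the induction with $C_{n+1,r} = 2B_{r} + \widehat{C}_{n,r}$.

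The main obstacle is not in the present theorem---which is essentially an assembly of earlier results---but is packaged into Proposition~\ref{prop:induction step M fixed}, whose proof must carefully combine the two resampling substeps. The subtlety is that the within-island step (Proposition~\ref{prop:gmn convergence}) produces PE-wise weights $W^{i}_{\mathrm{out}}$ that are constant within each island but vary across islands, and the island-level augmented resampling (Proposition~\ref{prop:island augmented resampling}, via the modified scheme of Algorithm~\ref{alg:new augmented resampling mod}) then operates on these island weights with error scaling like $\sqrt{S/m}$, independent of $M$. One has to check that the error introduced by \textsc{WithinIslandResample} on the \emph{full} sample is $\O(N^{-1/2}) = \O((mM)^{-1/2})$, which is dominated by $\sqrt{S/m}$, while the island-level step contributes the dominant $\sqrt{S/m}$ term; then one combines these with the incoming error $C_{n,r}\sqrt{S/m}$ from the induction hypothesis, together with a bias/normalisation argument analogous to the one used for Proposition~\ref{prop:induction step}. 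Since we are told we may assume all earlier-stated results, the proof of Theorem~\ref{thm:g convergence 2} itself is a routine induction, and I would present it in the few lines sketched above.
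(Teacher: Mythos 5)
Your proposal is correct and follows essentially the same route as the paper: the paper's proof of Theorem~\ref{thm:g convergence 2} is precisely the induction you describe, initialised by Lemma~\ref{lem:delmoral}, with the inductive step handled via the Minkowski decomposition through $\widehat{\pi}^{N}_{n}(f(\varphi))$, Lemma~\ref{lem:mutation martingale} for the mutation term, and Proposition~\ref{prop:induction step M fixed} for the resampling term. Your remarks on where the real work lies (inside Proposition~\ref{prop:induction step M fixed}) and on the harmless comparison $N^{-1/2}\leq\sqrt{S/m}$ are consistent with the paper's treatment.
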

\begin{proof}
The proof follows by induction analogously to the proof of Theorem \ref{thm:grouped convergence 1} by using Lemmata \ref{lem:delmoral} and \ref{lem:mutation martingale} and Proposition \ref{prop:induction step M fixed}.
\end{proof}

\begin{remark}
We can extend Proposition \ref{prop:induction step M fixed} and Theorem \ref{thm:g convergence 2} straightforwardly to Algorithm \ref{alg: ir2pf} deploying the modified augmented resampling algorithm (Algorithm \ref{alg:new augmented resampling mod}) presented in Section \ref{sec:modified augmentd resampling}. This follows from observing that at every step of the respective proofs we can replace Proposition \ref{prop:g augmented convergence} with Proposition \ref{prop:island augmented resampling}.
\end{remark}

\section{Numerical experiments}
\label{sec:numerics}

We have seen that AIRPF, as well as the BPF deploying augmented resampling are valid convergent algorithms, but we have also seen in Theorems \ref{thm:convergence} and \ref{thm:g convergence 2} that by imposing constraints on the interactions we also introduce error, which manifests itself as slower convergence rate. This raises the practically important question whether these algorithms are not only faster than the existing methods, but is the speedup significant enough to outweigh the introduced error. We now aim to shed some light to this question with numerical experiments.

\subsection{Experimental setup}

In order to obtain accurate error estimates, we chose to run the experiments on a simple random walk HMM which admits exact numerical calculation by Kalman filter recursions \citep{kalman60}. The model we used is
\begin{align*}
\arraycolsep=1.4pt
\begin{array}{rlrll}
X_{0} &\sim \cN(\zeros_{d},\Id_{d}),&&\\
X_{n} &= X_{n-1} + \eta_{n},\qquad&\eta_{n} &\sim \cN(\zeros_{d},\Id_{d}),\quad & \qquad n>0, \\
Y_{n} &= X_{n} + \varepsilon_{n},&  \varepsilon_{n} &\sim \cN\!\left(\zeros_{d},\dfrac{1}{4}\Id_{d}\right), & \qquad n\geq 0,
\end{array}
\end{align*}
where $\zeros_{d}$ denotes a vector of zeros in $\R^{d}$ and $\cN(\mu,\sigma)$ denotes a Gaussian distribution with mean $\mu$ and covariance $\sigma$. 

The approximation error was taken to be the mean squared error
\begin{equation*}
\mathrm{MSE} \defeq \frac{1}{J}\sum_{j=1}^{J}
\sum_{n=0}^{n_{\mathrm{max}}-1}\sum_{i=1}^{d}\left(\overline{x}^{N,i}_{n,j} - \overline{x}^{i}_{n}\right)^2.
\end{equation*}
where $J$ is the number of independent runs, $n_{\mathrm{max}}$ is the length of the time series, $\overline{x}_{n} \defeq (\overline{x}^{1}_{n},\ldots,\overline{x}^{d}_{n})$ is the mean of the true filtering distribution, and $\overline{x}^{N}_{n,j} \defeq (\overline{x}^{N,1}_{n,j},\ldots,\overline{x}^{N,d}_{n,j})$, where $1\leq j \leq J$, denotes the approximation of $\overline{x}_{n}$ at the $j$th run.

We used this model with $d=7$, to generate a data set of length $n_{\mathrm{max}}=8000$ iterations and the error was calculated of $J=5$ independent runs. The choice of dimension $d=7$ is largely arbitrary although very low dimensions were intentionally avoided to introduce some pressure for the ESS to take low values which in turn emphasises the role of adaptive resampling scheme.

The algorithms were implemented in C and the parallelism was implemented using Intel MPI. The experiments were conducted on the high performance computing system Balena at the University of Bath using 16 computing nodes each capable of running 16 processes simultaneously. The code is available at \url{https://github.com/heinekmp/AIRPF}

\subsection{Algorithms}

For reference, two versions of IPF were implemented. The first version was our implementation of the original IPF which performed the between island resampling first and the between island resampling second (IPF1). A slight gain in efficiency is expected if the order of these resampling steps is reversed as this would mean that PEs would not have to communicate the individual particles but only a single weight per PE. The IPF with this reversed resampling order we call IPF2. 

For the algorithms proposed in this paper, we implemented AIRPF with the modified augmented island resampling algorithm, Algorithm \ref{alg:new augmented resampling mod} (AIRPF1). In accordance with our discussion in Section \ref{sec:r2pf intro}, we also implemented AIRPF with the fully adapted resampling scheme (AIRPF2). In order to make the comparison against IPF as fair as possible, also both versions of IPF deployed a modification analogous to Algorithm \ref{alg:new augmented resampling mod}; if the sample of any PE was duplicated at the between islands resampling stage, then the PE in question was ensured to keep one copy of the sample set.

\subsection{Results}

The algorithms were run with 22 roughly equally spaced values of  $M$ in the interval $\{200,\ldots,4200\}$ and $m \in \{64,128,256\}$. The resampling threshold $\theta$ took values in $\{.1,.2,.4,.6,.8,1\}$. Each computing node used in our experiments always used its full capacity of 16 processes. 

If we fix $m$ and $M$ and let $\theta$ vary, we obtain four MSE vs.~time curves; one curve for each algorithm. Figure \ref{fig:cleanup_demo}a illustrates such sets of four curves for four different values of $M\in \{200,400,600,800\}$. To clarify which curves are obtained with the same value of $M$, the curves obtained with $M=200$ are highlighted by a rectangle in Figure \ref{fig:cleanup_demo}a. In order to improve the visualisation by reducing the overlap of the curves, we calculated the lower envelope curves for each algorithm as shown in Figure \ref{fig:cleanup_demo}b. The horizontal dashed line at level 2396 denotes the worst case MSE which is obtained by taking the raw observations as the estimates of the filtering mean.


 \begin{figure}
 \includegraphics[width=\textwidth]{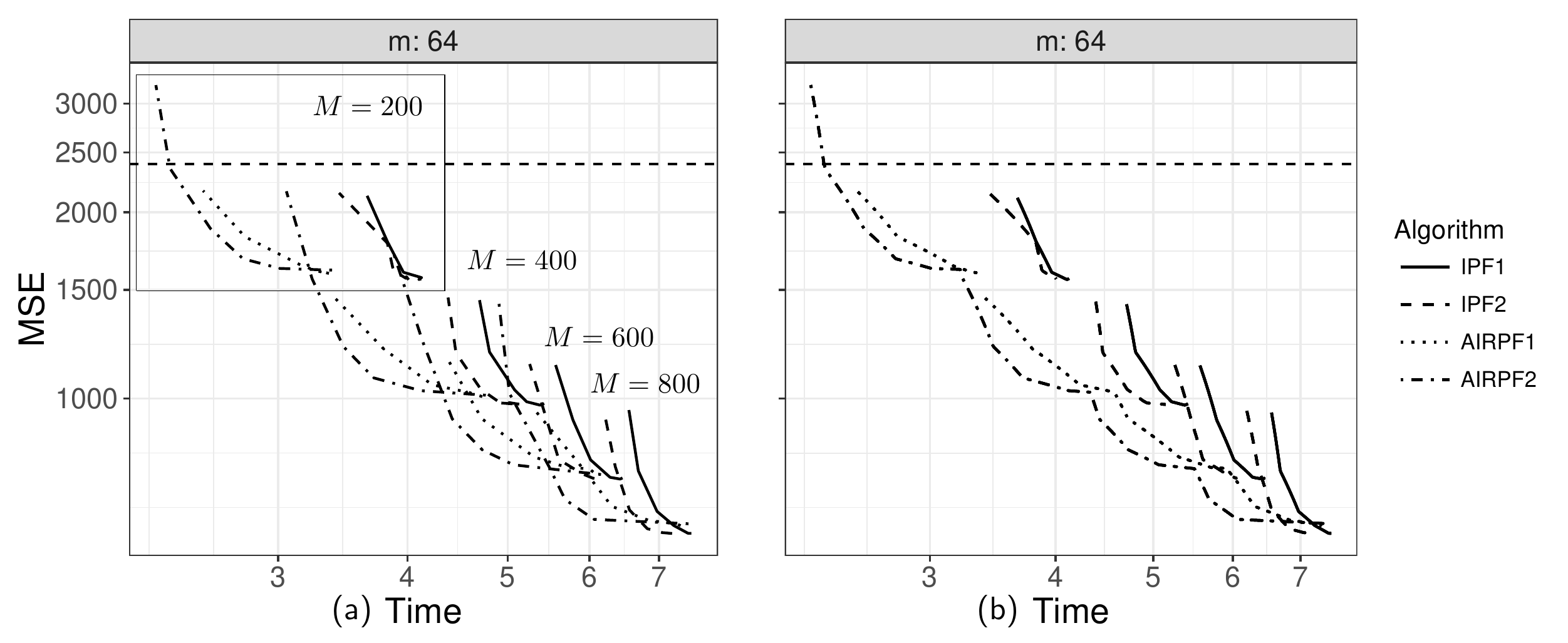}
 \caption{(a) MSE vs.~time plots for $M \in \{200,400,600,800\}$, $m=64$, and $\theta = \{.1,.2,.4,.6,.8,1\}$. For $M=200$ the curves obtained with different algorithms are highlighted by a rectangle. The dashed horizontal line at 2396 represents the raw MSE obtained by using the plain observations to approximate the mean of the filtering distribution. (b) The lower envelopes of the curves in (a).}
 \label{fig:cleanup_demo}
 \end{figure}

Figure \ref{fig:summary}a shows the lower envelopes of MSE vs.~time curves for the entire range of $M$ and $m$. Differences between IPF and AIRPF are more pronounced for larger values of $m$ and, in particular, for moderate values of $M$. For large $M$, the  differences vanish as the resampling that takes place within each PE independently begins to dominate the execution time. For moderate values of $M$, communication cost plays a more significant role, and in this case, AIRPF is more efficient than IPF. Also a notable gain in performance can be observed due to the fully adapted resampling. Figure \ref{fig:summary}b summarises the lower envelopes for AIRPF2 and IPF2 for the whole range of $M$ and $m$.


 \begin{figure}
 \includegraphics[width=\textwidth]{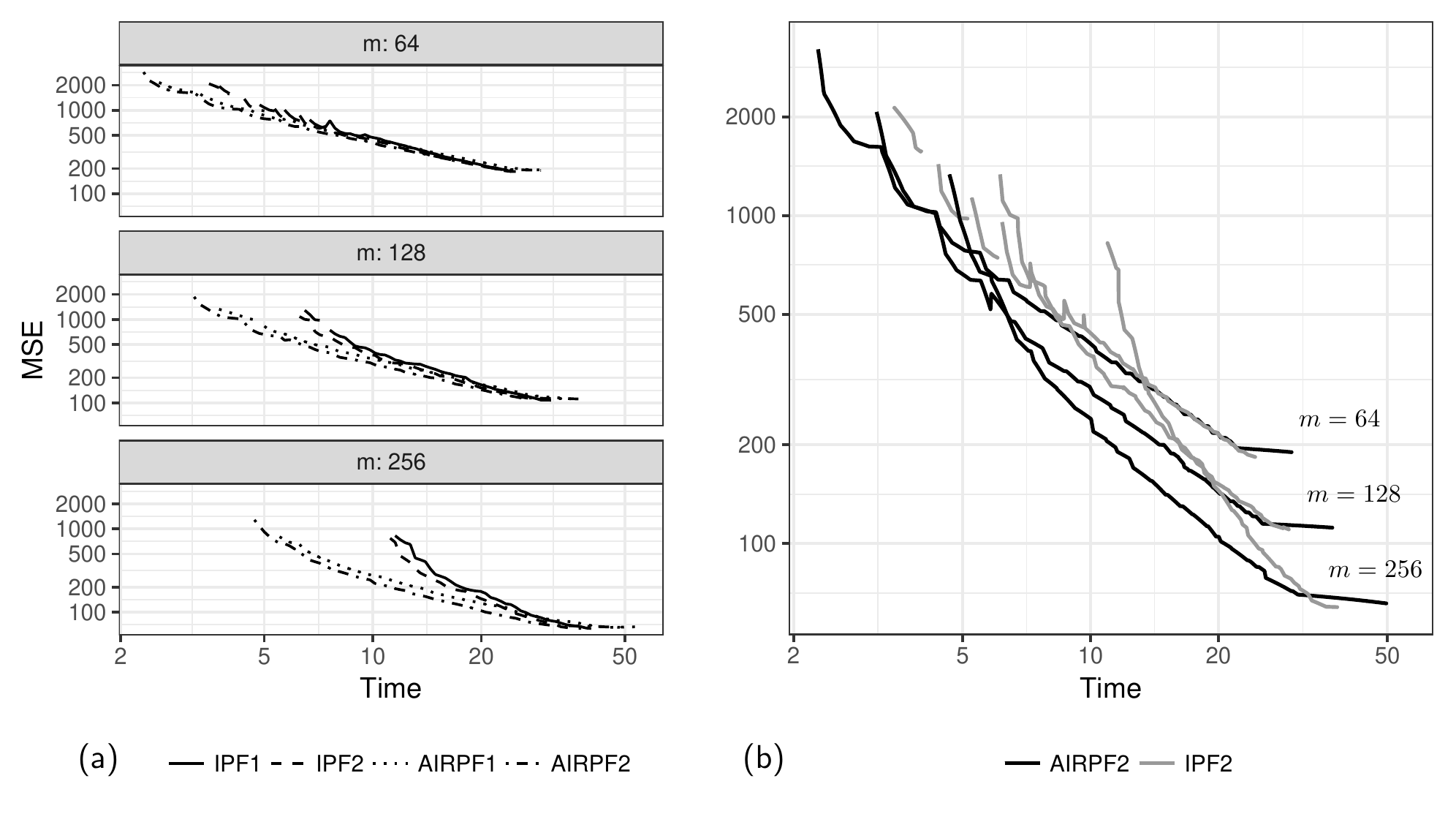}
 \caption{(a) The lower envelopes of MSE vs.~time for all algorithms and entire ranges of $M$ and $m$. (b) The lower envelopes of MSE vs.~time for AIRPF2 and IPF2 and entire ranges of $M$ and $m$.}
 \label{fig:summary}
 \end{figure}

\subsection{Conclusions}

Based on the results reported above, it appears that AIRPF shows the best potential in scenarios where the runtime is critical but the best possible accuracy with given resources is required. In such a case, increasing $M$ is not an option as it implies longer computation time. Also, the speedup by increasing the computer clock rate, which in turn would enable larger values of $M$, is not an option as modern computers have essentially reached their limit in clock rate. Therefore the only option is to increase $m$ in which case our experiments suggest that AIRPF could be the method of choice. 

Both AIRPF1 and AIRPF2 use the modification proposed in Section \ref{sec:modified augmentd resampling}. Without this modification the performance of AIRPF would have been significantly worse and hence the efficiency of AIRPF can be largely attributed to the ideas presented in Section \ref{sec:modified augmentd resampling}. 

We believe that the performance of AIRPF can be further improved in two specific scenarios. The first scenario is the following simple algorithmic modification. In the current fully adapted AIRPF the augmented resampling always begins at stage $s=1$, but presumably fewer resampling stages would have to be executed in total if the resampling was started at $s+1$, $s$ being the last executed resampling stage of the previous iteration that included resampling. The rationale for this modification is simple. By starting the resampling always at stage $s=1$ we introduce a bias towards the pairwise interactions associated with stage $s=1$ but by rotating the first resampling stage this bias is removed and more complete interactions are obtained which in turn is expected to increase ESS and lead to fewer resampling stages being executed in total.

The other scenario which seems particularly well suited for AIRPF is a computer network with a hypercube topology which matches exactly the radix-2 butterfly diagram structure of the AIRPF resampling step. We believe that a computing system based on such network topology would resemble the idealised computing system, discussed in Section \ref{sec:r2pf intro}, more accurately and hence make the reasoning, also presented in Section \ref{sec:r2pf intro} more valid. The experiments presented above were not executed in a hypercube architecture and hence the performance of AIRPF2 is mostly attributed to the ideas presented in Section \ref{sec:modified augmentd resampling}, as stated above.

For future theoretical research the convergence properties of the fully adapted resampling AIRPF remain to be analysed, although our conjecture is that similar uniform convergence results as in \citep{whiteley_et_al14} by controlling the ESS can be obtained. We finish on a more practical note by pointing out that the validity of the proposed algorithms will hold for various definitions of $A_{1},\ldots,A_{S}$. In the context of the present paper they define an interaction pattern which corresponds to a hypercube topology. Different definitions will lead to different interaction patterns that may have counterparts in computer architectures and may admit particularly efficient implementations.

\vspace{2mm}

This research made use of the Balena High Performance Computing (HPC) Service at the University of Bath.

\bibliography{mybib.bib}

\newpage
\begin{appendices}
%
\section{Proofs for Section \ref{sec:augmented analysis}}
\label{sec:proofs augmented}

\begin{proof}[Proof of Lemma \ref{lem:facts about A}] 
First we recall the \emph{mixed product property} of Kronecker product: for any matrices $A$, $B$, $C$ and $D$, such that the products $AC$ and $BD$ are defined, one has (see, e.g.~\cite{horn_et_johnson})
\begin{equation}\label{eq:mixed product property}
(A \otimes B)(C\otimes D)=(AC)\otimes (BD).
\end{equation}
Also we note that for any two square matrices $A \in \R^{p\times p}$ and $B \in \R^{q\times q}$ we have the element-wise formula:
\begin{equation}\label{eq:elementwise Kronecker formula}
(A\otimes B)^{ij}=A^{\floor{\frac{i-1}{q}}+1,\floor{\frac{j-1}{q}}+1} B^{((i-1)\bmod q)+1,((j-1)\bmod q)+1},
\end{equation}
where $i,j\in\{1,\ldots,pq\}$. By \eqref{eq:elementwise Kronecker formula}, $A\otimes B$ is symmetric whenever $A$ and $B$ are symmetric, proving the symmetry. Associativity of the Kronecker product and repeated applications of \eqref{eq:mixed product property} to the definition of $A_{s}$ in \eqref{eq:A matrices} yield $A_{s}A_{s}=A_{s}$ proving the idempotence. Also, by associativity and repeated applications of \eqref{eq:elementwise Kronecker formula} to \eqref{eq:A matrices}, we have 
\begin{align}
A_{s}^{ij} &= \Id_{2^{S-s}}^{\floor{\frac{i-1}{2^sM}}+1,\floor{\frac{j-1}{2^sM}}+1}\ones_{1/2}^{\left(\floor{\frac{i-1}{2^{s-1}M}}\bmod 2\right)+1, \left(\floor{\frac{j-1}{2^{s-1}M}}\bmod 2\right)+1} \nonumber\\ 
&\times\Id_{2^{s-1}}^{\left(\floor{\frac{i-1}{M}}\bmod 2^{s-1}\right)+1, \left(\floor{\frac{j-1}{M}}\bmod 2^{s-1}\right)+1}\Id_{1/M}^{((i-1) \bmod M )+1, ((j-1)\bmod )+1}.	\label{eq:elementwise formula for radix}
\end{align}
From this we see immediately that $A_{s}^{ij}\in\{0,(2M)^{-1}\}$.

By the idempotence, symmetry and the facts that by \eqref{eq:elementwise formula for radix}, $A_{s}^{ii}=(2M)^{-1}$ and $A_{s}^{ij}\in\{0,(2M)^{-1}\}$ one has
\begin{equation*}
\frac{1}{2M} = A_{s}^{ii} = (A_{s}A_{s})^{ii} = (A_{s}^TA_{s})^{ii} = \sum_{j=1}^{mM} (A^{ij}_{s})^2 = \frac{u}{(2M)^2} \iff u = 2M,
\end{equation*}
where $u$ is the number of non-zero elements on the $i$th column of $A_{s}$. Hence double stochasticity follows by symmetry.

To prove \eqref{eq:explicit nonzero elements} we observe that by setting $M=1$ in \eqref{eq:elementwise formula for radix} we have 
\begin{align*}
(\Id_{2^{S-s}} \otimes \ones_{1/2} \otimes \Id_{2^{s-1}})^{ij} &= \Id_{2^{S-s}}^{\floor{\frac{i-1}{2^s}}+1,\floor{\frac{j-1}{2^s}}+1}\ones_{1/2}^{\left(\floor{\frac{i-1}{2^{s-1}}}\bmod 2\right)+1, \left(\floor{\frac{j-1}{2^{s-1}}}\bmod 2\right)+1} \\&\times\Id_{2^{s-1}}^{\left((i-1)\bmod 2^{s-1}\right)+1, \left((j-1)\bmod 2^{s-1}\right)+1}.
\end{align*}
From this, by considering only the diagonal elements of the identity matrices, we have readily that the indices of the nonzero columns of the $i$th row of $\Id_{2^{S-s}} \otimes \ones_{1/2} \otimes \Id_{2^{s-1}}$ are those $1 \leq j \leq m$ for which 
\begin{align*}
\bigg\lfloor\frac{i-1}{2^{s}}\bigg\rfloor = \bigg\lfloor\frac{j-1}{2^{s}}\bigg\rfloor,\quad\text{ and } \quad\big((i-1) \bmod 2^{s-1}\big) = \big((j-1) \bmod 2^{s-1}\big).
\end{align*}
To prove that this is a superset of \eqref{eq:explicit nonzero elements}, suppose that 
\begin{equation}
j = \big((i-1) \bmod 2^{s-1}\big) + (q-1)2^{s-1} + 2^s\big\lfloor{(i-1)}/{2^s}\big\rfloor+1, \quad q \in \{1,2\}.
\label{eq:choice of beta}
\end{equation}
It is then simple to check that $\lfloor{(j-1)}/{2^s}\rfloor = \lfloor{(i-1)}/{2^s}\rfloor$ and 
\begin{equation*}
\big((j-1) \bmod 2^{s-1}\big) = j-1 - \bigg\lfloor{\frac{j-1}{2^{s-1}}}\bigg\rfloor 2^{s-1} = \big((i-1) \bmod 2^{s-1}\big). 
\end{equation*}
To prove the converse inclusion, suppose that $\lfloor{(i-1)}/{2^s}\rfloor = \lfloor{(j-1)}/{2^s}\rfloor$ and $\big((i-1) \bmod 2^{s-1}\big) = \big((j-1) \bmod 2^{s-1}\big)$.
Then one can check that
\begin{equation*}
j-1 = \big((i-1) \bmod 2^{s-1}\big) + 2^{s}\floor{\frac{i-1}{2^{s}}} + 2^{s-1}\left(\floor{\frac{j-1}{2^{s-1}}}\bmod 2\right), 
\end{equation*}
and since $\big(\lfloor{(j-1)}/{2^{s-1}}\rfloor\bmod 2\big) + 1\in \{1,2\}$, the claim follows.
\end{proof}

\begin{proof}[Proof of Lemma \ref{lem:ones product}]
We prove by induction that 
\begin{align}\label{eq:induction form}
\prod_{s=1}^{k} A_{s} = \Id_{2^{S-k}} \otimes \ones_{1/2^k} \otimes \ones_{1/M}
\end{align}
holds for all $1 \leq k \leq S$. Case $k=S$ then yields the claim. For $k=1$, \eqref{eq:induction form} holds by definition. Then by assuming that \eqref{eq:induction form} holds for some $1\leq k-1 < S$ we have
\begin{align*}
\prod_{s=1}^{k} A_{s} 
&= \big(\Id_{2^{S-k}} \otimes \ones_{1/2} \otimes \Id_{2^{k-1}} \otimes \ones_{1/M}\big)\big(\Id_{2^{S-k+1}} \otimes \ones_{1/2^{k-1}} \otimes \ones_{1/M}\big)\\
&= \big(\big(\Id_{2^{S-k}} \otimes \ones_{1/2}\big)\Id_{2^{S-k+1}}\big)\otimes\big(\big(\Id_{2^{k-1}} \otimes \ones_{1/M}\big)\big(\ones_{1/2^{k-1}} \otimes\ones_{1/M}\big) \big)\\
&=\big(\Id_{2^{S-k}} \otimes \ones_{1/2}\big) \otimes \big(\big(\Id_{2^{k-1}}\ones_{1/2^{k-1}}\big)\otimes \big(\ones_{1/M}\ones_{1/M}\big)\big)\\
&=\big(\Id_{2^{S-k}} \otimes \ones_{1/2}\big) \otimes \big(\ones_{1/2^{k-1}} \otimes \ones_{1/M}\big) \\
&= \Id_{2^{S-k}} \otimes \ones_{1/2^{k}} \otimes\ones_{1/M},
\end{align*}
where the 2nd and 3rd equalities follow from the mixed product property of the Kronecker product. 
\end{proof}

\begin{proof}[Proof of Lemma \ref{lem:facts_about_Vs}]
From \eqref{eq:V_defn_proofs} we have $V_0^i=g(\xi_{0}^i)$ and a proof by induction shows that for $1\leq s \leq S$,
\begin{equation}
V_s^{i_s}=\sum_{(i_0,\ldots,i_{s-1})}g(\xi_{0}^{i_0})\prod_{q=1}^s A_q^{i_qi_{q-1}},\label{eq:V_k_unwind}
\end{equation}
from which \eqref{it:measurability of V} follows. Since $A_{s}$ is row-stochastic, \eqref{it:boundedness of V} follows from \eqref{eq:V_defn_proofs}. In the case $s=S$, \eqref{eq:V_k_unwind} together with Lemma \ref{lem:ones product} gives
\begin{equation*}
V_S^{i_S}=\sum_{(i_0,...,i_{S-1})}\!\!g(\xi_{0}^{i_0})\prod_{q=1}^S A_q^{i_qi_{q-1}}=\sum_{i_0=1}^{N}g(\xi_{0}^{i_0})\Bigg[\prod_{q=1}^S A_q\Bigg]^{i_Si_0}\!\!=\frac{1}{N}\sum_{i_0=1}^{N}g(\xi_{0}^{i_0}).
\end{equation*}
\end{proof}

\begin{proof}[Proof of Proposition \ref{prop:martingale decomposition}]
By the definitions of $X_{\rho}$ and $\cF_{\rho}$ we have \eqref{it:measurability} by Lemma \ref{lem:facts_about_Vs}\eqref{it:measurability of V}. Claim \eqref{it:zero mean} follows from the one step conditional independence and \eqref{eq:P_in_terms_of_V_proofs}. Claim \eqref{it:boundedness} follows from Lemma \ref{lem:facts_about_Vs}\eqref{it:boundedness of V}, \eqref{eq:V_defn_proofs}, and the row-stochasticity of $A_{s}$ for all $1 \leq s \leq S$. It remains to prove \eqref{eq:first decompo} and \eqref{eq:second decompo}.

Since $N^{-1}\sum_{i=1}^{N}V^{i}_{0} \barphi(\xi^{i}_{0}) = 0$, we have the decomposition
\begin{align}\label{eq:telescoping martingale diff}
\frac{1}{N}\sum_{i=1}^{N} V^{i}_{S}\barphi(\xi^{i}_{S}) &= 
\sum_{s = 1}^{S} \left(
\frac{1}{N}\sum_{i_{s}=1}^{N} V^{i_{s}}_{s}\barphi(\xi^{i_{s}}_{s}) 
- \frac{1}{N}\sum_{i_{s-1}=1}^{N} V^{i_{s-1}}_{s-1}\barphi(\xi^{i_{s-1}}_{s-1})
\right)
\end{align}
Because $A_{s}$ is doubly stochastic we have $\sum_{j=1}^{N} A_{s}^{ji}=1$ and hence
\begin{align*}
&\frac{1}{N}\sum_{i_{s}=1}^{N} V^{i_{s}}_{s}\barphi(\xi^{i_{s}}_{s}) 
- \frac{1}{N}\sum_{i_{s-1}=1}^{N}V^{i_{s-1}}_{s-1}\barphi(\xi^{i_{s-1}}_{s-1}) \\
&= \frac{1}{N}\sum_{i_{s}=1}^{N} V^{i_{s}}_{s}\barphi(\xi^{i_{s}}_{s}) 
- \frac{1}{N}\sum_{j=1}^{N}\sum_{i_{s-1}=1}^{N}A^{ji_{s-1}}_{s}V^{i_{s-1}}_{s-1}\barphi(\xi^{i_{s-1}}_{s-1}) \\
&= \frac{1}{N}\sum_{i_{s}=1}^{N} V^{i_{s}}_{s}\left(\barphi(\xi^{i_{s}}_{s}) - \frac{1}{V^{i_{s}}_{s}}\sum_{i_{s-1}=1}^{N}A^{i_{s}i_{s-1}}_{s}V^{i_{s-1}}_{s-1}\barphi(\xi^{i_{s-1}}_{s-1})\right) \\
&= \sqrt{\frac{S}{N}}\sum_{i=1}^{N} X_{(s-1)N+i}.
\end{align*}
By substituting the last form into \eqref{eq:telescoping martingale diff} we obtain \eqref{eq:first decompo}. Finally, since by Lemma \ref{lem:facts_about_Vs}\eqref{it:constant weights} we have that $V_{S}^{i}$ is independent of $i$, we can prove \eqref{eq:second decompo} by writing
\begin{align*}
&\left(\frac{1}{N}\sum_{i=1}^{N}g(\xi^{i}_{0})\right)\left(\frac{1}{N}\sum_{i=1}^{N}\varphi(\xi^{i}_{S})\right) - \frac{1}{N}\sum_{i=1}^{N} g(\xi^{i}_{0})\varphi(\xi^{i}_{0})\\
&=\frac{1}{N}\sum_{i=1}^{N}V^{i}_{S}\varphi(\xi^{i}_{S})
- \frac{1}{N}\sum_{i=1}^{N} g(\xi^{i}_{0})\varphi(\xi^{i}_{0}) \\
&= \frac{1}{N}\sum_{i=1}^{N} V^{i}_{S}\left(\varphi(\xi^{i}_{S}) - \frac{\sum_{i=1}^{N} g(\xi^{i}_{0})\varphi(\xi^{i}_{0})}{\sum_{i=1}^{N}g(\xi^{i}_{0})}\right) \\
&= \frac{1}{N}\sum_{i=1}^{N} V^{i}_{S}\barphi(\xi^{i}_{S}).
\end{align*}
\end{proof}
%
\section{Proofs for Section \ref{sec:r2pf}}
\label{sec:proofs R2PF}

\begin{proof}[Proof of Lemma \ref{lem:mutation martingale}]
We define $\cM_{\mathrm{M}} \defeq \{(X_{\rho},\cA_{\rho}); 1\leq \rho \leq N\}$ as
\begin{align*}
X_{\rho} \defeq \frac{1}{N}\sum_{i=1}^{\rho} \left(\varphi\left(\zeta^{i}\right) - f(\varphi)({\widehat{\zeta}}^{i})\right), \quad
\cA_{\rho} \defeq \sigma\left(\widehat{\zeta}^{1},\ldots,\widehat{\zeta}^{N},\zeta^{1},\ldots,\zeta^{\rho}\right).
\end{align*}
Clearly, for all $1\leq \rho \leq N$, $X_{\rho}$ is $\cA_{\rho}$-measurable, $|X_{\rho}| \leq \infty$, and, by \eqref{eq:joint law}, $\E[X_{\rho}\,|\,\cA_{\rho'}] = X_{\rho'}$ for any $1\leq \rho,\rho' \leq N$ such that $\rho'<\rho$. Hence $\cM_{\mathrm{M}}$ is a martingale and 
\begin{equation*}
X_{N} = \frac{1}{N}\sum_{i=1}^{N}\varphi(\zeta^{i}) - \frac{1}{N}\sum_{i=1}^{N}f(\varphi)(\widehat{\zeta}^{i}).
\end{equation*}
The claim then follows from Burkholder-Davis-Gundy inequality.
\end{proof}

\begin{proof}[Proof of Proposition \ref{prop:induction step}]
Throughout the proof we assume $\varphi \in \boundMeas_{1}(\X)$. To prove part \ref{it:step part a}), we have by Minkowski's inequality
\begin{align}
\lpnorm{r}{\widehat{\pi}^{N}_{n}(\varphi) - \widehat{\pi}_{n}(\varphi)} &\leq \frac{1}{\pi_{n}(g_{n})}\lpnorm{r}{\pi^{N}_{n}(g_{n})\widehat{\pi}^{N}_{n}(\varphi)-\pi^{N}_{n}(g_{n}\varphi)} \nonumber\\&+ \lpnorm{r}{\frac{\pi^{N}_{n}(g_{n}\varphi)}{\pi^{N}_{n}(g_{n})} - \frac{\pi_{n}(g_{n}\varphi)}{\pi_{n}(g_{n})}}\nonumber\\
&+ \frac{1}{\pi_{n}(g_{n})}\lpnorm{r}{\widehat{\pi}^{N}_{n}(\barphi_{n})\left (\pi_{n}(g_{n})-\pi^{N}_{n}(g_{n})\right)},\label{eq:third term}
\end{align}
where $\barphi_{n} \defeq \varphi - \pi^{N}_{n}(g_{n}\varphi)/\pi_{n}^{N}(g_{n})$. For the first term on the r.h.s.~we have by Proposition \ref{prop:intro_to_aug_resampling}
\begin{equation}
\frac{1}{\pi_{n}(g_{n})}\lpnorm{r}{\pi^{N}_{n}(g_{n})\widehat{\pi}^{N}_{n}(\varphi)-\pi^{N}_{n}(g_{n}\varphi)} \leq \frac{2B_{r}\infnorm{g_{n}}}{\pi_{n}(g_{n})}\sqrt{\frac{S}{N}}.
\end{equation}
For the second term we have (similarly as in the proof of Lemma 4 in \cite{crisan_et_doucet02})
\begin{align}
\lpnorm{r}{\frac{\pi^{N}_{n}(g_{n}\varphi)}{\pi^{N}_{n}(g_{n})} - \frac{\pi_{n}(g_{n}\varphi)}{\pi_{n}(g_{n})}} &\leq \frac{\infnorm{g_{n}}}{\pi_{n}(g_{n})}\lpnorm{r}{\pi^{N}_{n}\left(\frac{g_{n}}{\infnorm{g_{n}}} \right) -\pi_{n}\left(\frac{g_{n}}{\infnorm{g_{n}}} \right)} \nonumber\\&+\frac{\infnorm{g_{n}}}{\pi_{n}(g_{n})}\lpnorm{r}{\pi^{N}_{n}\left(\frac{g_{n}\varphi}{\infnorm{g_{n}}} \right) -\pi_{n}\left(\frac{g_{n}\varphi}{\infnorm{g_{n}}} \right)} \nonumber\\
&\leq \frac{2\infnorm{g_{n}}C_{n,r}}{\pi_{n}(g_{n})}\sqrt{\frac{S}{N}}\label{eq:middle term}
\end{align}
where the last inequality uses the assumption \eqref{eq:convergence precondition}. For the third term on the r.h.s.~of \eqref{eq:third term} we have by \eqref{eq:convergence precondition}
\begin{equation}
\frac{1}{\pi_{n}(g_{n})}\lpnorm{r}{\widehat{\pi}^{N}_{n}(\barphi_{n})\left (\pi_{n}(g_{n})-\pi^{N}_{n}(g_{n})\right)} \leq \frac{2\infnorm{g_{n}}C_{n,r}}{\pi_{n}(g_{n})}\sqrt{\frac{S}{N}}.\label{eq:third term bound}
\end{equation}
From \eqref{eq:third term}--\eqref{eq:third term bound}, part \ref{it:step part a}) follows with 
$
\widehat{C}_{n,r} \defeq (2B_{r} + 4C_{n,r})\infnorm{g_{n}}/\pi_{n}(g_{n})$.

For part \ref{it:step part b}), the case $n=0$ follows from Lemma \ref{lem:delmoral}. For the case $n>0$ we can write 
\begin{align*}
\E\left[\left|\pi^{N}_{n+1}(\varphi) - \Phi_{n+1}(\pi^{N}_{n})(\varphi)\right|^{r}\right]^{\frac{1}{r}} 
&\leq \E\left[\left|\pi^{N}_{n+1}(\varphi) - \widetilde{\pi}^{N}_{n+1}(\varphi)\right|^{r}\right]^{\frac{1}{r}} \nonumber\\&+ \E\left[\left|\widehat{\pi}^{N}_{n}(f(\varphi)) - \frac{\pi^{N}_{n}(g_{n}f(\varphi))}{\pi^{N}_{n}(g_{n})}\right|^{r}\right]^{\frac{1}{r}}. 
\end{align*}
For the first term on the r.h.s.~we can use Lemma \ref{lem:mutation martingale} to obtain an upper bound
\begin{equation*}\label{eq:mutation burkholder}
\sup_{\varphi\in\boundMeas_{1}(\X)}\lpnorm{r}{\pi^{N}_{n+1}(\varphi) - \widetilde{\pi}^{N}_{n+1}(\varphi)} \leq 2B_{r}\sqrt{\frac{S}{N}}.
\end{equation*}
By \eqref{ass:regularity}, $\infnorm{g_{n}}/g_{n} \leq \delta$ implying $\pi^{N}_{n}(g_{n})/\infnorm{g_{n}} \geq \delta^{-1}$.
Hence, by Proposition \ref{prop:intro_to_aug_resampling} we have for any $\infnorm{\varphi}\leq 1$
\begin{align*}
\E\left[\left|\widehat{\pi}^{N}_{n}(\varphi) - \frac{\pi^{N}_{n}(g_{n}\varphi)}{\pi^{N}_{n}(g_{n})}\right|^{r}\right]^{\frac{1}{r}} 
&\leq\frac{\delta}{\infnorm{g_{n}}}\E\left[\left|\pi^{N}_{n}(g_{n})\widehat{\pi}^{N}_{n}(\varphi) - {\pi^{N}_{n}(g_{n}\varphi)}\right|^{r}\right]^{\frac{1}{r}} \leq 2B_{r}\delta\sqrt{\frac{S}{N}}. 
\end{align*}
Part \ref{it:step part b}) thus holds with $\widehat{C}_{r} \defeq \max\left(C^\ast_{r},2B_{r}(1+\delta)\right)$.
\end{proof}

%
\section{Proofs for Section \ref{sec:another r2 resampler}}
\label{sec:proof for modified algorithm}

\begin{proof}[Proof of Proposition \ref{prop:gmn convergence}]
The proof is similar to that of Proposition \ref{prop:martingale decomposition}. We define a sequence $\widecheck{\cM} \defeq \{(\wcX_{\rho},\wcF_{\rho});0 \leq \rho \leq N\}$ such that $\wcX_{0} \defeq 0$, $\wcF_{0} \defeq \sigma(\vxi_{\mathrm{in}})$ and for all $1 \leq \rho \leq N$  
\begin{align*}
\wcX_{\rho} \defeq \frac{W^{\rho}_{\mathrm{out}}}{\sqrt{N}}\left(\barphi(\xi^{\rho}_{\mathrm{out}}) - \frac{1}{W^{\rho}_{\mathrm{out}}}\sum_{j=1}^{N}A^{\rho j}g(\xi^{j}_{\mathrm{in}})\barphi(\xi^{j}_{\mathrm{in}})\right),\qquad \wcF_{\rho} \defeq \wcF_{\rho-1} \vee \sigma(\xi^{\rho}_{\mathrm{out}})
\end{align*}
where $\barphi(x) \defeq \varphi(x) - {\pi^{N}(g\varphi)}/{\pi^{N}(g)}$. We show that $\widecheck{\cM}$ is a martingale difference.

Clearly $g(\xi^{i}_{\mathrm{in}})$ is $\cF_{0}$-measurable for all $1\leq i \leq N$ and by \eqref{eq:out weight}, also $W^{\rho}_{\mathrm{out}}$ is $\cF_{0}$-measurable for all $1\leq \rho \leq N$. By the definition of $\wcX_{\rho}$ and $\wcF_{\rho}$, $\wcX_{\rho}$ is thus $\wcF_{\rho}$-measurable for all $\rho \geq 0$. The requirement that $\E[\wcX_{\rho} \,\mid\,\wcF_{\rho-1}] \stackrel{\text{a.s.}}{=} 0$ follows from $\xi^{i}_{\mathrm{out}}$ being conditionally independently distributed according to line \ref{it:withinisland cond dist} in Algorithm \ref{alg:grouped mn}, given $\sigma(\vxi_{\mathrm{in}})$. Finally, by \eqref{eq:out weight}, and the fact that $g\in \boundMeas_{+}(\X)$ we have $|\widecheck{X}_{\rho}| \leq {\infnorm{g}}\osc{\varphi}/\sqrt{N}$. From these observations we conclude that $\widecheck{\cM}$ is a martingale difference.

Next we establish the connection between $\widecheck{\cM}$ and the error term in \eqref{eq:gmn error}.
By the double stochasticity of $A$ and the fact that ${N^{-1}}\sum_{i=1}^{N}g(\xi^{i}_{\mathrm{in}})\barphi(\xi^{i}_{\mathrm{in}}) = 0$ we have
\begin{align}
\frac{1}{\sqrt{N}}\sum_{\rho=0}^{N} \wcX_{\rho} 
&= \frac{1}{N}\sum_{i=1}^{N}W^{i}_{\mathrm{out}}\barphi(\xi^{i}_{\mathrm{out}}) - \frac{1}{N}\sum_{i=1}^{N}\sum_{j=1}^{N}A^{ij}g(\xi^{j}_{\mathrm{in}})\barphi(\xi^{j}_{\mathrm{in}}) \nonumber\\&= \frac{1}{N}\sum_{i=1}^{N}W^{i}_{\mathrm{out}}\barphi(\xi^{i}_{\mathrm{out}}) = \pi^{N}(g)\widecheck{\pi}^{N}(\varphi)-\pi^{N}(g\varphi).\nonumber 
\end{align}
where the last equality follows from the fact that, by \eqref{eq:out weight}, $N^{-1}\sum_{i=1}^{N}W^{i}_{\mathrm{out}} =  \pi^{N}(g)$. Part \ref{it:part a}) of the claim then follows by applying Burkholder-Davis-Gundy inequality to the martingale $\sum_{\rho=0}^{N}\wcX_{\rho}$.

For part \ref{it:part b}) we define for all $1\leq k \leq m$, $\widecheck{\cM}_{k} \defeq \{(\wcX^{k}_{\rho},\wcF^{k}_{\rho});0 \leq \rho \leq M\}$ such that $\wcX^{k}_{0} \defeq 0$, $\wcF^{k}_{0} \defeq \sigma(\vxi_{\mathrm{in}})$ and for all $1 \leq \rho \leq M$
\begin{align*}
\wcX^{k}_{\rho} &\defeq \frac{W^{(k-1)M+\rho}_{\mathrm{out}}}{\sqrt{M}}\left(\barphi_{k}\Big(\xi^{(k-1)M+\rho}_{\mathrm{out}}\Big) - \frac{\sum_{j=1}^{N}A^{(k-1)M+\rho,j}g(\xi^{j}_{\mathrm{in}})\barphi_{k}(\xi^{j}_{\mathrm{in}})}{W^{(k-1)M+\rho}_{\mathrm{out}}}\right),\\
\wcF^{k}_{\rho} &\defeq \wcF^{k}_{\rho-1} \vee \sigma(\xi^{(k-1)M+\rho}_{\mathrm{out}}),
\end{align*}
where $\barphi_{k} = \varphi - \pi^{M,k}(g\varphi)/\pi^{M,k}(g)$. Similarly as above, we can check that $\widecheck{\cM}_{k}$ is a martingale difference with terms bounded by $|\widecheck{X}^{k}_{\rho}| \leq {\infnorm{g}}\osc{\varphi}/{\sqrt{M}}$.

By the definition of $A$ and $\barphi_{n}$ we have 
$$\sum_{j=1}^{N}A^{(k-1)M+\rho,j}g(\xi^{j}_{\mathrm{in}})\barphi_{k}(\xi^{j}_{\mathrm{in}})=\frac{1}{M}\sum_{j=1}^{M}g(\xi^{(k-1)M+j}_{\mathrm{in}})\barphi_{k}(\xi^{(k-1)M+j}_{\mathrm{in}}) = 0,$$ 
which together with \eqref{eq:out weight} enables us to write
\begin{align*}
\frac{1}{\sqrt{M}}\sum_{\rho=0}^{M} \wcX^{k}_{\rho} 
= \frac{1}{M}\sum_{\rho=1}^{M}W^{(k-1)M+\rho}_{\mathrm{out}}\barphi_{k}(\xi^{(k-1)M+\rho}_{\mathrm{out}})
= \pi^{M,k}(g)\left(\widecheck{\pi}^{M,k}(\varphi) - \frac{\pi^{M,k}(g\varphi)}{\pi^{M,k}(g)}\right).
\end{align*}
The claim then follows by Burkholder-Davis-Gundy inequality as before.
\end{proof}

\begin{proof}[Proof of Proposition \ref{prop:martingale for parsimonious communication}]
By definition, $\tX_{\rho}$ depends only on $\vxi^{r(\rho)}_{s(\rho)}$, $\vxi^{\ell(\rho)}_{s(\rho)}$, $\bV^{r(\rho)}_{s(\rho)}$, $\bV^{\ell(\rho)}_{s(\rho)}$, as well as $\bV^{i}_{s(\rho)-1}$ and $\vxi^{i}_{s(\rho)-1}$ for all $1\leq i \leq m$. The measurability then follows similarly by the $\sigma(\vxi_{0})$-measurability of $\bV^{i}_{s}$ for all $1\leq s \leq S$ and $1 \leq i \leq m$ as for Proposition \ref{prop:martingale decomposition group} by Lemma \ref{lem:facts_about_Vs}.

Fix $1 \leq \rho \leq  Sm/2$. By definition $$\bV^{\ell(\rho)}_{s(\rho)} = \bV^{r(\rho)}_{s(\rho)} = \frac{1}{2}\left(\bV^{\ell(\rho)}_{s(\rho)-1} + \bV^{r(\rho)}_{s(\rho)-1}\right)$$ and we can write 
\begin{equation}\label{eq:equality or pairs}
\begin{split}
\frac{1}{\bV^{\ell(\rho)}_{s(\rho)}}\sum_{j=1}^{m}\bA_{s(\rho)}^{\ell(\rho)j}\bV^{j}_{s(\rho)-1}\vbarphi(\vxi^{j}_{s(\rho)-1}) &= \frac{1}{\bV^{r(\rho)}_{s(\rho)}}\sum_{j=1}^{m}\bA_{s(\rho)}^{r(\rho)j}\bV^{j}_{s(\rho)-1}\vbarphi(\vxi^{j}_{s(\rho)-1})\\ &=p_{\ell}\vbarphi(\vxi^{\ell(\rho)}_{s(\rho)-1}) + p_{r}\vbarphi(\vxi^{r(\rho)}_{s(\rho)-1})
\end{split}
\end{equation}
where $p_{\ell} \defeq \frac{1}{2}\bV^{\ell(\rho)}_{s(\rho)-1}/\bV^{\ell(\rho)}_{s(\rho)}$ and $p_{r} \defeq \frac{1}{2}\bV^{r(\rho)}_{s(\rho)-1}/\bV^{r(\rho)}_{s(\rho)}$ and hence $p_{\ell} + p_{r} = 1$. By \eqref{eq:paired X} and \eqref{eq:equality or pairs}
\begin{align*}
 \E\left[\tX_{\rho}\,\big|\,\tF_{\rho-1}\right] 
&= \frac{\bV}{\sqrt{Sm}}\left(\E\left[\vbarphi(\vxi^{\ell(\rho)}_{s(\rho)})\,\Big|\,\tF_{\rho-1}\right] + \E\left[\vbarphi(\vxi^{r(\rho)}_{s(\rho)})\,\Big|\,\tF_{\rho-1}\right]\right.\\&\left.\qquad - 2\left(p_{\ell}\vbarphi(\vxi^{\ell(\rho)}_{s(\rho)-1}) + p_{r}\vbarphi(\vxi^{r(\rho)}_{s(\rho)-1})\right)\right),
\end{align*}
where $\bV \defeq \bV^{\ell(\rho)}_{s(\rho)} = \bV^{r(\rho)}_{s(\rho)}$ and the conditional expectations can be written explicitly by observing the conditional probabilities 
\begin{align*}
\P(\vxi^{\ell(\rho)}_{s(\rho)} = \vxi^{\ell(\rho)}_{s(\rho)-1} \, | \, \tF_{\rho-1}) &= p_\ell^2 + 2p_rp_\ell, \qquad
\P(\vxi^{\ell(\rho)}_{s(\rho)} = \vxi^{r(\rho)}_{s(\rho)-1} \, | \, \tF_{\rho-1}) = p_r^2,\\
\P(\vxi^{r(\rho)}_{s(\rho)} = \vxi^{r(\rho)}_{s(\rho)-1} \, | \, \tF_{\rho-1}) &= p_r^2 + 2p_rp_\ell, \qquad
\P(\vxi^{r(\rho)}_{s(\rho)} = \vxi^{\ell(\rho)}_{s(\rho)-1} \, | \, \tF_{\rho-1}) = p_\ell^2,
\end{align*}
and the fact that 
\begin{align*}
\P(\vxi^{\ell(\rho)}_{s(\rho)} \notin \{\vxi^{\ell(\rho)}_{s(\rho)-1},\vxi^{r(\rho)}_{s(\rho)-1}&\} \, | \, \tF_{\rho-1}) = \P(\vxi^{r(\rho)}_{s(\rho)} \notin \{\vxi^{\ell(\rho)}_{s(\rho)-1},\vxi^{r(\rho)}_{s(\rho)-1}\} \, | \, \tF_{\rho-1}) = 0.
\end{align*}
The claim $\E[\tX_{\rho}|\tF_{\rho-1}] = 0$ of part (b) then follows by straightforward calculation.

Part (c) follows from the boundedness of $g$ by the definition of $\tX_{\rho}$ and part (d) follows from observing that
\begin{equation*}
\sqrt{\frac{S}{m}}\sum_{\rho=1}^{Sm/2} \tX_{\rho} = \sqrt{\frac{S}{m}}\sum_{s=1}^{S}\sum_{i=1}^{m}\tX^{i}_{s}
\end{equation*}
where $\tX^{i(\rho)}_{s(\rho)}$ is defined in an otherwise identical manner as $\bX_{\rho}$, except for the law of $\vxi^{i(\rho)}_{s(\rho)}$ being different for $\tX^{i(\rho)}_{s(\rho)}$ and $\bX_{\rho}$, which does not affect the validity of \eqref{eq:first decompo modified} and \eqref{eq:second decompo modified}.
%
\end{proof}

%
\section{Proofs for Section \ref{sec:grouped r2pf}}
\label{sec:proof for airpf}

\begin{proof}[Proof of Propostition \ref{prop:gr2ph induction step 1}]

We start the proof by introducing two more empirical approximations of $\widecheck{\pi}_{n}$ 
\begin{equation}\label{eq:def after within island rs}
\widecheck{\pi}^{M,k}_{n} \defeq \dfrac{\sum_{i=1}^{M}\widecheck{W}^{(k-1)M+i}_{n}\delta_{\widecheck{\zeta}_{n}^{(k-1)M+i}}}{\sum_{i=1}^{M}\widecheck{W}^{(k-1)M+i}_{n}},
\quad
\widecheck{\pi}^{N}_{n} \defeq \dfrac{\sum_{i=1}^{N}\widecheck{W}^{i}_{n}\delta_{\widecheck{\zeta}_{n}^{i}}}{\sum_{i=1}^{N}\widecheck{W}^{i}_{n}}
\quad 1\leq k \leq m,
\end{equation}
based on the samples $\widecheck{\vzeta}^{k}_{n} \defeq (\widecheck{\zeta}^{(k-1)M+1}_{n},\ldots,\widecheck{\zeta}_{n}^{kM})$ and $\widecheck{\vzeta}_{n} \defeq (\widecheck{\zeta}^{1}_{n},\ldots,\widecheck{\zeta}_{n}^{N})$, respectively. Note that $\widecheck{\pi}^{N}_{n}$ is the approximation of $\widehat{\pi}_{n}$ based on the entire sample after the first within island resampling while $\widecheck{\pi}^{M,k}_{n}$ is the PE specific approximation based on the sample contained in $k$th PE.

By Minkowski's inequality we have
\begin{align}\label{eq:threeway decompo}
\begin{split}
\lpnorm{r}{\widehat{\pi}^{M,k}_{n}(\varphi) - \widehat{\pi}_{n}(\varphi)} &\leq \lpnorm{r}{\widehat{\pi}^{M,k}_{n}(\varphi) - \widecheck{\pi}^{M,k}_{n}(\varphi)} \\&+ \lpnorm{r}{\widecheck{\pi}^{M,k}_{n}(\varphi)-\frac{\pi^{M,k}_{n}(g_{n}\varphi)}{\pi^{M,k}_{n}(g_{n})}} \\&+ \lpnorm{r}{\frac{\pi^{M,k}_{n}(g_{n}\varphi)}{\pi^{M,k}_{n}(g_{n})} -\frac{\pi_{n}(g_{n}\varphi)}{\pi_{n}(g_{n})}}
\end{split}
\end{align}

For the last term on the r.h.s.~we have by using \eqref{eq:gr2pf induction ass}, similarly as in equation \eqref{eq:middle term} in the proof of Proposition \ref{prop:induction step}, 
\begin{equation}\label{eq:example of crisan}
\lpnorm{r}{\pi^{M,k}_{n}(g_{n}\varphi)/\pi^{M,k}_{n}(g_{n}) -\pi_{n}(g_{n}\varphi)/\pi_{n}(g_{n})} \leq \frac{2\infnorm{g_{n}}C_{n,r}}{\pi_{n}(g_{n})\sqrt{M}}.
\end{equation}

The remainder of the proof hinges on expressing the first two terms in the r.h.s.~of \eqref{eq:threeway decompo} in terms of expectations such as the one in \eqref{eq:example of crisan} as well as expectations of the form $\E[|\widecheck{\pi}^{M,i}_{n}(\barphi_{n}^{i})|^{r}]$ where $\barphi_{n}^{i} \defeq \varphi - \pi^{M,i}_{n}(g_{n}\varphi)/\pi^{M,i}_{n}(g_{n})$ and $\E[|\widecheck{\pi}^{N}_{n}(\barphi_n)|^{r}]$ where $\barphi_{n} \defeq \varphi - \pi^{N}_{n}(g_{n}\varphi)/\pi^{N}_{n}(g_{n})$. 
The last two types of expectations can  be bounded by using the identities
\begin{align}\label{eq:indentities}
\begin{split}
\widecheck{\pi}^{M,i}_{n}(\barphi_{n}) &= 
\frac{{\pi}^{M,i}_{n}(g_{n})\Big(\widecheck{\pi}^{M,i}_{n}(\varphi)-\frac{\pi^{M,i}_{n}(g_{n}\varphi)}{\pi^{M,i}_{n}(g_{n})}\Big)}{\pi_{n}(g_{n})} + \frac{\widecheck{\pi}^{M,i}_{n}(\barphi_{n})}{\pi_{n}(g_{n})}\Big(\pi_{n}(g_{n})-{\pi}^{M,i}_{n}(g_{n})\Big) \\
\widecheck{\pi}^{N}_{n}(\barphi_{n}) &= 
\frac{{\pi}^{N}_{n}(g_{n})\widecheck{\pi}^{N}_{n}(\varphi)-\pi^{N}_{n}(g_{n}\varphi)}{\pi_{n}(g_{n})} + \frac{\widecheck{\pi}^{N}_{n}(\barphi_{n})}{\pi_{n}(g_{n})}\left(\pi_{n}(g_{n})-{\pi}^{N}_{n}(g_{n})\right)
\end{split}
\end{align}
together with Proposition \ref{prop:gmn convergence} and assumption \eqref{eq:gr2pf induction ass}.

Clearly, the second term on the r.h.s.~of \eqref{eq:threeway decompo} is readily of the desired form and therefore we only need to consider the first term. Again, by applying Minkowski's inequality, we have
\begin{align}\label{eq:first partial}
\begin{split}
\lp{r}{\widehat{\pi}^{M,k}_{n}(\varphi) - \widecheck{\pi}^{M,k}_{n}(\varphi)} 
&\leq 
\lp{r}{\widehat{\pi}^{M,k}_{n}(\varphi) - \widecheck{\pi}^{N}_{n}(\varphi)} + \lp{r}{\widecheck{\pi}^{N}_{n}(\barphi_{n})} \\&+
\lp{r}{\widecheck{\pi}^{M,k}_{n}(\barphi_{n}^{k})} \\ &+ \lp{r}{{\pi^{M,k}_{n}(g_{n}\varphi)}/{\pi^{M,k}_{n}(g_{n})}-{\pi^{N}_{n}(g_{n}\varphi)}/{\pi^{N}_{n}(g_{n})}} 
\end{split}
\end{align}
If we write $P_{i} \defeq \P(\widehat{\pi}_{n}^{M,k} = \widecheck{\pi}_{n}^{M,i}\,|\,\widecheck{\vzeta}_{n},\,{\vzeta}_{n})$, where $1\leq i \leq m$, then by the tower property of conditional expectations, Cauchy-Schwartz, Jensen's and Minkowski's inequalities we have
\begin{align}
 \lpnorm{r}{\widehat{\pi}^{M,k}_{n}(\varphi) - \widecheck{\pi}^{N}_{n}(\varphi)} 
&=\E\left[\sum_{i=1}^{m} P_{i}\left|\widecheck{\pi}^{M,i}_{n}(\varphi) - \widecheck{\pi}^{N}_{n}(\varphi)\right|^{r}\right]^{\frac{1}{r}} \nonumber\\
&\leq \sum_{i=1}^{m}\E\left[\left|\widecheck{\pi}^{M,i}_{n}(\varphi) - \widecheck{\pi}^{N}_{n}(\varphi)\right|^{2r}\right]^{\frac{1}{2r}} \nonumber\\
&\leq  m\lpnorm{2r}{\widecheck{\pi}^{N}_{n}(\barphi_{n})} + \sum_{i=1}^{m} \lpnorm{2r}{\widecheck{\pi}^{M,i}_{n}(\barphi_{n}^{i})} \nonumber\\&+ \sum_{i=1}^{m}\lpnorm{2r}{\frac{\pi^{M,i}_{n}(g_{n}\varphi)}{\pi^{M,i}_{n}(g_{n})}-\frac{\pi^{N}_{n}(g_{n}\varphi)}{\pi^{N}_{n}(g_{n})}} \label{eq:second partial}
\end{align}
In \eqref{eq:first partial} and \eqref{eq:second partial} all terms are of the desired form except for the terms $\lp{r}{{\pi^{M,i}_{n}(g_{n}\varphi)}/{\pi^{M,i}_{n}(g_{n})}-{\pi^{N}_{n}(g_{n}\varphi)}/{\pi^{N}_{n}(g_{n})}}$ where $1 \leq i \leq m$, but for these terms we have for all $1 \leq i \leq m$
\begin{align*}
\lpnorm{r}{\frac{\pi^{M,i}_{n}(g_{n}\varphi)}{\pi^{M,i}_{n}(g_{n})}-\frac{\pi^{N}_{n}(g_{n}\varphi)}{\pi^{N}_{n}(g_{n})}} &\leq \lpnorm{r}{\frac{\pi^{M,i}_{n}(g_{n}\varphi)}{\pi^{M,i}_{n}(g_{n})}-\frac{\pi_{n}(g_{n}\varphi)}{\pi_{n}(g_{n})}} \\&+ \lpnorm{r}{\frac{\pi^{N}_{n}(g_{n}\varphi)}{\pi^{N}_{n}(g_{n})}-\frac{\pi_{n}(g_{n}\varphi)}{\pi_{n}(g_{n})}} \leq \frac{4\infnorm{g_{n}}C_{n,r}}{\pi_{n}(g_{n})\sqrt{M}}
\end{align*}
where the final inequality follows similarly as in \eqref{eq:example of crisan} by using \eqref{eq:gr2pf induction ass}. By applying \eqref{eq:indentities} together with Proposition \ref{prop:gmn convergence} and \eqref{eq:gr2pf induction ass} then claim then follows with
\begin{equation*}
\widehat{C}_{n,r} = 2\left(\left(1+\frac{1}{\sqrt{m}}\right)mB_{2r} + \left(2+\frac{1}{\sqrt{m}}\right)B_{r}+6C_{n,r}+4mC_{n,2r}\right)\frac{\infnorm{g_{n}}}{\pi_{n}(g_{n})}.
\end{equation*}
\end{proof}

\begin{proof}[Proof of Propostion \ref{prop:induction step M fixed}]
By Minkowski's inequality we have
\begin{align}
\lpnorm{r}{\widehat{\pi}^{N}_{n}(\varphi)-\widehat{\pi}_{n}(\varphi)} &\leq 
\lpnorm{r}{\widehat{\pi}^{N}_{n}(\varphi)-\widecheck{\pi}^{N}_{n}(\varphi)} \nonumber\\&+
\lpnorm{r}{\widecheck{\pi}^{N}_{n}(\varphi) - \frac{\pi^{N}_{n}(g_{n}\varphi)}{\pi^{N}_{n}(g_{n})}} \nonumber\\&+
\lpnorm{r}{\frac{\pi^{N}_{n}(g_{n}\varphi)}{\pi^{N}_{n}(g_{n})} - \frac{\pi_{n}(g_{n}\varphi)}{\pi_{n}(g_{n})}},\label{eq:another threeway decompo}
\end{align}
For the third term on the r.h.s.~of \eqref{eq:another threeway decompo} we have similarly as in the proof of Proposition \ref{prop:gr2ph induction step 1}
\begin{equation}
\lp{r}{\pi^{N}_{n}(g_{n}\varphi)/\pi^{N}_{n}(g_{n}) - \pi_{n}(g_{n}\varphi)/\pi_{n}(g_{n})} \leq \frac{2\infnorm{g_{n}}C_{n,r}}{\pi_{n}(g_{n})}\sqrt{\frac{S}{m}}.
\end{equation}
It remains to consider the first two terms on the r.h.s.~of \eqref{eq:another threeway decompo}.

Let us write, analogously to Proposition \ref{prop:g augmented convergence},
$$\vg_{n}(\vx) \defeq \frac{1}{M}\sum_{j=1}^{M} g_{n}(x^i),$$
for all $\vx = (x^{1},\ldots,x^{M}) \in \X^{M}$. By the definition of the weights $(\widecheck{W}_{n}^{1},\ldots,\widecheck{W}_{n}^{N})$ in \eqref{eq:out weight} we see that for all $1 \leq i \leq m $, $\widecheck{W}^{(i-1)M+1}_{n}=\vg_{n}(\widecheck{\vzeta}^{i}_{n}) =\vg_{n}({\vzeta}^{i}_{n}) $ and $\sum_{i=1}^{N} \widecheck{W}^{i}_{n} = \sum_{i=1}^{N} g_{n}(\zeta^{i}_{n})$. This enables us to write 
\begin{align*}
\widecheck{\pi}^{N}_{n}(\varphi) 
= \dfrac{\frac{1}{m}\sum_{i=1}^{m}\widecheck{W}^{(i-1)M+1}_n\frac{1}{M}\sum_{j=1}^{M}\varphi(\widecheck{\zeta}^{(i-1)M+j}_{n})}{\frac{1}{N}\sum_{i=1}^{N}g_{n}(\zeta^{i}_{n})} = \frac{1}{{\pi^{N}_{n}(g_{n})}}\frac{1}{m}\sum_{i=1}^{m}\vg_{n}(\widecheck{\vzeta}^{i}_{n})\vphi(\widecheck{\vzeta}^{i}_{n}),
\end{align*}
yielding the identity
\begin{align*}
\widehat{\pi}^{N}_{n}(\varphi) - \widecheck{\pi}^{N}_{n}(\varphi)  &= 
\frac{1}{\pi_{n}(g_{n})}\left(\pi^{N}_{n}(g_{n})\widehat{\pi}^{N}_{n}(\varphi) - \frac{1}{m}\sum_{i=1}^{m}\vg_{n}(\widecheck{\vzeta}^{i}_{n})\vphi(\widecheck{\vzeta}^{i}_{n})\right)  \\&+  \frac{1}{\pi_{n}(g_{n})}\left(\widecheck{\pi}^{N}_{n}(\varphi) - \widehat{\pi}^{N}_{n}(\varphi) \right)\left(\pi^{N}_{n}(g_{n}) - \pi_{n}(g_{n})\right),
\end{align*}
which, together with Proposition \ref{prop:g augmented convergence} and \eqref{eq:group convergence precondition} yields
\begin{align*}
\lp{r}{\widehat{\pi}^{N}_{n}(\varphi)-\widecheck{\pi}^{N}_{n}(\varphi)} 
&\leq 2\left(B_{r}+C_{n,r}\right)\frac{\infnorm{g_{n}}}{\pi_{n}(g_{n})}\sqrt{\frac{S}{m}}.
\end{align*}
\noindent
For the second term on the r.h.s.~of \eqref{eq:another threeway decompo} we have, similarly as in the proof of Proposition \ref{prop:gr2ph induction step 1}, by using Proposition \ref{prop:g augmented convergence}
\begin{equation}
\lpnorm{r}{\widecheck{\pi}^{N}_{n}(\varphi) - \pi^{N}_{n}(g_{n}\varphi)/\pi^{N}_{n}(g_{n})} \leq \left(\frac{B_{r}}{\sqrt{M}} + C_{n,r}\right)\frac{2\infnorm{g_{n}}}{\pi_{n}(g_{n})}\sqrt{\frac{S}{m}}.
\end{equation}
The claim then follows with $\widehat{C}_{n,r} = ((2+2/\sqrt{M})B_{r}+6C_{n,r})\infnorm{g_n}/\pi_{n}(g_n)$.
\end{proof}

\end{appendices}

%
%

\end{document}